\documentclass{article}

    \PassOptionsToPackage{numbers, compress}{natbib}
 \usepackage[preprint]{neurips_2026}


\usepackage[utf8]{inputenc} 
\usepackage[T1]{fontenc}    
\usepackage{hyperref}       
\usepackage{url}            
\usepackage{booktabs}       
\usepackage{amsfonts}       
\usepackage{nicefrac}       
\usepackage{microtype}      
\usepackage{xcolor}         
\usepackage[utf8]{inputenc} 
\usepackage[T1]{fontenc}    
\usepackage{hyperref}       
\usepackage{url}            
\usepackage{booktabs}       
\usepackage{amsfonts}       
\usepackage{nicefrac}       
\usepackage{microtype}      
\usepackage{xcolor}         
\usepackage{amsthm}
\newcounter{theorem}

\newtheorem{corollary}[theorem]{Corollary}

\newtheorem{thm}[theorem]{Theorem}

\usepackage{amsmath}

\usepackage{subcaption}

\usepackage{float}

\usepackage{titletoc}
\usepackage{tocloft}
\usepackage{minitoc}
\usepackage{wrapfig}
\usepackage[percent]{overpic}

\title{The Bayesian Origin of the Probability Weighting Function in Human Representation of Probabilities}

\author{
  Xin Tong\textsuperscript{1}\thanks{Contact: \texttt{xtong@lst.uni-saarland.de}} \quad
  Thi Thu Uyen Hoang\textsuperscript{1}\quad  
  Xue-Xin Wei\textsuperscript{2}\quad  
  Michael Hahn\textsuperscript{1}\thanks{Contact: \texttt{mhahn@lst.uni-saarland.de}} \\
  \textsuperscript{1}Saarland University \quad
  \textsuperscript{2}The University of Texas at Austin
}

\begin{document}

\maketitle

\begin{abstract}
Humans systematically misrepresent probability in a stereotyped inverse-S pattern. It has been documented for decades, but its origin remains unexplained. We propose a Bayesian encoding–decoding account in which probabilities are represented by noisy internal signals and decoded by Bayes-risk minimization. For bounded probability stimuli, we show that distortion decomposes into boundary regression, likelihood repulsion, and prior attraction, yielding a key prediction: the classic inverse-S-shaped weighting pattern implies a U-shaped allocation of encoding precision with greater sensitivity near 0 and 1. Across judgment of relative frequency, lottery pricing, and risky choice, this U-shape is recovered from data without imposing any functional form on the encoding, and our framework outperforms deterministic weighting functions, bounded log-odds models, uniform-encoding Bayesian accounts, and matched efficient-coding models on held-out data. In a new dot probability estimation experiment with bimodal stimulus statistics, the recovered prior tracks the new distribution while the recovered encoding remains U-shaped. Together, these results identify the inverse-S-shaped probability weighting function as the joint product of a stable U-shaped encoding and a flexible prior, integrated by optimal Bayesian decoding.\footnote{Code: \url{https://anonymous.4open.science/r/probability-distortion/}}
\end{abstract}

\section{Introduction}
Human's representation of probability is distorted. A 10\% chance of an event may feel higher than 10\%, while a 90\% chance may feel lower than 90\%. Prospect Theory \citep{kai1979prospect} formalized this overweighting of small probabilities and underweighting of large ones as an inverse-S-shaped probability weighting function (Figure~\ref{fig:intro}A). Probability weighting function represents one of the most robust empirical findings in behavioral decision research \citep{ruggeri2020replicating}, and is central for explaining anomalies such as the Allais paradox \citep{allais1953comportement}.


A fundamental question remains: what is the origin of the probability weighting function? Classical parametric forms \citep{prelec1998probability,ZM2012} describe its shape but do not explain it. A growing body of work points to a more fundamental answer: that the distortion reflects how the brain encodes probability, rather than how decisions are computed from it. Yet the literature offers strikingly different proposals for what that representation looks like. Some accounts attribute the distortion to a log-odds transformation of noisy estimates \citep{zhang2020,khaw2021cognitive}. Others locate it in regression away from response boundaries, with encoding itself unbiased \citep{fennell2012uncertainty,bedi2025probability}. Still others derive it from efficient coding, in which the encoding precisely matches the prior distribution of probabilities in the environment \citep{frydman2023source}. Each proposal implies a different picture of how the brain represents probability, and each has been validated on a different task. Whether they are competing explanations, complementary mechanisms, or special cases of a common principle has not been resolved.

In this paper, we show that these three proposals are not competitors but special cases of a single Bayesian encoding–decoding framework on a bounded domain.
Our main theoretical result decomposes distortion into boundary regression, likelihood repulsion, and prior attraction. This decomposition yields a key prediction: the classic inverse-S weighting pattern implies a U-shaped allocation of encoding precision, with sensitivity peaking near 0 and 1. We confirm this prediction empirically across three behavioral paradigms(judgment of relative frequency, lottery pricing, and risky choice), recovering the U-shape from data without imposing any functional form on the encoding.
Across paradigms, the data favor our framework's predictions over those of any single previous account, including deterministic weighting, single-anchor log-odds models, uniform-encoding Bayesian models, and matched efficient-coding accounts.
Beyond the encoding, the framework's second ingredient, the prior, is tested in a new dot probability estimation experiment with bimodal stimulus statistics. The recovered prior tracks the stimulus distribution, while the recovered encoding remains U-shaped. Together, these findings characterize probability distortion as the joint product of a stable U-shaped encoding and a flexible prior that tracks the environment.

\begin{figure}[htbp]
    \begin{center}    
    \includegraphics[trim=0 245 0 0, clip, width=1.0\linewidth]{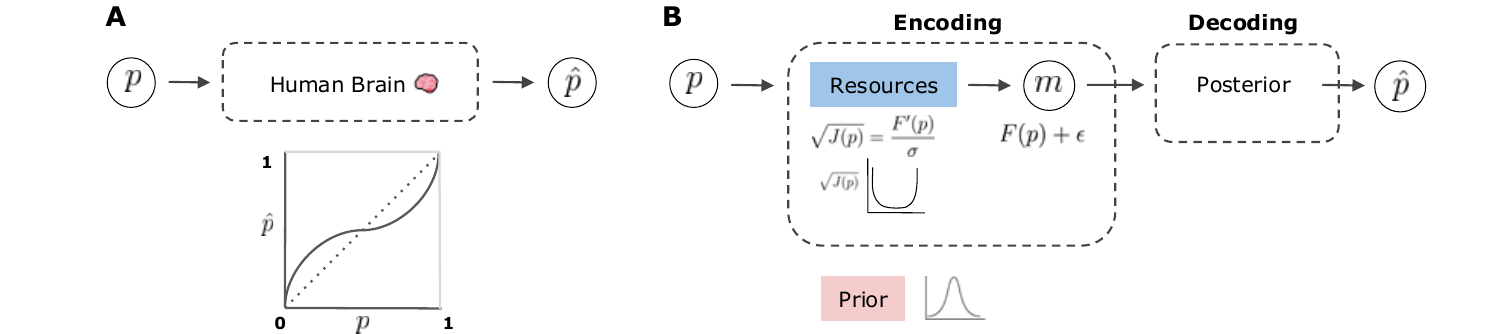}
    \end{center}
\caption{Distorted probability perception and our Bayesian model that explains such perceptual distortion.
(A) Human decisions systematically distort probabilities, producing the inverse S-shaped probability weighting function central to Prospect Theory.
(B) We propose that the inverse-S weighting pattern in (A) emerges from optimal decoding of a U-shaped allocation of encoding resources, recovered empirically from behavior. The Bayesian encoding–decoding framework: probabilities $p$ are encoded noisily, combined with a prior, and decoded into perceived probabilities $\hat{p}$.
}
\label{fig:intro}
\end{figure}

\section{Background and Relevant Work}
\label{sec:related-work}
Existing accounts of probability distortion fall into two broad families. The first formalizes the inverse-S pattern through a deterministic parametric function, $w(p)$. The second treats it as a consequence of noisy internal encoding. Within each family, models differ substantially in their commitments and in the data they have been validated against. We review both, with particular attention to the Bounded Log-Odds (BLO) model, which is the most quantitatively successful account to date and the principal benchmark in our empirical comparisons.

A body of work in economics and psychology models probability distortion with parametric forms of the probability weighting function. Classical examples include the inverse-S shaped forms in Prospect Theory \citep{tversky1992advances}, the Prelec function \citep{prelec1998probability}, and the Linear-in-Log-Odds (LILO) model \citep{ZM2012}. As an illustration, the LILO model assumes that under the log-odds transform, the perceived probability $\widehat{p}$ is linear in the true probability $p$. These models successfully capture behavioral regularities such as overweighting of small probabilities and underweighting of large ones, but they are primarily descriptive. The question of where this shape comes from has motivated a second line of work.

A second line of work focuses on the idea that probabilities are encoded imprecisely in the brain, with distortion arising as a systematic consequence of noisy internal representations. Within this view, accounts diverge on where the noise enters and what structure the encoding has. Regression-based accounts locate the bias at the response stage, attributing distortion to regression away from the scale boundaries while leaving encoding itself unbiased \citep{fennell2012uncertainty,bedi2025probability}. Log-odds based accounts locate the bias at the encoding stage, assuming that internal representations are approximately linear in log-odds space \citep{khaw2021cognitive,Khaw2022StateDependent}. Efficient-coding accounts further constrain the encoding to match the prior distribution of probabilities in the environment \citep{frydman2023source}, while related work has examined optimal inference under noisy encoding more broadly \citep{juechems2021optimal,enke2023cognitive}. These models have only been evaluated against limited behavioral data, and their differing predictions have not been compared on common ground, either theoretically or empirically.

Among encoding-based accounts, the Bounded Log-Odds (BLO) model \citep{zhang2020} stands out as the most quantitatively developed. It has been validated across multiple behavioral paradigms and consistently outperforms fixed probability weighting functions, making it the natural benchmark against which any new account must be measured. Because our empirical comparisons (Section~\ref{sec:empirical-validation}) treat BLO as the primary reference point, we review its mechanism in detail here. BLO assumes that probability $p$ is first mapped to log-odds and truncated to a bounded interval $[\Delta^-,\Delta^+]$ (Eq.~\ref{eq:blo-Lambda}). The clipped value is linearly mapped to an encoding $\Lambda(p)$ on an interval $[-\Psi,\Psi]$ (Eq.~\ref{eq:blo-second-map}), then combined with an anchor $\Lambda_0$ via a weight that depends on a free parameter $\kappa > 0$ (Eq.~\ref{eq:hat-Lambda}). This encoding is subject to Gaussian noise and decoded by applying the inverse log-odds function (Eq.~\ref{eq:blo-final-trasnform}).
\begin{align}
\lambda(p) &= \log \tfrac{p}{1-p}, \qquad
\Gamma(\lambda) = \min{(\max(\lambda, \Delta^-), \Delta^+}) \label{eq:blo-Lambda}\\
\Lambda(p) &= \tfrac{\Psi}{(\Delta_+ - \Delta_-)/2} \left(\Gamma(\lambda(p)) - \tfrac{\Delta_-+\Delta_+}{2}\right) \label{eq:blo-second-map}\\
\hat{\Lambda}_\omega(p) &= \omega_p \cdot \Lambda(p) + (1-\omega_p)\cdot\Lambda_0, \qquad \omega_p = \tfrac{1}{1+\kappa V(p)},\qquad V(p)\propto p(1-p), \label{eq:hat-Lambda}\\
\hat{\pi}(p) &= \lambda^{-1}(\hat{\Lambda}_\omega(p) + \epsilon_\lambda), \quad \epsilon_\lambda \sim \mathcal{N}(0,\sigma_\lambda^2). \label{eq:blo-final-trasnform}
\end{align}

\section{A general Bayesian framework for perceived probability}\label{sec:framework}

\subsection{Encoding–decoding framework}
Across the encoding-based accounts reviewed in Section~\ref{sec:related-work}, models differ in their choice of encoding function and prior, but share a common structure: probabilities are encoded into a noisy internal signal, and a perceived probability is decoded from this signal. We make this shared structure explicit by formalizing it as a single Bayesian encoding–decoding framework. Within this framework, the existing accounts, including BLO, regression-based, log-odds-based, and efficient-coding models, correspond to particular choices of two ingredients: the encoding function $F$ and the prior $P_{\text{prior}}$. This unification will allow us in Section~\ref{sec:framework-decomposition} to derive analytical predictions that distinguish the existing accounts and to identify a key empirical signature (a U-shaped allocation of encoding precision) that any successful account must satisfy.

Formally, during the encoding phase, the stimulus $p$ is encoded into a noisy internal signal $m$, modeled as a one-dimensional transformation with additive Gaussian noise: \begin{equation} m = F(p) + \epsilon, \quad\text{where } \epsilon \sim \mathcal{N}(0, \sigma^2) \end{equation} where $F : [0,1] \rightarrow \overline{\mathbb{R}}$ is strictly monotone increasing and smooth. We interpret $m$ as an abstract representation of the neural code for probability. The slope of $F$ at $p$ determines how distinguishable nearby probability values are after encoding: a steeper $F$ stretches a small interval around $p$ into a wider range of $m$ values, making them more separable despite the noise. This is captured by the Fisher Information $\mathcal{J}(p) = (F'(p))^2/\sigma^2$. Following \cite{hahn2024unifying}, we will refer to $\sqrt{\mathcal{J}(p)}$ as the \textbf{encoding resources} allocated to $p$.

During the decoding phase, given $m$ and a prior distribution $P_{\text{prior}}(p)$, Bayesian inference yields a posterior $P(p \mid m)$. We assume the decision-maker derives a point estimate $\hat{p}(m)$ as the posterior mean, which minimizes the Bayes risk for the mean-squared loss function. 

In this encoding-decoding framework, the model's behavior is therefore governed by two ingredients: the encoding function $F$ and the prior $P_{\text{prior}}$, both of which we will infer from behavioral data. Of these, the shape of the encoding resources $\sqrt{\mathcal{J}(p)}$ is the central quantity in our analysis, and we will show that the observed inverse-S weighting pattern places strong constraints on its shape.
In Section~\ref{sec:framework-decomposition} we make this correspondence explicit: each existing account is instantiated as a specific choice of $F$ and $P_{\text{prior}}$, and we derive the qualitatively different predictions they make. This sets up the cross-task empirical comparison in Section~\ref{sec:empirical-validation}.

\subsection{Predictions from the bias decomposition}\label{sec:framework-decomposition} 

We analyze the framework's predictions through the bias of the Bayesian estimate, i.e.,  the average deviation of the estimate $\hat{p}$ from the true probability across trials:
\begin{equation}
\operatorname{Bias}(p) := \mathbb{E}[\hat{p} \mid p] - p
\end{equation}

Prior work on the bias of Bayesian perceptual models has largely focused on unbounded or circular stimulus spaces, or on stimuli far from a boundary \citep{hahn2024unifying, wei2017lawful, prat2021bias, morais2018power, stocker2006noise}. Probability distortion, by contrast, takes place on a bounded interval $[0,1]$, and the boundaries are not incidental: they are central to longstanding accounts of probability weighting \citep{fennell2012uncertainty, bedi2025probability}.
We therefore derive a bias decomposition for the bounded interval, extending the unbounded and one-boundary results of \cite{hahn2024unifying} to the two-boundary setting, with the boundary regression term now capturing effects at both endpoints.(Theorem~\ref{thm:bayesian-bias}; proof in Appendix~\ref{app:proofs-bayesian})
\begin{thm}\label{thm:bayesian-bias}
There are functions $A_{1,\sigma}, A_{2,\sigma}, A_{3,\sigma} : [0,1] \rightarrow \mathbb{R}_+$ that are bounded, positive, and satisfy
    \begin{equation}\label{eq:boundary-effect-vanishes-distance}
         A_{\dots, \sigma}(p) \leq \phi\left(\frac{\sigma}{\min\{|F(p)-F(0)|, |F(p)-F(1)|\}}\right)
    \end{equation}
    for some nondecreasing function $\phi : \mathbb{R} \rightarrow \mathbb{R}$ with $\lim_{x\downarrow 0}\phi(x) = 0$,
    across all $p \in (0,1); \sigma > 0$.
Then, at any $p \in (0,1)$, the Bayesian model has the bias, as $\sigma \rightarrow 0$:
    \begin{align*}\label{eq:bayesian-bias-app}
  \operatorname{Bias}(p) =      \underbrace{A_{1,\sigma}(p) \cdot  \frac{\operatorname{sign}(0.5-p)}{\sqrt{\mathcal{J}(p)}}}_{\text{Regression from Boundary}} +        \underbrace{(1-A_{2,\sigma}(p)) \cdot \frac{\operatorname{d}}{\operatorname{d}p} \left(\frac{1}{\mathcal{J}(p)}\right)}_{\text{Likelihood Repulsion}} \\
+ 
        \underbrace{(1-A_{3,\sigma}(p)) \cdot \frac{1}{\mathcal{J}(p)} \cdot \frac{\operatorname{d}}{\operatorname{d}p} \left(\log P_{{prior}}(p)\right)}_{\text{Prior Attraction}} + \mathcal{O}(\sigma^4)
    \end{align*} 
\end{thm}

The decomposition isolates three distinct mechanisms of distortion that have appeared piecemeal across previous accounts but have not been separated within a single framework:  \textbf{boundary regression} pushes estimates inward, \textbf{likelihood repulsion} pushes them away from regions where encoding resources are concentrated, and \textbf{prior attraction} pulls them toward regions of high prior density. 
Under this decomposition, the regression-based, log-odds-based, and efficient-coding accounts are not competing explanations but special cases of a single structure, each built on a different subset of the three terms. Identifying them as components of a common decomposition allows us to derive predictions that distinguish them empirically, which we develop in the rest of this section and test in Section~\ref{sec:empirical-validation}.

The factors $A_{1,\sigma}, A_{2,\sigma}, A_{3,\sigma}$ all vanish when the noise $\sigma$ is small relative to the distance from $F(p)$ to $F(0)$ and $F(1)$ in the encoding space (Eq.~\ref{eq:boundary-effect-vanishes-distance}). Their role, however, differs across the three terms: far from the boundary, regression vanishes ($A_{1,\sigma} \to 0$), while repulsion and attraction reach full strength ($1 - A_{2,\sigma}, 1 - A_{3,\sigma} \to 1$). Near the boundary, the situation reverses: regression dominates while repulsion and attraction are suppressed.

 We examine each in turn, showing how it constrains the encoding $F$, the prior $P_{\text{prior}}$, or both, and how the resulting predictions can adjudicate among existing accounts.

\begin{figure}
  \centering
  \includegraphics[trim=0 12 0 0, clip, width=0.7\textwidth]{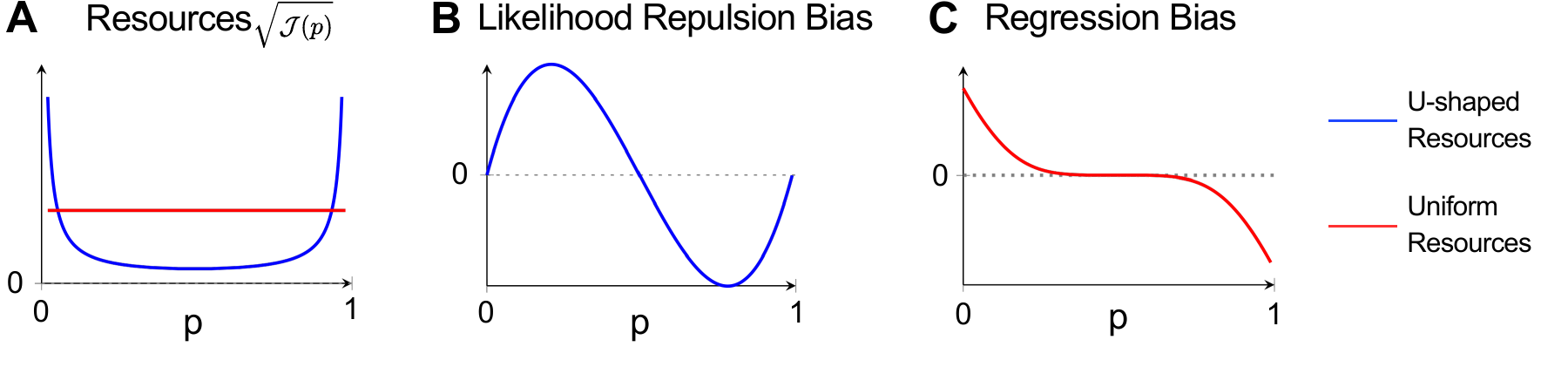}
  \caption{Impact of the encoding (resource allocation) on bias. (A) A U-shaped (blue) and uniform (red) encoding allocation resources $\sqrt{\mathcal{J}(p)}$. (B) The U-shaped encoding generates an S-shaped Likelihood Repulsion bias, $\propto \frac{\operatorname{d}}{\operatorname{d}p} (1/\mathcal{J}(p))$. (C) The uniform encoding generates a Regression bias  $\propto  A_{1,\sigma}(p) \cdot \frac{\operatorname{sign}(0.5-p)}{\sqrt{\mathcal{J}(p)}}$, maximized at 0,1. Both biases generate overestimation of small and underestimation of large probabilities, but with distinct shapes.}
  \label{fig:conceptual}
\end{figure}

\paragraph{Boundary regression.}
The first term in Theorem~\ref{thm:bayesian-bias} is the basis of regression-based accounts: \cite{fennell2012uncertainty} and \cite{bedi2025probability} attribute the inverse-S pattern primarily to boundary regression, treating encoding itself as uniform or otherwise unbiased. Under this view, distortion is a consequence of the bounded response scale rather than of how probabilities are encoded.

This account has a built-in constraint. Boundary regression is strongest near $0$ and $1$, whereas the inverse-S weighting function vanishes there \citep{prelec1998probability,tversky1992advances}. The two are compatible only if $\mathcal{J}(0), \mathcal{J}(1) \to \infty$, which is biologically implausible. Accounts that rely primarily on boundary regression therefore predict substantial distortion at the endpoints, in contradiction with the inverse-S pattern. We test this theoretical prediction empirically in Section~\ref{sec:empirical-validation}: uniform-encoding accounts should fit the data poorly near the endpoints.

\paragraph{Likelihood repulsion.}
The second term in Theorem~\ref{thm:bayesian-bias} pushes estimates away from stimuli with high Fisher Information, with sign and shape determined by the gradient of $1/\mathcal{J}(p)$.

Consider the first two terms in Theorem~\ref{thm:bayesian-bias}, boundary regression and likelihood repulsion, independent of the prior. As we showed in the previous paragraph, boundary regression alone cannot produce the inverse-S pattern, which vanishes at $p=0$ and $p=1$. When boundary regression is neutralized, the bias is dominated by the likelihood repulsion term $\propto \frac{d}{dp}(1/\mathcal{J}(p))$. This term is S-shaped (Figure~\ref{fig:conceptual}B) if and only if $\mathcal{J}(p)$ is U-shaped.

\begin{center}
\colorbox{gray!15}{%
  \parbox{0.98\linewidth}{\textbf{Prediction 1.} The classic inverse-S weighting pattern implies a U-shaped allocation of encoding resources, with $\sqrt{\mathcal{J}(p)}$ peaking near $0$ and $1$.}}
\end{center}

Prediction 1 is consistent with several existing models, although each model arrives at the U-shape resource allocation through a different assumption. Specifically, log-odds-based accounts \citep{zhang2020, khaw2021cognitive, Khaw2022StateDependent} commit to a fixed log-odds transform $F(p) = \log\frac{p}{1-p}$, which yields a U-shape with $\sqrt{\mathcal{J}(p)} \propto 1/p + 1/(1-p)$. U-shaped-FI accounts \citep{enke2023cognitive} allow more general encodings constrained only by U-shaped Fisher Information. Efficient-coding accounts \citep{frydman2023source} derive the U-shape by matching the encoding to a prior assumed itself to be U-shaped (so that $\sqrt{\mathcal{J}(p)} \propto P_{\text{prior}}(p)$, see Appendix~\ref{sec:frydman-jin-fi}). Within our framework, by contrast, U-shape is the necessary consequence of the inverse-S pattern, requiring no specific functional form on the encoding or prior. 
In Section~\ref{sec:empirical-validation}, we test both points: whether the U-shape can be recovered from data without imposing functional-form constraints, and whether the recovered shapes match the commitments of these alternative accounts.

As shown in Appendix~\ref{app:proofs-blo}, BLO in the limit of untruncated log-odds is equivalent at the level of bias to a Bayesian model with log-odds encoding and a particular unimodal prior, although it uses a sub-optimal decoder. 
This connection allows the empirical comparison in Section~\ref{sec:jrf} to assess the joint contribution of optimal decoding and flexible prior.

\paragraph{Prior attraction.}
The third term in Theorem~\ref{thm:bayesian-bias} pulls estimates toward regions of high prior density, with strength scaled by $1/\mathcal{J}(p)$. Unlike the previous two terms, prior attraction depends on $P_{\text{prior}}$ rather than on the encoding alone, and so its shape can change as $P_{\text{prior}}$ changes—for instance, as the decision-maker is exposed to a new stimulus distribution.

This term is shared, in some form, by all Bayesian accounts \citep{fennell2012uncertainty, bedi2025probability, frydman2023source}, but the accounts differ on how flexibly $P_{\text{prior}}$ is allowed to vary. BLO \citep{zhang2020} approximates prior attraction with a single anchor $\Lambda_0$ and does not accommodate general priors. Efficient-coding accounts \citep{frydman2023source} tie $P_{\text{prior}}$ to the encoding through $\sqrt{\mathcal{J}(p)} \propto P_{\text{prior}}(p)$; under this constraint, the posterior mean is biased \emph{away} from the prior mode rather than toward it \citep{wei2015bayesian}, so prior attraction is effectively absorbed into the encoding. Bayesian accounts with free priors \citep{fennell2012uncertainty, bedi2025probability} permit attraction toward arbitrary high-density regions.

These commitments make distinct predictions when the stimulus distribution changes. A flexible Bayesian prior should track the new distribution, producing a corresponding shift in the bias; a single anchor cannot reproduce such shifts; and a prior tied to the encoding can only adapt insofar as the encoding itself adapts, which is generally assumed to occur on slower timescales \citep{fritsche2020bayesian}.

\begin{center}
\colorbox{gray!15}{%
  \parbox{0.98\linewidth}{\textbf{Prediction 2.} Stimulus distribution changes  produce corresponding changes in the bias, mediated by the prior attraction term. Accounts with flexible Bayesian priors predict the largest such effects; accounts with single-anchor priors or prior–encoding matches predict weaker or no adaptation.}}
\end{center}

We test this prediction in Section~\ref{sec:empirical-validation-adaptation} using a dot-estimation experiment with bimodal stimulus statistics: participants are exposed to a distribution with two modes, and we ask whether their bias adapts in a manner consistent with each account's prior commitment.

\section{Empirical Validation of the Bayesian Framework}\label{sec:empirical-validation}

We test the framework's predictions across three behavioral paradigms. In each case, we instantiate the framework with different commitments about the encoding $F$ and prior $P_{\text{prior}}$, corresponding to the accounts surveyed in Section~\ref{sec:related-work}. The judgment of relative frequency (JRF) task tests Prediction 1: whether the inverse-S bias requires non-uniform encoding. Lottery pricing and risky choice test whether the same mechanism extends to economic decisions under risk. Finally, we collected data in a dot-estimation experiment with bimodal stimulus statistics to test Prediction 2: whether bias adapts to changes in the stimulus distribution.

\subsection{Validating Prediction 1 on Judgment of Relative Frequency (JRF) data}\label{sec:jrf}

We analyze data from \cite{ZM2012} and \cite{zhang2020}, where subjects ($N=86$) were asked to judge the percentage of dots of a target color in arrays of black and white dots on a gray background.\footnote{A complete description of the tasks and datasets used in this paper is available in Appendix~\ref{app:details-task-datasets}.}

We compare the parametric baseline \textbf{BLO} with three Bayesian variants, each instantiating a different commitment about the encoding while sharing a flexible non-parametric prior. \textbf{FreeP+UniformE} assumes uniform encoding $F(p)=p$, instantiating regression-based accounts \citep{fennell2012uncertainty,bedi2025probability}. \textbf{FreeP+BoundedLOE} uses a parametric bounded log-odds encoding, instantiating log-odds-based accounts \citep{khaw2021cognitive,Khaw2022StateDependent}. \textbf{FreeP+FreeE} places no functional form on the encoding, allowing the data to determine its shape. We additionally test the efficient-coding account that prior matches encoding via \textbf{FreeP+PriorMatchedE} \citep{frydman2023source}.

Parameters are optimized for each subject individually by maximizing trial-by-trial likelihood; Bayesian models are fit using the method of \cite{hahn2024unifying, Hahn2025Identifiability}, which represents non-parametric components on a 200-point grid over $[0,1]$ with smoothness regularization. We compare models by two metrics: summed $\Delta$AICc and summed held-out $\Delta$NLL (Appendix~\ref{app:details-model-fit-metrics}). We present the $\Delta$AICc results in the main paper. $\Delta$AICc penalizes parameter counts to ensure fair comparison across model variants. For freely fitted components, the smoothness regularization reduces effective degrees of freedom, but the effective parameter count is nontrivial to compute exactly. We therefore conservatively count all 200 grid points as free parameters (e.g., 403 for FreeE+FreeP), which provides an upper bound and overpenalizes variants with freely fitted components. Full method details and parameter counts are in Appendix~\ref{app:fitting-models}.

Model fit shows clear differences across variants (Figure~\ref{fig:jrf-bias-decomposition-resources-daicc}D). FreeP+FreeE achieves the best summed $\Delta$AICc, substantially outperforming both the parametric baseline BLO and other Bayesian variants with constrained encoding (UniformE, BoundedLOE, PriorMatchedE). The advantage of FreeE over UniformE in particular indicates that uniform encoding is insufficient: data require non-uniform encoding resources.

To understand these differences, we examine how each variant produces its bias (Figure~\ref{fig:jrf-bias-decomposition-resources-daicc}B, C). FreeP+UniformE produces a bias driven almost entirely by boundary regression, with zero likelihood repulsion (Panel C, bottom) — by construction, since uniform encoding has constant resources. The resulting fit is poor near the endpoints (Panel B, bottom), recovering the constraint from Section~\ref{sec:framework-decomposition} that uniform encoding cannot reproduce inverse-S weighting.
FreeP+FreeE shows a different decomposition (top of Panel B and C): the bias is dominated by S-shaped likelihood repulsion, with only minor boundary regression. Moreover, it recovers a U-shaped encoding with sharp peaks near 0 and 1 (Panel A, top). This pattern emerges without any functional-form assumption on $F$, aligning exactly with Prediction 1. FreeP+BoundedLOE, which constrains the encoding to a parametric U-shape, achieves comparable fit, suggesting that what the data require is a U-shaped resource allocation.

The previous analyses concern the encoding. The framework's second ingredient, the prior, leads to a separate signature in response variability. Across stimulus values, response variability shows a bimodal pattern with a dip near 0.5 (Appendix Figure~\ref{fig:jrf-bias-var}). Among the variants tested, this structure is captured only by FreeP, whose recovered prior develops a sharp peak around 0.5 because the local high prior density mathematically suppresses variance there. BLO produces an inverted-U variability profile by construction (Theorem~\ref{thm:blo-var} in Appendix~\ref{app:proofs-blo}) and cannot reproduce the bimodal dip. Its single-anchor prior provides no mechanism for fine-grained prior structure to shape variability.

The efficient-coding hypothesis predicts that the prior matches the encoding. Enforcing this constraint substantially reduces fit, and the recovered prior and encoding shapes differ in the freely fitted models (Appendix Figure~\ref{fig:jrf-prior-comparison} and ~\ref{fig:jrf-fi-comparison}), suggesting that the data require prior and encoding to be specified separately rather than tied together.

Together, these results show that probability distortion in JRF arises from optimal decoding of a U-shaped encoding combined with a flexible, encoding-independent prior. The recovered structure agrees with Prediction 1 even when no functional form is imposed. We next ask whether the same mechanism extends beyond perceptual judgment to economic decision under risk.

\begin{figure}
    \begin{center}
    \includegraphics[trim=0 0 0 0, clip, width=1.0\linewidth]{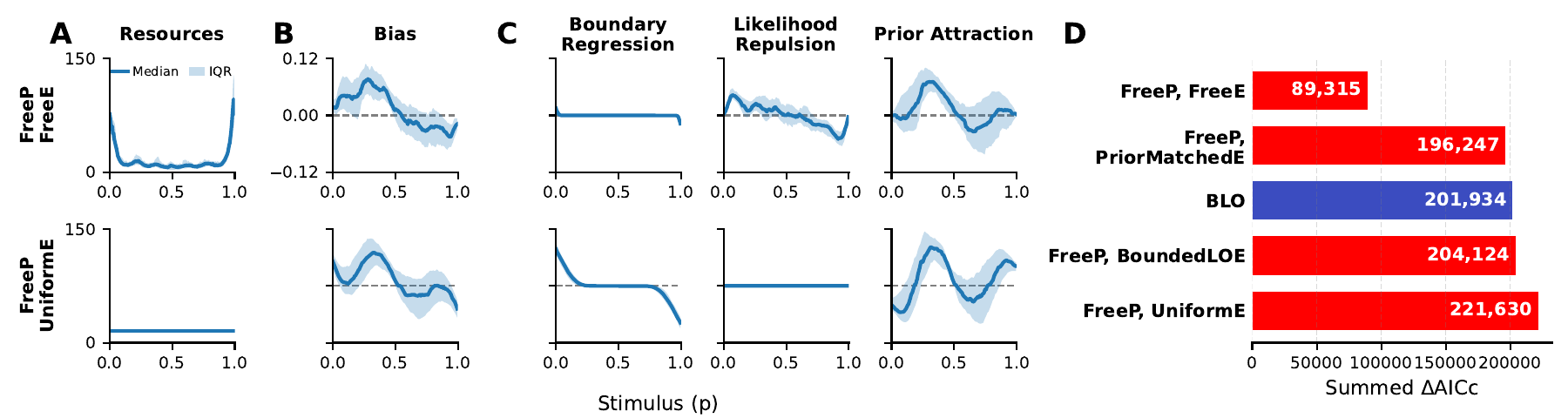}
    \end{center}
  \caption{Resources (A), bias (B) and bias decomposition (C) of FreeP FreeE and FreeP UniformE models for JRF data. IQR denotes interquartile range across subjects. In the best-fitting FreeP FreeE model, boundary regression plays only a small role, whereas Likelihood Repulsion consistently produces a bias away from 0 and 1. The method to compute empirical bias decomposition is described in Appendix~\ref{app:bias-decomposition-empirical}; results for additional model variants are in Appendix~\ref{app:cross-task-bias-decomposition}. (D) $\Delta$AICc summed across subjects (lower is better; non-parametric variants reported with upper-bound parameter count).}
  \label{fig:jrf-bias-decomposition-resources-daicc}
\end{figure}

\subsection{Testing the Generality on Decision-Making Under Risk}\label{sec:dmr}
Decision-making under risk has been the primary domain for probability weighting research. We test whether the U-shape recovery extends here using two paradigms: lottery pricing and risky choice. We analyze two datasets in which subjects evaluated two-outcome gambles $(x_1, p; x_2, 1-p)$ with real-valued payoffs, instantiated within Cumulative Prospect Theory (CPT) \citep{tversky1992advances}: 
\begin{equation}\label{eq:cpt}
\text{Utility} := v(x_1) \cdot w(p) + v(x_2) \cdot (1 - w(p)), 
\quad v(x) = |x|^\alpha,
\end{equation}
with a two-part value function for mixed gambles and separate weighting functions $w^+(p), w^-(p)$ for gains and losses (Appendix Eq.~\ref{eq:full-cpt}). Models differ only in the form of $w(p)$. Parametric accounts (BLO, Prelec, LILO) treat $w(p)$ as a deterministic transformation; in the Bayesian framework, $w(p)$ arises from stochastic encoding and optimal decoding, with FreeP+\{UniformE, BoundedLOE, FreeE\} instantiating the same family of variants tested in Section~\ref{sec:jrf}. Both tasks test \textbf{Prediction 1}: whether U-shaped encoding resources are recovered in economic decisions, as in perceptual judgment.

\paragraph{Pricing.}
We first analyze the pricing data from \cite{zhang2020}, in which subjects ($N=75$) chose between a two-outcome monetary gamble and a sure amount, with the sequence of choices adaptively converging to the Certainty Equivalent (CE) of the gamble. Because each CE reflects both the utility function and probability weighting, fitting CE responses directly does not isolate the contribution of $w(p)$. We therefore include an intermediate step: each reported CE is first inverted through the CPT utility function to obtain an implied probability weight, and the Bayesian encoding parameters are estimated from these trial-level weights; the utility exponent $\alpha$ and CE noise variance $\sigma^2_{\text{CE}}$ are then re-optimized to maximize the likelihood of the observed CE data (Appendix~\ref{app:details-bayesian-dmrpricing-models} for details). Final comparison across all models is based on the likelihood of the original CE responses.

The freely fitted encoding (FreeP+FreeE) recovers a U-shaped allocation of resources (Figure~\ref{fig:dmr-FI}A), confirming Prediction~1 in economic valuation. Model fit supports this conclusion: FreeP+FreeE substantially outperforms all alternatives (Figure~\ref{fig:dmr-FI}B), including the parametric baseline BLO and other Bayesian variants with constrained encoding. We further test the prediction in a choice task below, where responses depend on $w(p)$ directly rather than through the utility function.

\paragraph{Choice.}
To further test the generality of the framework, we analyzed a subset of the Choice Prediction Competition 2015 dataset \citep{erev2017anomalies}, restricted to two-outcome gambles with fully described probabilities and uncorrelated payoffs (Amb=0, Corr=0, LotNum=1; 187,150 trials across 153 subjects). Following CPT, probabilities of gains and losses are modeled with separate weighting functions $w^+(p)$ and $w^-(p)$. The Bayesian model implements these by fitting separate priors for gains and losses while sharing the same encoding, with $w^\pm(p)$ computed as the expectation of $\hat{p}$ under the encoding distribution. Choice probabilities are modeled with a logit rule on subjective utility differences with a temperature parameter $\tau$. As parametric benchmarks, we include the LILO function (Eq.~\ref{eq:lilo}) and the Prelec function \citep{prelec1998probability}; BLO is not applicable here, as its original specification does not extend to choice tasks.

The freely fitted encoding again recovers a U-shape (Figure~\ref{fig:dmr-FI}C). FreeP+FreeE outperforms all alternatives on fit (Figure~\ref{fig:dmr-FI}D), including the parametric baselines Prelec and LILO. Within the Bayesian framework, FreeE outperforms UniformE: constraining the encoding to be uniform produces a substantial drop in fit, providing direct evidence for Prediction~1 in the choice setting.

Across both pricing and choice paradigms, the recovered encoding is U-shaped, agreeing with Prediction~1. We next turn to the second ingredient of the framework, the prior, and test Prediction~2 in an adaptation experiment.

\begin{figure} 
\begin{center}
\includegraphics[trim=0 25 30 0, clip, width=1.0\linewidth]{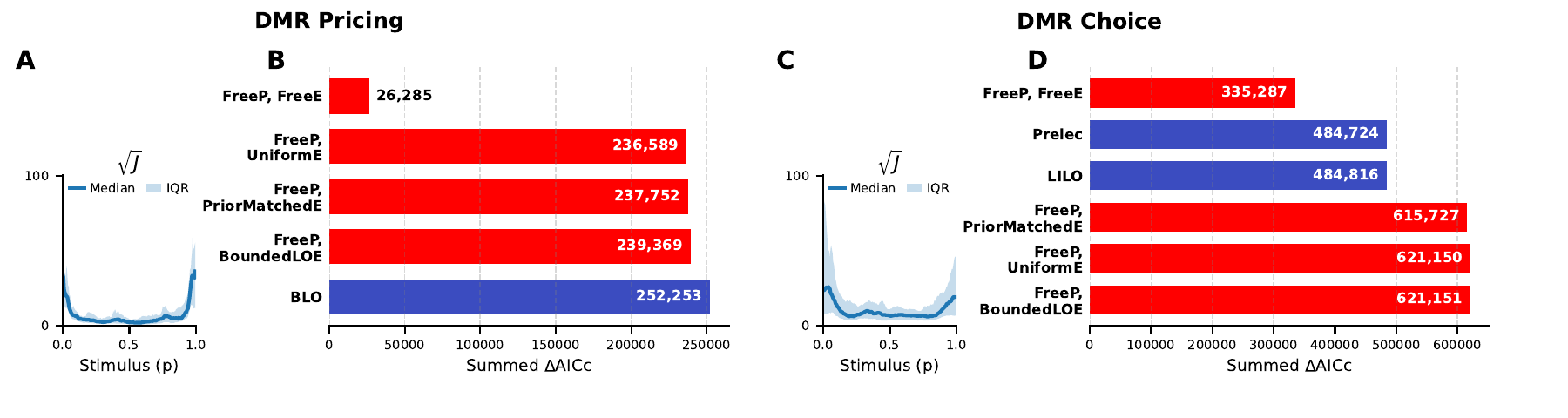}
\end{center}
\caption{
Results from the pricing (A, B) and choice tasks (C, D). (A, C) Bayesian models with non-uniform encoding recover U-shaped resources with peaks near 0 and 1, consistent with Prediction 1. (B, D) $\Delta$AICc summed across subjects (lower is better; non-parametric variants reported with upper-bound parameter count). In both paradigms, FreeP+FreeE achieves the best fit.
}
\label{fig:dmr-FI}\label{fig:choice}
\end{figure}

\subsection{Testing Prediction 2 via Adaptation to Stimulus Statistics}\label{sec:empirical-validation-adaptation}

We now examine the prior component of the framework. Section~\ref{sec:framework-decomposition} showed that the prior enters bias through a prior-attraction term that pulls estimates toward regions of high prior density. If the prior adapts to the stimulus distribution, then a bimodal distribution should induce a bimodal prior, producing attraction toward each of the two modes and a corresponding sign reversal of bias on either side of each mode (\textbf{Prediction 2}). The accounts reviewed in Section~\ref{sec:related-work} make distinct predictions here. The single-anchor account, BLO, can produce only one attraction point and cannot reproduce a two-mode pattern. Efficient-coding accounts with matched prior and encoding \citep{frydman2023source} predict the opposite sign of bias near the modes: when the encoding matches the prior, the posterior mean is biased \emph{away} from the prior mode rather than toward it \citep{wei2015bayesian}, yielding repulsion rather than attraction.

We tested these predictions in a new psychophysical experiment using the same dot-counting task as in Section~\ref{sec:jrf}, but with a bimodal stimulus distribution: dot proportions were sampled from a mixture of two Gaussians with peaks equidistant from $0.5$. These peaks were chosen separately for each subject.\footnote{Full experimental details are in Appendix~\ref{app:experiment-adaptation}. The experimental design was approved by the Ethical Review Board of the Faculty of Mathematics and Computer Science,
Saarland University.} Twenty-six subjects completed the task. Within the Bayesian framework, we extended the set of priors to include a mixture-of-two-Gaussians prior (BimodalP), and added a matched prior-encoding model (FreeP+PriorMatchedE) to instantiate the efficient-coding account.

\begin{figure}
\begin{center}
\includegraphics[width=1.0\linewidth]{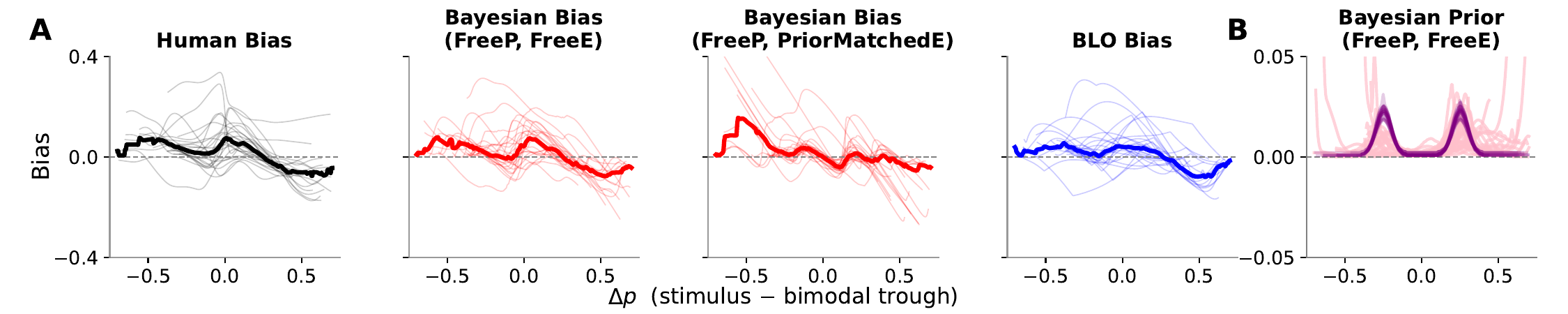}
\end{center}
\caption{
The perceived probability of human observers adapts to bimodal stimulus statistics in a dot-counting task. (A) Median bias across subjects, aligned so the trough between stimulus peaks is at $0$ ($\Delta p = $ stimulus $p$ minus trough). Human bias shows two cross-over points centered on the stimulus modes. FreeP+FreeE reproduces this pattern; FreeP+PriorMatchedE predicts the opposite sign of bias near the modes (repulsion rather than attraction); BLO produces a single S-shape and cannot reproduce the two-mode structure. (B) Priors recovered by FreeP+FreeE (pink: subjects; purple: ground truth stimulus distribution).
}
\label{fig:bimodal-bias-dcv}
\end{figure}

Human bias under the bimodal distribution shows two cross-over points centered on the stimulus modes, with attraction toward each mode (Figure~\ref{fig:bimodal-bias-dcv}A, leftmost). FreeP+FreeE reproduces this pattern (second panel). The prior recovered by FreeP+FreeE is bimodal with peaks aligned to the stimulus modes, closely matching the ground-truth distribution (Figure~\ref{fig:bimodal-bias-dcv}B). This direct recovery of the prior from data agrees with Prediction 2. Notably, the recovered encoding in this experiment remains U-shaped (see Appendix Figure~\ref{fig:jrf-FI-comparison-bimodal}), matching the shape recovered in Sections~\ref{sec:jrf}–\ref{sec:dmr} despite the different stimulus distribution. The two components dissociate empirically: the prior adapts while the encoding does not.

In contrast, BLO with single-anchor prior produces a single attraction point and cannot generate the two-mode structure (Figure~\ref{fig:bimodal-bias-dcv}A, rightmost). FreeP+PriorMatchedE matches the encoding to a bimodal prior, concentrating resources at the modes; through the likelihood-repulsion term, this biases estimates \emph{away} from the modes. The predicted bias near each mode therefore has the opposite sign from the human data (Figure~\ref{fig:bimodal-bias-dcv}A, third panel). Quantitative fit comparisons across all variants are reported in Appendix Figure~\ref{fig:jrf-daicc-full-both-analytical-categorical-bimodal}, with FreeP+FreeE outperforming other accounts.

These results confirm Prediction 2 and complete the empirical picture. Across this section the encoding component was recovered as U-shaped, and the prior component is recovered as flexible and tracking the stimulus distribution. The two components of the framework are supported by separate empirical tasks. Probability distortion thus emerges from optimal decoding of a U-shaped encoding combined with a flexible Bayesian prior.

\section{Discussion and Conclusion}
\label{sec:dicussion-and-conclusion}

We have developed and empirically validated a Bayesian encoding–decoding framework of probability distortion. In this framework, the U-shape encoding is a derived prediction rather than a built-in assumption. The bias decomposition derived in this framework unifies existing accounts on a common landscape.
A natural question raised by these results concerns the relationship between prior and encoding. Efficient-coding accounts predict that they should match, but our analyses find them to be empirically distinct in both perceptual and economic tasks. One possibility, consistent with related findings in perception \citep{fritsche2020bayesian}, is that priors and encodings adapt on different timescales. The bimodal adaptation experiment is consistent with this view, since the prior shifted within a single session while the recovered encoding remained U-shaped throughout.
More broadly, the work connects probability distortion to a growing line of work on imprecise but structured mental representations \citep[e.g.][]{woodford2012prospect, khaw2021cognitive, frydman2022efficient, barretto2023individual, zhang2020, frydman2023source}. By showing that a single encoding–decoding mechanism accounts for distortion across both perceptual judgment and economic decision under risk, the framework provides a candidate for what is shared across these domains, and a tool, agnostic recovery of encoding and prior from behavior, for probing it elsewhere.

\bibliography{refs}
\bibliographystyle{abbrvnat}

\newpage
\appendix
\section*{Appendix}
\addcontentsline{toc}{section}{Appendix}

\renewcommand{\contentsname}{Appendix Contents}
\startcontents[appendix]
\printcontents[appendix]{}{1}{}

\section{FAQs}

\begin{enumerate}
\item \textit{What is the contribution beyond \cite{hahn2024unifying}, \cite{Hahn2025Identifiability} or existing Bayesian accounts?}

\cite{hahn2024unifying}, \cite{Hahn2025Identifiability} provide general-purpose tools for Bayesian encoding–decoding models on unbounded or circular domains. We build on this infrastructure to address a specific open question in behavioral science: where does the probability weighting function come from?

Our contribution has three parts: First, probability lives on [0,1] with two hard boundaries. In Theorem~\ref{thm:bayesian-bias}, we derive a bias decomposition that accounts for both boundaries simultaneously, with boundary-modulated coefficients governing how regression, repulsion, and attraction trade off near each endpoint. Second, we establish that the inverse-S weighting pattern arises from a U-shaped allocation of encoding precision, recovered nonparametrically across perceptual estimation, lottery pricing, and risky choice, and stable under changes in stimulus distribution. Third, we show that regression-based, log-odds, efficient-coding, and BLO accounts emerge as special cases of a single framework, and the decomposition yields testable predictions that distinguish them.

\item \textit{FreeP+FreeE is more flexible than other models. Isn't it unsurprising that it outperforms other models?}

Flexibility is a fair concern, but the relevant question is not whether FreeP+FreeE fits better, but what shape it recovers and whether that shape generalizes.

First, the comparison uses two metrics that control for capacity. Held-out negative log-likelihood penalizes overfitting directly, and a conservative AICc that counts every grid point as a free parameter (e.g., 403 for JRF) penalizes nominal flexibility. FreeP+FreeE wins on both. Second, and more importantly, the recovered encoding is itself the finding. A flexible model can in principle recover any shape, yet across perceptual estimation, lottery pricing, and risky choice it consistently recovers a U-shape, and the same shape persists when the stimulus distribution becomes bimodal. The convergence on a specific encoding geometry, not the fit advantage, is what the framework predicts and what parametric alternatives such as uniform encoding cannot reproduce.

\item \textit{The adaptation experiment ($N$=26) is substantially smaller than the JRF ($N$=86) and pricing ($N$=75) datasets. Are the conclusions robust at this sample size?}

To verify that conclusions are not driven by individual subjects, we performed a subject-level bootstrap analysis (10,000 iterations, resampling with replacement) on held-out NLL. FreeP+FreeE outperformed the efficient-coding model (PriorMatchedE) in 98.1\% of bootstrap samples (95\% CI of NLL difference: $[233,7551]$) and outperformed BLO in 100\% of samples (95\% CI: $[7518,13399]$). See Appendix~\ref{app:adaptation-bootstrap} for full results.



\item \textit{Is the AICc penalty fair, given that regularization reduces the effective parameter count below 200?}

Strictly speaking, no. The AICc we report counts every grid point as a free parameter (200 for the encoding, 200 for the prior, plus noise parameters), which overstates the true degrees of freedom: smoothness regularization couples neighboring grid points, so the effective count is substantially lower. Computing the exact effective degrees of freedom under regularization is nontrivial, so we use the grid-size bound for transparency.

A tighter penalty would favor FreeP+FreeE further, and the ranking under the current bound already places it first.

\item \textit{AICc is an in-sample criterion. What about out-of-sample evaluation?}

We also report held-out NLL across all comparisons: see Appendix Figure~\ref{fig:jrf-dnll-full-both-analytical-categorical} (JRF), Appendix Figure~\ref{fig:dmr-dnll-full-both-analytical-categorical} (DMR pricing), Appendix Figure~\ref{fig:dmr-choice-dnll-full-both-analytical-categorical} (DMR choice), and Appendix Figure~\ref{fig:jrf-dnll-full-both-analytical-categorical-bimodal} (adaptation). The two metrics agree: FreeP+FreeE achieves the best held-out fit as well.




\item \textit{Existing log-odds and efficient-coding accounts already commit to a $U$-shaped FI; what is new here?}

These accounts posit a U-shape as a theoretical assumption: log-odds accounts derive it from a fixed $\log\frac{p}{1-p}$ transform, and efficient-coding accounts derive it from matching the encoding to a U-shaped prior. We recover it as an empirical output, nonparametrically and without any functional-form assumption on $F$.

Placing these accounts as special cases of a common framework lets us ask which commitment the data actually require, and the recovered U-shape differs from each in informative ways. Pure log-odds encoding implies $\sqrt{\mathcal{J}(p)} \to \infty$ at the boundaries, which is biologically implausible; BLO addresses this by truncating log-odds, but the truncation introduces discontinuities in the encoding. Efficient coding ties the U-shape to the prior, whereas the recovered encoding remains U-shaped even when the prior becomes bimodal under adaptation. By contrast, the recovered shape has finite endpoint precision and remains smooth throughout. The data favor a free encoding combined with a free prior over any of the constrained alternatives as well.

\item \textit{Why would the brain use a U-shaped encoding?}


In Appendix~\ref{sec:app:log-odds}, we describe a possible motivation of U-shaped encodings grounded in the well-documented encoding of scalar quantities following Weber's law.



\end{enumerate}

\section{Theoretical Derivations and Proofs}\label{app:whole-theory-section-app}

\subsection{Theory on Fisher Information (FI), Bias and Mean Square Error}\label{app:theory}

\begin{figure}[htbp]
\begin{center}
\includegraphics[width=0.8\linewidth]{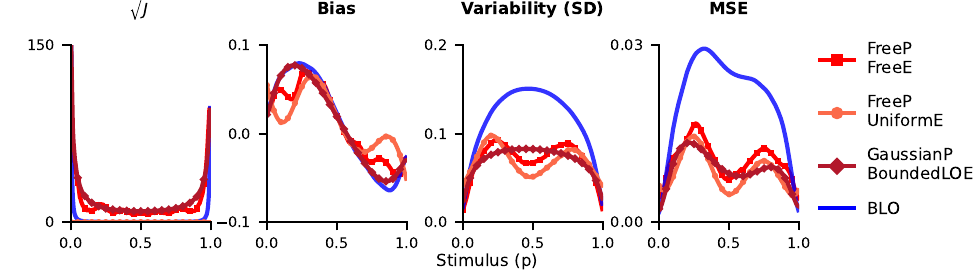}
\end{center}
\caption{FI, Bias, Variance, MSE of the three models: Bayesian, BLO, Bayesian with uniform encoding, for parameters fitted in Section~\ref{sec:jrf}. BLO has an FI peaking at 0 and 1, as does the Bayesian model with log-odds encoding. All model variants predict a positive bias for small $p$, and a negative bias for large $p$, in line with the classical probability weighting function (Figure~\ref{fig:intro}A). BLO and models with Gaussian priors predict a simple inverse U-shaped variability; other Bayesian priors can accommodate other variabilities. } 
\label{fig:jrf-FI-bias-var-mse}
\end{figure}

\subsubsection{Properties of Bayesian Model}\label{app:proofs-bayesian}
\begin{thm}[Repeated from Theorem~\ref{thm:bayesian-bias}]
There are functions $A_{1,\sigma}, A_{2,\sigma}, A_{3,\sigma} : [0,1] \rightarrow \mathbb{R}$ that are bounded, positive, and satisfy
    \begin{equation}\label{eq:A-bound-app}
         A_{\dots, \sigma}(p) \leq \phi\left(\frac{\sigma}{\min\{|F(p)-F(0)|, |F(p)-F(1)|\}}\right)
    \end{equation}
    for some nondecreasing function $\phi : \mathbb{R} \rightarrow \mathbb{R}$ with $\lim_{x\downarrow 0}\phi(x) = 0$,
    across $p \in (0,1); \sigma > 0$.
Then, at any $p \in (0,1)$, the Bayesian model has the bias, as $\sigma \rightarrow 0$: 
    \begin{align*}\label{eq:bayesian-bias-app}
\operatorname{Bias}(p) =      \underbrace{A_{1,\sigma}(p)  \cdot \frac{\operatorname{sign}(0.5-p)}{\sqrt{\mathcal{J}(p)}}}_{\text{Regression from Boundary}} +        \underbrace{(1-A_{2,\sigma}(p)) \cdot \frac{\operatorname{d}}{\operatorname{d}p} \left(\frac{1}{\mathcal{J}(p)}\right)}_{\text{Likelihood Repulsion}} \\
+ 
        \underbrace{(1-A_{3,\sigma}(p)) \frac{1}{\mathcal{J}(p)} \cdot \frac{\operatorname{d}}{\operatorname{d}p} \left(\log P_{{prior}}(p)\right)}_{\text{Prior Attraction}} + \mathcal{O}(\sigma^4)
    \end{align*}

    
\end{thm}

\begin{proof}
We first consider the case where $F(0) = -\infty$, $F(1) = \infty$, in which case (\ref{eq:A-bound-app}) forces $A_{\dots,\sigma}(p) \equiv 0$.
The above expression with $A_{\dots,\sigma}(p) \equiv 0$ simplifies to:
\begin{align*}
               \underbrace{\frac{\operatorname{d}}{\operatorname{d}p} \left(\frac{1}{\mathcal{J}(p)}\right)}_{\text{Likelihood Repulsion}} 
+ 
        \underbrace{\frac{1}{\mathcal{J}(p)} \cdot \frac{\operatorname{d}}{\operatorname{d}p} \left(\log P_{{prior}}(p)\right)}_{\text{Prior Attraction}} + \mathcal{O}(\sigma^4)
    \end{align*}
    Indeed, this is exactly the expression from Theorem 1 in \cite{hahn2024unifying}; the claim thus follows from that theorem.

We next consider the case  where $F(0), F(1)$ are finite; we obtain this on the basis of Theorem 3 in \cite{hahn2024unifying}. That theorem examines the bias in the setting where the stimulus space has \emph{one} boundary (e.g., $(-\infty, \theta_{Max}]$). 
For this case, it proves the decomposition (we consider the special case of $p=2$, as we assume the posterior mean estimator):
\begin{equation}
        \underbrace{-\frac{C_{1,p,D,F,\sigma}}{\sqrt{\mathcal{J}(p)}}}_{\text{Regression from Boundary}} +
        \underbrace{C_{3,p,D} \left(\frac{1}{\mathcal{J}(p)}\right)^{\prime}}_{\text{Likelihood Repulsion}} + 
        \underbrace{C_{2,p,D} \frac{1}{\mathcal{J}(p)}\left(\log P_{{prior}}(p)\right)^{\prime}}_{\text{Prior Attraction}} + \mathcal{O}(\sigma^4)
    \end{equation}
where $D(p) := \frac{F(\theta_{Max})-F(p)}{\sigma}$, and where $C_{\dots}$ are given as follows:
\begin{enumerate}
    \item $C_{1,p,D,F,\sigma}$ depends on the probability $p$, the quantity $D(p)$, the encoding function $F$, and the sensory noise magnitude $\sigma$
    \item $C_{2,p,D}$ depends on the probability $p$ and the quantity $D(p)$
    \item $C_{3,p,D}$ depends on the probability $p$ and the quantity $D(p)$
    \item $C_{1,p,D,F,\sigma}$, $C_{2,p,D}$, $C_{3,p,D}$ are positive everywhere
    \item $C_{1,p,D,F,\sigma} = \Theta(1)$ when $\sigma \rightarrow 0$
    \item $C_{1,\dots} \rightarrow_{D\rightarrow\infty} 0$, $C_{2/3,\dots}\rightarrow_{D\rightarrow \infty}1$.
\end{enumerate}
In fact, inspecting the proof of the theorem shows that $C_{3,p,D}, C_{2,p,D} \in (0,1)$ for all values of $p,D$, and that $C_{1,p,D(p),F,\sigma}$ is continuous in $p$ and $\sigma$ for $\sigma >0$.


Our aim is to use this result to establish the bias in the setting where the stimulus space is $[0,1]$, i.e., has \emph{two} boundaries.

Towards this end, we partition $(0,1)$ into $(0,1/2)$ and $(1/2,1)$.

First, $p \in (1/2, 1)$, the above decomposition is valid with $\theta_{Max} = 1$. The reason is because the contribution of the lower boundary at $0$ to the bias can be absorbed into the $\mathcal{O}(\sigma^4)$ remainder, because only exponentially small probability mass of the likelihood of the encoding $m$ falls into regions outside a small local environment of $p$ \citep[][SI Appendix, Section S.3.1.1]{hahn2024unifying}.

Second, for $p \in (0,1/2)$, by transforming the above result via $p \mapsto -p$, we obtain for the stimulus space $[\theta_{Min}, \infty)$ the bias:
\begin{equation}
        \underbrace{\frac{C_{1,p,D,F,\sigma}}{\sqrt{\mathcal{J}(p)}}}_{\text{Regression from Boundary}} +
        \underbrace{C_{3,p,D} \left(\frac{1}{\mathcal{J}(p)}\right)^{\prime}}_{\text{Likelihood Repulsion}} + 
        \underbrace{C_{2,p,D} \frac{1}{\mathcal{J}(p)}\left(\log P_{{prior}}(p)\right)^{\prime}}_{\text{Prior Attraction}} + \mathcal{O}(\sigma^4)
    \end{equation}
    where $D(p) := \frac{F(p)-F(\theta_{Min})}{\sigma}$, and where $C_{\dots}$ satisfy exactly the same conditions as given above.

    Now write $\hat{C}_{...}$ for the coefficients for $p \in (1/2,1)$ and $\tilde{C}_{...}$ for the coefficients for $p \in (0,1/2)$; the same with $\hat{D}(p)$ and $\tilde{D}(p)$.

    Now define
    \begin{align*}
        A_{1,\sigma}(p) &:= \hat{C}_{1,p,D(p),F,\sigma} \cdot 1_{p \geq 1/2} + \tilde{C}_{1,p,D(p),F,\sigma} \cdot 1_{p < 1/2} \\
                A_{2,\sigma}(p) :=& 1 - \left(\hat{C}_{3,p,D(p)} \cdot 1_{p \geq 1/2} + \tilde{C}_{3,p,D(p)} \cdot 1_{p < 1/2}\right) \\
        A_{3,\sigma}(p) :=& 1 - \left(\hat{C}_{2,p,D(p)} \cdot 1_{p \geq 1/2} + \tilde{C}_{2,p,D(p)} \cdot 1_{p < 1/2}\right) \\
    \end{align*}
First, inserting these definitions shows that the two decompositions do indeed produce the one claimed by the theorem.

We now need to show the remaining conclusions.
        By the assumptions above, $A_{1,\sigma}$ is positive and also continuous in both $\sigma$ and $p$. 
        Next, by the facts that $C_{1,p,D,F,\sigma} = \Theta(1)$ when $\sigma \rightarrow 0$ and  $C_{1,\dots} \rightarrow_{D\rightarrow\infty} 0$, we find that $A_{1,\sigma}(p)$ is bounded across $\sigma$ and $p$ whenever $\sigma$ is upper-bounded by some $\sigma_{Max}$. 
        As $C_{2,\dots}, C_{3,\dots} \in [0,1]$, we find that $A_{2,\sigma}$ and $A_{3,\sigma}$ are positive and bounded.
We have thus shown that $A_{1,\sigma}$, $A_{2,\sigma}$, $A_{3,\sigma}$ are positive and bounded.

    Finally, as mentioned above, $\hat{C}_{1,p,\hat{D}(p), F,\sigma} \rightarrow 0$ as $\hat{D}(p) \rightarrow \infty$; equivalently, 
\begin{equation}
    \hat{C}_{1,p,\hat{D}(p), F,\sigma} = \mathcal{O}\left(\phi_1\left(\frac{1}{\hat{D}(p)}\right)\right)
\end{equation}
for some nondecreasing function $\phi_1$ with $\lim_{s\downarrow 0}\phi_1(s) = 0$.
The same holds for $\tilde{C}$, $\tilde{D}$ with a function $\phi_2$.
Thus,
\begin{align*}
    A_{1,\sigma}(p) =& \mathcal{O}\left(\phi_1\left(\frac{1}{\hat{D}(p)}\right)\right) + \mathcal{O}\left(\phi_2\left(\frac{1}{\tilde{D}(p)}\right)\right) \\
    = & \mathcal{O}\left(\max \left\{\phi_1(\frac{1}{\hat{D}(p)}), \phi_2(\frac{1}{\tilde{D}(p)})\right\}\right)
\end{align*}
which is equivalent to the claim that \begin{equation}
         A_{1, \sigma}(p) \leq \phi\left(\frac{\sigma}{\min\{|F(p)-F(0)|, |F(p)-F(1)|\}}\right)
    \end{equation}
    across $p \in (0,1); \sigma > 0$, for some nondecreasing  function $\phi$ with $\lim_{s\downarrow 0}\phi(s) = 0$.
An analogous argument applies to $A_{2,\sigma}$ and $A_{3,\sigma}$.

\color{black}
\end{proof}

    \begin{corollary}\label{cor:log-odds-encoding}
For unbounded log-odds encoding:
    \begin{equation}\label{eq:bayesian-bias-log-odds}
        \mathbb{E}[\hat{p}|p]-p=
   \underbrace{\sigma^2 p^2 (1-p)^2 \left(\log P_{{prior}}(p)\right)^{\prime}}_{\text{Attraction}} +
 \underbrace{ 2 \sigma^2 p (1-p)(1-2p)}_{\text{Likelihood Repulsion}} + \mathcal{O}(\sigma^4)
    \end{equation}
\end{corollary}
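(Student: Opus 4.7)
The plan is to obtain Corollary~\ref{cor:log-odds-encoding} as a direct specialization of Theorem~\ref{thm:bayesian-bias} to the unbounded log-odds encoding $F(p) = \log\frac{p}{1-p}$, i.e.\ the $\beta = 0$ case of equation~\eqref{eq:f-bounded-log-odds}. Since $F(0) = -\infty$ and $F(1) = \infty$, the second clause of Theorem~\ref{thm:bayesian-bias} applies and the boundary corrections satisfy $A_{1,\sigma} \equiv A_{2,\sigma} \equiv A_{3,\sigma} \equiv 0$. The Regression term therefore vanishes identically, so all that remains is to evaluate the Likelihood Repulsion and Prior Attraction terms at this particular $F$.

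The key computation is the Fisher information. Differentiating $F(p) = \log p - \log(1-p)$ gives $F'(p) = \frac{1}{p} + \frac{1}{1-p} = \frac{1}{p(1-p)}$, hence
\begin{equation*}
\mathcal{J}(p) \;=\; \frac{(F'(p))^2}{\sigma^2} \;=\; \frac{1}{\sigma^2\,p^2(1-p)^2},
\qquad
\frac{1}{\mathcal{J}(p)} \;=\; \sigma^2\,p^2(1-p)^2.
\end{equation*}
Plugging this into the Prior Attraction term immediately yields $\sigma^2 p^2(1-p)^2 (\log P_{\text{prior}}(p))'$, matching the first summand in~\eqref{eq:bayesian-bias-log-odds}.

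For the Likelihood Repulsion, I differentiate $\sigma^2 p^2(1-p)^2$ using the product rule:
\begin{equation*}
\left(\frac{1}{\mathcal{J}(p)}\right)'
\;=\; \sigma^2\bigl[\,2p(1-p)^2 - 2p^2(1-p)\,\bigr]
\;=\; 2\sigma^2\, p(1-p)\bigl[(1-p) - p\bigr]
\;=\; 2\sigma^2\, p(1-p)(1-2p),
\end{equation*}
which matches the second summand. Combining the two contributions with the $\mathcal{O}(\sigma^4)$ remainder inherited from Theorem~\ref{thm:bayesian-bias} gives precisely~\eqref{eq:bayesian-bias-log-odds}.

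There is no real obstacle here beyond bookkeeping: the only conceptual point to verify is that the unbounded log-odds encoding genuinely activates the second clause of Theorem~\ref{thm:bayesian-bias} (so the boundary $A$-factors drop out), rather than the first clause; after that, the corollary is a one-line substitution plus an elementary derivative. If anything, the mildest subtlety is justifying the asymptotic expansion near the boundaries $p \to 0^+, 1^-$, where $\mathcal{J}(p) \to 0$ and one might worry about the $\mathcal{O}(\sigma^4)$ remainder growing, but for any fixed $p \in (0,1)$ the expansion is valid as $\sigma \to 0$, which is the regime Theorem~\ref{thm:bayesian-bias} is stated in.
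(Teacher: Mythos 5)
Your proposal is correct and follows essentially the same route as the paper's own proof: both invoke the second clause of Theorem~\ref{thm:bayesian-bias} because the unbounded log-odds map satisfies $F(0)=-\infty$, $F(1)=\infty$ (so the $A$-factors vanish), compute $1/\mathcal{J}(p)=\sigma^2 p^2(1-p)^2$, and differentiate to obtain the repulsion term. The only cosmetic difference is that you derive $\mathcal{J}(p)$ from $F'$ directly rather than setting $\beta=0$ in equation~\eqref{eq:f-bounded-log-odds}; the computations agree.
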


\begin{proof}
    When $\beta=0$, (\ref{eq:f-bounded-log-odds}) simplifies to:
    \begin{equation}
        \sqrt{\mathcal{J}(p)} = \frac{1}{\sigma \cdot p\cdot (1-p)}
    \end{equation}
    Furthermore, the coefficients $A_{\dots}$ vanish for this encoding, because the encoding map $F(p) = \int_{1/2}^p \sqrt{\mathcal{J}(q)} dq$ is the log-odds transformation, satisfying $F(0) = -\infty$, $F(1) = \infty$.
    Now
    \begin{equation}
        \frac{1}{\mathcal{J}(p)} = \sigma^2 \cdot p^2\cdot (1-p)^2
    \end{equation}
    with derivative
    \begin{equation}
        \left(\frac{1}{\mathcal{J}(p)}\right)' = \sigma^2 \cdot 2 \cdot p \cdot (1-p) \cdot (1-2p)
    \end{equation}
    Plugging these into Theorem~\ref{thm:bayesian-bias} yields the result.
    
\end{proof}

\begin{thm}\label{thm:bayesian-var}
At each $p \in (0,1)$, the Bayesian model has the response variability:
    \begin{equation}
        \frac{1}{\mathcal{J}(p)}  + \frac{2\sigma^2 }{F'^2(p)} \frac{d}{dp} \underbrace{\left[\mathbb{E}[\hat{p} |p] - \mathbb{E}[F^{-1}(m)|p]\right]}_{\text{Bias introduced by decoding}} + \frac{\sigma^4 F''^2(p)}{2F'^6(p)} + O(\sigma^6)
    \end{equation}
    as $\sigma \rightarrow 0$.
\end{thm}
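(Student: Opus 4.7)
The plan is to decompose the Bayesian estimator $\hat{p}(m)$ as the sum of a deterministic inverse-encoding part and a Bayesian correction, compute the variance of each piece, and handle the cross-covariance separately. Concretely, write
\begin{equation}
\hat{p}(m) = F^{-1}(m) + B(m), \qquad B(m) := \hat{p}(m) - F^{-1}(m),
\end{equation}
and note that under the small-noise regime $B(m) = O(\sigma^2)$, since $F^{-1}(m)$ is the leading-order approximation to the posterior mean (this is the content of the Cramer-Rao limit referenced in Theorem~\ref{thm:Bayesian-optimality}). By bilinearity,
\begin{equation}
\mathrm{Var}(\hat{p}\mid p) = \mathrm{Var}(F^{-1}(m)\mid p) + 2\,\mathrm{Cov}(F^{-1}(m), B(m)\mid p) + \mathrm{Var}(B(m)\mid p).
\end{equation}

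First I would Taylor-expand $F^{-1}$ around $m_0 = F(p)$. Writing $\epsilon = m - F(p) \sim \mathcal{N}(0,\sigma^2)$, the expansion
\begin{equation}
F^{-1}(m) = p + \tfrac{1}{F'(p)}\epsilon - \tfrac{F''(p)}{2F'(p)^3}\epsilon^2 + O(\epsilon^3)
\end{equation}
together with $\mathbb{E}\epsilon^2 = \sigma^2$, $\mathbb{E}\epsilon^3 = 0$, $\mathbb{E}\epsilon^4 = 3\sigma^4$ yields
\begin{equation}
\mathrm{Var}(F^{-1}(m)\mid p) = \frac{\sigma^2}{F'(p)^2} + \frac{\sigma^4 F''(p)^2}{2F'(p)^6} + O(\sigma^6) = \frac{1}{\mathcal{J}(p)} + \frac{\sigma^4 F''(p)^2}{2F'(p)^6} + O(\sigma^6),
\end{equation}
which accounts for the first and third summands in the theorem. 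The residual variance $\mathrm{Var}(B(m)\mid p)$ is bounded by $[B'(F(p))]^2\sigma^2 + O(\sigma^4)$, and since $B = O(\sigma^2)$ implies $B'(F(p)) = O(\sigma^2)$, this contribution is absorbed into $O(\sigma^6)$.

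The crux is the cross term $2\,\mathrm{Cov}(F^{-1}(m), B(m)\mid p)$. Expanding both factors to first order in $\epsilon$ and using $\mathbb{E}\epsilon = 0$, the dominant contribution is $\tfrac{2\sigma^2}{F'(p)}\,B'(F(p))$, with the subleading terms of order $\sigma^2 \cdot O(\sigma^2 \cdot \sigma^2) = O(\sigma^6)$ after the same even-moment bookkeeping. To match the stated form, I would convert $B'(F(p))$ into a $p$-derivative by differentiating under the Gaussian integral:
\begin{equation}
\frac{d}{dp}\mathbb{E}[B(m)\mid p] = \frac{d}{dp}\!\int B(F(p)+\epsilon)\phi_\sigma(\epsilon)\,d\epsilon = F'(p)\,\mathbb{E}[B'(m)\mid p].
\end{equation}
A further Taylor expansion of $B'$ around $F(p)$ combined with $B = O(\sigma^2)$ gives $\mathbb{E}[B'(m)\mid p] = B'(F(p)) + O(\sigma^4)$. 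Substituting and identifying $\mathbb{E}[B(m)\mid p] = \mathbb{E}[\hat{p}\mid p] - \mathbb{E}[F^{-1}(m)\mid p]$ produces
\begin{equation}
2\,\mathrm{Cov}(F^{-1}(m), B(m)\mid p) = \frac{2\sigma^2}{F'(p)^2}\,\frac{d}{dp}\Bigl[\mathbb{E}[\hat{p}\mid p] - \mathbb{E}[F^{-1}(m)\mid p]\Bigr] + O(\sigma^6),
\end{equation}
which is the middle summand in the claim.

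The main obstacle is the cross-covariance calculation: I have to justify both (i) that $B(m)$ and its derivatives are uniformly $O(\sigma^2)$ in a neighborhood of $F(p)$ large compared to the noise scale (so the Taylor remainders are controlled), and (ii) that interchanging $d/dp$ with the Gaussian integration is valid and loses only $O(\sigma^4)$. Both facts follow from the small-noise Laplace-type asymptotics used in Theorem~\ref{thm:bayesian-bias}, but they need to be invoked carefully to trace the error bound down to $O(\sigma^6)$ as claimed, rather than the naive $O(\sigma^4)$. Once these estimates are in place, assembling the three pieces yields the theorem.
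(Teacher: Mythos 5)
Your route is genuinely different from the paper's: the paper proves this theorem in two lines by identifying $\mathbb{E}[\hat{p}\mid p]-\mathbb{E}[F^{-1}(m)\mid p]$ with the ``decoding bias'' of \cite{hahn2024unifying,Hahn2025Identifiability} and then invoking Lemma S24 of the latter reference, whereas you attempt a self-contained derivation via the decomposition $\hat{p}(m)=F^{-1}(m)+B(m)$. The decomposition is sound, and your handling of $\mathrm{Var}(B\mid p)$ and of the cross-covariance --- in particular the conversion of $B'(F(p))$ into $\tfrac{1}{F'(p)}\tfrac{d}{dp}\mathbb{E}[B(m)\mid p]$ by differentiating under the Gaussian integral --- is correct at the orders you claim.

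There is, however, a genuine gap in the first piece. Truncating $F^{-1}(m)=p+\tfrac{\epsilon}{F'}-\tfrac{F''}{2F'^3}\epsilon^2+O(\epsilon^3)$ does \emph{not} control the variance to $O(\sigma^6)$: the cubic Taylor term $\tfrac{(F^{-1})'''(F(p))}{6}\epsilon^3$ has covariance $2\cdot\tfrac{1}{F'(p)}\cdot\tfrac{(F^{-1})'''(F(p))}{6}\,\mathbb{E}[\epsilon^4]=\tfrac{(F^{-1})'''(F(p))}{F'(p)}\sigma^4$ with the linear term, which is of the \emph{same} order $\sigma^4$ as the $\tfrac{\sigma^4F''^2}{2F'^6}$ term you retain. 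Writing $(F^{-1})'''=-F'''/F'^4+3F''^2/F'^5$, your expression for $\mathrm{Var}(F^{-1}(m)\mid p)$ is missing the contribution
\begin{equation*}
\sigma^4\left(\frac{3F''(p)^2}{F'(p)^6}-\frac{F'''(p)}{F'(p)^5}\right).
\end{equation*}
A concrete check: for $F(p)=e^{p}$ at $p=0$ one has $\mathrm{Var}\bigl(\log(1+\sigma Z)\bigr)=\sigma^2+\tfrac{5}{2}\sigma^4+O(\sigma^6)$, whereas your formula gives $\sigma^2+\tfrac12\sigma^4$. This missing term does not cancel against your (correct) middle and residual pieces, so the argument as written does not actually arrive at the displayed statement --- it lands there only because a term of the kept order was silently dropped. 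To close the proof you must carry the expansion of $F^{-1}$ to third order throughout and then show where this $\sigma^4$ contribution goes: either it is absorbed by the conventions of the cited Lemma S24 (which the paper relies on but does not reproduce), or the displayed expansion itself needs the extra term. As it stands, the order-counting in your first step is the weak link, and fixing it changes your final answer rather than merely tightening an error bound.
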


\begin{proof}
The term $\left[\mathbb{E}[\hat{p} |p] - \mathbb{E}[F^{-1}(m)|p]\right]$ corresponds to the quantity referred to as Decoding Bias in \cite{hahn2024unifying, Hahn2025Identifiability}.
The equation is then obtained from Lemma S24 in \cite{Hahn2025Identifiability}. Note that, for the quantity $C_{dec,M}$ used in that lemma, we have $\left[\mathbb{E}[\hat{p} |p] - \mathbb{E}[m|p]\right] = \sigma^2 C_{dec,M} + \mathcal{O}(\sigma^4)$, completing the proof.
\end{proof}

\begin{thm}[Optimality of Decoding]\label{thm:Bayesian-optimality}
The MSE of the decoded estimate in the Bayesian Model is
\begin{equation}
    \frac{1}{\mathcal{J}(p)} + O(\sigma^4)
\end{equation}
\end{thm}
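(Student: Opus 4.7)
The plan is to decompose the mean squared error via $\text{MSE}(p) = \text{Bias}(p)^2 + \text{Var}(\hat{p}\mid p)$ and then invoke Theorems~\ref{thm:bayesian-bias} and~\ref{thm:bayesian-var} term by term, showing that every deviation from $1/\mathcal{J}(p)$ is at most $\mathcal{O}(\sigma^4)$. The key observation is that $\mathcal{J}(p) = (F'(p))^2/\sigma^2 = \Theta(1/\sigma^2)$ at any fixed interior $p$, so $1/\mathcal{J}(p), (1/\mathcal{J}(p))'$ and $\frac{1}{\mathcal{J}(p)}(\log P_{prior})'$ are all $\mathcal{O}(\sigma^2)$.

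First I would bound the bias. Plugging the above scalings into Theorem~\ref{thm:bayesian-bias}, the Likelihood Repulsion and Prior Attraction terms are each $\mathcal{O}(\sigma^2)$. For the Regression term $\operatorname{sign}(0.5-p)A_{1,\sigma}(p)/\sqrt{\mathcal{J}(p)}$, note that $1/\sqrt{\mathcal{J}(p)} = \mathcal{O}(\sigma)$ and that at any fixed $p \in (0,1)$ the quantity $\min\{|F(p)-F(0)|,|F(p)-F(1)|\}/\sigma \to \infty$ as $\sigma \to 0$, so $A_{1,\sigma}(p) \to 0$ (this is a Gaussian-tail quantity, vanishing faster than any power of $\sigma$). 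Hence the Regression term is $o(\sigma^k)$ for every $k$, and the total bias is $\mathcal{O}(\sigma^2)$, giving $\text{Bias}(p)^2 = \mathcal{O}(\sigma^4)$.

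Second I would handle the variance using Theorem~\ref{thm:bayesian-var}:
\begin{equation*}
\Var(\hat p\mid p) = \frac{1}{\mathcal{J}(p)} + \frac{2\sigma^2}{F'^2(p)}\,\frac{d}{dp}\bigl[\mathbb{E}[\hat p\mid p] - \mathbb{E}[F^{-1}(m)\mid p]\bigr] + \frac{\sigma^4 F''^2(p)}{2F'^6(p)} + \mathcal{O}(\sigma^6).
\end{equation*}
The bracketed decoding bias is the $\mathcal{O}(\sigma^2)$ correction of the posterior-mean estimator relative to the naive inverse decoder (cf.\ the $C_{dec,M}$ expansion cited in the proof of Theorem~\ref{thm:bayesian-var}). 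Under the standing smoothness assumptions on $F$ and $P_{prior}$ it depends smoothly on $p$, so its derivative is also $\mathcal{O}(\sigma^2)$, and the prefactor $\sigma^2/F'^2(p)$ turns the middle contribution into $\mathcal{O}(\sigma^4)$. The explicit $\sigma^4$ curvature term and the $\mathcal{O}(\sigma^6)$ remainder are already of the required order. Summing the squared-bias and variance contributions gives $\text{MSE}(p) = 1/\mathcal{J}(p) + \mathcal{O}(\sigma^4)$, which is exactly the Cram\'er--Rao lower bound asymptotically.

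The main obstacle will be the Regression contribution to the bias: to preserve an $\mathcal{O}(\sigma^4)$ MSE one really needs the coefficient $A_{1,\sigma}(p)$ to shrink faster than $\sigma$, and this must be justified from its definition as a Gaussian-tail integral driven by $|F(p)-F(0)|/\sigma$ and $|F(p)-F(1)|/\sigma$; a clean uniform-in-$p$ statement would require restricting to compact subsets of $(0,1)$ (or letting $F$ be unbounded at the endpoints, in which case the term vanishes identically by the second clause of Theorem~\ref{thm:bayesian-bias}). The secondary technicality is verifying that the decoding-bias quantity in Theorem~\ref{thm:bayesian-var} is differentiable with an $\mathcal{O}(\sigma^2)$ derivative, which I would import from the expansions in \cite{Hahn2025Identifiability} rather than rederive.
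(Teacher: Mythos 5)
Your proposal is correct and follows exactly the route the paper takes: the paper's own proof is the one-line observation that $\mathrm{MSE} = \mathrm{Var} + \mathrm{Bias}^2$ with $\mathrm{Bias} = \mathcal{O}(\sigma^2)$ (hence $\mathrm{Bias}^2 = \mathcal{O}(\sigma^4)$), combined implicitly with Theorem~\ref{thm:bayesian-var} giving $\mathrm{Var} = 1/\mathcal{J}(p) + \mathcal{O}(\sigma^4)$. Your write-up simply fills in the order-of-magnitude bookkeeping the paper leaves implicit, including the justification that the Regression term $A_{1,\sigma}(p)/\sqrt{\mathcal{J}(p)}$ is negligible at fixed interior $p$, which the paper does not spell out.
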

\begin{proof}
    Immediate, from $MSE = Var + Bias^2$, and noting that $Bias = \mathcal{O}(\sigma^2)$.
\end{proof}

\subsubsection{Properties of BLO Model}\label{app:proofs-blo}
\begin{thm}[Repeated from Theorem~\ref{thm:blo-bayesian}]
    The BLO model in the limit of untruncated log-odds ($\Delta_+ \rightarrow \infty, \Delta_- \rightarrow -\infty$), and a Bayesian model with unbounded log-odds encoding ($\beta = 0$) and a specific unimodal prior $P_{prior}(p)$ (depending on $\Lambda_0$) have the same bias up to difference $\mathcal{O}(\kappa^2 +  \sigma^4)$.
\end{thm}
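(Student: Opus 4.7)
The plan is to Taylor-expand both biases in their respective small parameters ($\kappa$ for BLO, $\sigma^{2}$ for the Bayesian model), match the resulting expressions term by term, and read off the unique prior that makes them agree up to the stated error $\mathcal{O}(\kappa^{2}+\sigma^{4})$.

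First, I take the limit $\Delta_{+}\to\infty$, $\Delta_{-}\to-\infty$ symmetrically, with the linear rescaling coefficient $\Psi/((\Delta_{+}-\Delta_{-})/2)$ held at unity, so that equations (\ref{eq:blo-Lambda})--(\ref{eq:blo-second-map}) collapse to $\Lambda(p)=\lambda(p)$. Writing $V(p)=c\,p(1-p)$ and expanding $1-\omega_{p}=\kappa V(p)+\mathcal{O}(\kappa^{2})$, the noisy internal value becomes $\lambda(p)+\delta$ with $\delta=\epsilon_{\lambda}-\kappa c\,p(1-p)(\lambda(p)-\Lambda_{0})+\mathcal{O}(\kappa^{2})$. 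I then Taylor-expand $\lambda^{-1}$ to second order using $(\lambda^{-1})'(\lambda(p))=p(1-p)$ and $(\lambda^{-1})''(\lambda(p))=p(1-p)(1-2p)$, take expectation over $\epsilon_{\lambda}\sim\mathcal{N}(0,\sigma_{\lambda}^{2})$, and obtain
\begin{equation}
\mathrm{Bias}_{\mathrm{BLO}}(p)=-\kappa c\,p^{2}(1-p)^{2}(\lambda(p)-\Lambda_{0})+\tfrac{\sigma_{\lambda}^{2}}{2}\,p(1-p)(1-2p)+\mathcal{O}(\kappa^{2}+\sigma_{\lambda}^{4}).
\end{equation}

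Next, Corollary~\ref{cor:log-odds-encoding} gives the Bayesian bias with $\beta=0$ as $\sigma^{2}p^{2}(1-p)^{2}(\log P_{prior}(p))'+2\sigma^{2}p(1-p)(1-2p)+\mathcal{O}(\sigma^{4})$. Matching the Repulsion terms forces $\sigma_{\lambda}^{2}=4\sigma^{2}$, and matching the Attraction/shrinkage terms forces $(\log P_{prior}(p))'=-(\kappa c/\sigma^{2})(\lambda(p)-\Lambda_{0})$. Integrating, using $\int\lambda(p)\,dp=p\log p+(1-p)\log(1-p)$, yields the explicit prior
\begin{equation}
\log P_{prior}(p)=-\frac{\kappa c}{\sigma^{2}}\bigl[p\log p+(1-p)\log(1-p)-\Lambda_{0}\,p\bigr]+\mathrm{const},
\end{equation}
which is unimodal with mode at $p^{\star}=\lambda^{-1}(\Lambda_{0})$, since the second derivative of the exponent equals $-\kappa c/(\sigma^{2}p(1-p))<0$ on $(0,1)$.

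The main obstacle is the error bookkeeping across two distinct small-parameter expansions. I need to verify that (i) the $\mathcal{O}(\kappa^{2})$ correction to $1-\omega_{p}$, the third- and fourth-moment contributions of $\delta$, and the cross term $\kappa\sigma_{\lambda}^{2}$ all sit inside the $\mathcal{O}(\kappa^{2}+\sigma_{\lambda}^{4})$ remainder (the cross term is absorbed via AM--GM, $\kappa\sigma^{2}\le\tfrac{1}{2}(\kappa^{2}+\sigma^{4})$), and (ii) with the identification $\sigma_{\lambda}^{2}=4\sigma^{2}$ the BLO remainder $\mathcal{O}(\sigma_{\lambda}^{4})$ coincides with the Bayesian $\mathcal{O}(\sigma^{4})$. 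A minor check is that the recovered $P_{prior}$ is a valid density on $(0,1)$: this holds because $p\log p$ and $(1-p)\log(1-p)$ vanish at the endpoints, so the exponent remains bounded and the density is integrable after normalization.
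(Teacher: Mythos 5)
Your proposal is correct and follows essentially the same route as the paper: Taylor-expand the BLO estimate in $\kappa$ and $\epsilon_\lambda$ using $(\lambda^{-1})'=p(1-p)$ and $(\lambda^{-1})''=p(1-p)(1-2p)$, match the resulting Repulsion and Attraction terms against Corollary~\ref{cor:log-odds-encoding} under the identifications $\sigma_\lambda^2=4\sigma^2$ and $(\log P_{prior})'\propto \Lambda_0-\lambda(p)$, and conclude unimodality from monotonicity/concavity. Your explicit closed form for the prior and the AM--GM absorption of the $\kappa\sigma^2$ cross terms are slightly more detailed than the paper's treatment but do not change the argument.
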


\begin{proof}
We derive this as a corollary of Theorem~\ref{thm:blo-bias}.
In the limit $\Delta_+ \rightarrow \infty, \Delta_- \rightarrow -\infty$, $\Phi(p)$ equals $p$.
We now obtain the result by matching (\ref{eq:bayesian-bias-log-odds}) and (\ref{eq:blo-bias}). Specifically, (\ref{eq:blo-bias}) then assumes the form:
\begin{align}
       \underbrace{\kappa \cdot p (1-p) V_p [\Lambda_0-\Lambda(p)]}_{\equiv Attraction} + \underbrace{\frac{\sigma^2}{2} p\,(1 - p)\,(1 - 2 p)}_{\equiv Repulsion}  + O(\kappa^2) + O(\sigma^4)
\end{align}
We match this with the Bayesian bias:
    \begin{equation}
   \underbrace{\sigma^2 p^2 (1-p)^2 \left(\log P_{{prior}}(p)\right)^{\prime}}_{\text{Attraction}} +
 \underbrace{ 2 \sigma^2 p (1-p)(1-2p)}_{\text{Likelihood Repulsion}} + \mathcal{O}(\sigma^4)
    \end{equation}
    Under the identification $\kappa = \sigma_{Bayesian}^2$, $\sigma^2_{BLO} = 4\sigma^2_{Bayesian}$, we find
    \begin{equation}
        \left(\log P_{{prior}}(p)\right)^{\prime} = \frac{
       V_p [\Lambda_0-\Lambda(p)]}{p(1-p)} \propto 
       \Lambda_0-\Lambda(p)
    \end{equation}
    Since $\Lambda(p)$ is monotonically increasing in $p$, this quantity is monotonically decreasing.
    Hence, $P_{{prior}}(p)$ has a single peak and describes a unimodal prior.
\end{proof}

\begin{thm}\label{thm:blo-FI}
    The BLO model has resource allocation $\sqrt{\mathcal{J}(p)}$ equal to: 
    \begin{equation}
         \frac{1}{\sigma} \cdot\left[\left(- \kappa \frac{ 1 - 2p}{(1+\kappa p ( 1-p))^2} \right) \cdot (\Lambda(p) - \Lambda_0) +  \frac{\omega_{p} \cdot \Psi}{(\Delta_+-\Delta_-)/2} \begin{cases} 0 & if~~\lambda(p) \not\in [\Delta_-, \Delta_+] \\  \lambda\prime(p) & else \end{cases} \right] \\
    \end{equation}
\end{thm}
\begin{proof}[Proof of Theorem~\ref{thm:blo-FI}]
    Given the standard expression for the Fisher information of a Gaussian with parameter-dependent mean and constant variance,     $\sqrt{\mathcal{J}(p)}$ equals
    \begin{align*}
        & \frac{1}{\sigma}\cdot\frac{d\hat{\Lambda}_{\omega}(p)}{dp}  \\
        = & \frac{1}{\sigma} \cdot \frac{d }{dp} \left[\omega_p \cdot\Lambda(p) + (1 - \omega_p)\cdot \Lambda_0\right]\\
        = & \frac{1}{\sigma} \cdot \frac{d }{dp} \left[\omega_p \cdot\Lambda(p) - \omega_p\cdot \Lambda_0\right] \\
        = & \frac{1}{\sigma} \cdot\left[ \left(\frac{d}{dp} \omega_p\right) \cdot\Lambda(p) + \omega_p \left(\cdot \frac{d}{dp} \Lambda(p)\right)- \Lambda_0 \cdot \left(\frac{d}{dp} \omega_p\right) \right]\\
        = & \frac{1}{\sigma} \cdot\left[ \left(\frac{d}{dp} \omega_p\right) \cdot (\Lambda(p) - \Lambda_0) + \omega_p \left(\cdot \frac{d}{dp} \Lambda(p)\right)\right]\\
        = & \frac{1}{\sigma} \cdot \left[ \left(\frac{d}{dp} \frac{1}{1+\kappa p (1-p)} \right)\cdot (\Lambda(p) - \Lambda_0) +  \omega_p \cdot  \left(\frac{\Psi}{(\Delta_+-\Delta_-)/2} \cdot\frac{d}{dp} \Gamma(\lambda(p)) \right) \right] \\
        = & \frac{1}{\sigma} \cdot \left[ \left(- \kappa \frac{ 1 - 2p}{(1+\kappa p ( 1-p))^2} \right) \cdot (\Lambda(p) - \Lambda_0) +    \frac{\omega_p \cdot\Psi}{(\Delta_+-\Delta_-)/2} \begin{cases} 0 & if ~\lambda(p) \not\in [\Delta_-, \Delta_+] \\  \lambda\prime(p) & else \end{cases} \right] 
    \end{align*}
\end{proof}

\begin{thm}\label{thm:blo-bias}
    At any $p \in (0,1)$, the BLO model has the bias:
    \begin{align}\label{eq:blo-bias}
      & \Phi(p)  - p  + \underbrace{\kappa \cdot \Phi(p) (1-\Phi(p)) V_p [\Lambda_0-\Lambda(p)]}_{\equiv \text{Attraction}} + \underbrace{\frac{\sigma^2}{2} \Phi(p)\,(1 - \Phi(p))\,(1 - 2 \Phi(p))}_{\equiv \text{Repulsion}} + O(\kappa^2) + O(\sigma^4)
\end{align}
where $\Phi : (0,1) \rightarrow (0,1) : \Phi(p) := \lambda^{-1}(\hat\Lambda(p))$. 
\end{thm}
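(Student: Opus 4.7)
The plan is to start from the representation $\hat{\pi}(p) = \lambda^{-1}\!\bigl(\hat{\Lambda}_\omega(p) + \epsilon_\lambda\bigr)$ with $\epsilon_\lambda \sim \mathcal{N}(0,\sigma^2)$, and to Taylor-expand $\lambda^{-1}$ about the anchor-free value $\Lambda(p)$. Doing so decouples the two nonlinearities of the BLO decoder — the $\omega$-weighted log-odds blending and the logistic $\lambda^{-1}$ — into a sum of terms, each of which, after taking the Gaussian expectation and expanding in small $\kappa$, isolates exactly one of the three claimed bias contributions.

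Concretely, I first isolate the $\omega$-induced shift
\begin{equation*}
\eta(p) \;:=\; \hat{\Lambda}_\omega(p) - \Lambda(p) \;=\; (1-\omega_p)\bigl(\Lambda_0 - \Lambda(p)\bigr).
\end{equation*}
The geometric expansion $1-\omega_p = \kappa V(p)/(1+\kappa V(p)) = \kappa V_p + O(\kappa^2)$ gives $\eta(p) = \kappa V_p (\Lambda_0 - \Lambda(p)) + O(\kappa^2)$. Setting $q := \Phi(p) = \lambda^{-1}(\Lambda(p))$ and using the standard logistic identities $(\lambda^{-1})'(x) = q(1-q)$ and $(\lambda^{-1})''(x) = q(1-q)(1-2q)$ at $x = \Lambda(p)$, a Taylor expansion of $\lambda^{-1}(\Lambda(p) + \eta + \epsilon_\lambda)$ yields
\begin{equation*}
\hat{\pi}(p) \;=\; \Phi(p) + \Phi(p)(1-\Phi(p))\bigl(\eta + \epsilon_\lambda\bigr) + \tfrac{1}{2}\Phi(p)(1-\Phi(p))(1-2\Phi(p))\bigl(\eta+\epsilon_\lambda\bigr)^2 + R,
\end{equation*}
where $R$ collects all cubic-and-higher terms in $(\eta+\epsilon_\lambda)$.

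Taking the expectation over $\epsilon_\lambda$, the contribution linear in $\epsilon_\lambda$ vanishes by symmetry, the quadratic term yields $\eta^2 + \sigma^2$, and the Gaussian moments of $R$ contribute at most $O\bigl(|\eta|^3 + |\eta|\sigma^2 + \sigma^4\bigr)$. Since $\eta = O(\kappa)$ and $\kappa\sigma^2 \le \tfrac{1}{2}(\kappa^2 + \sigma^4)$ by AM--GM, all contributions except the one linear in $\eta$ and the one linear in $\sigma^2$ collapse into $O(\kappa^2) + O(\sigma^4)$. Substituting the leading-order form of $\eta$ and subtracting $p$ then produces exactly the three displayed pieces of the theorem.

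The argument is essentially bookkeeping. The main points requiring care are (i) the mixed cross term $\eta\,\sigma^2$, which is not \emph{a priori} of the stated order and must be absorbed into the remainder via AM--GM, and (ii) control of the Taylor remainder, which requires uniform bounds on the fourth derivative of $\lambda^{-1}$ on a neighborhood of $\Lambda(p)$; the latter holds since $p \in (0,1)$ keeps $\Lambda(p)$ finite for fixed $p$. I do not expect a substantive conceptual obstacle beyond this bookkeeping.
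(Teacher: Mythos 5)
Your proposal is correct and follows essentially the same route as the paper's proof: a Taylor expansion of $\lambda^{-1}$ about $\Lambda(p)$, with the anchor-induced shift linearized in $\kappa$ (your $\eta = \kappa V_p(\Lambda_0-\Lambda(p)) + O(\kappa^2)$ is exactly the paper's $\partial_\kappa \hat{\Lambda}_\omega\big|_{\kappa=0}$ term) and the Gaussian expectation supplying the $\tfrac{\sigma^2}{2}(\lambda^{-1})''$ repulsion term. The only difference is that you make explicit the absorption of the cross term $\eta\,\sigma^2$ and the Taylor remainder into $O(\kappa^2)+O(\sigma^4)$, which the paper leaves implicit.
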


\begin{proof}[Proof of Theorem~\ref{thm:blo-bias}]
Consider the estimate as a function of $p$:
    \begin{equation}
    \hat{\pi}(p) = \lambda^{-1}(\hat{\Lambda}_{\omega}(p) + \epsilon_{\lambda})
\end{equation}
The bias is its expectation over $\epsilon_\lambda$ minus the true value:
\begin{equation}
    \mathbb{E}\left[\lambda^{-1}(\hat{\Lambda}_{\omega}(p) + \epsilon_{\lambda})\right] - p
\end{equation}
where $\epsilon_\lambda \sim N(0,\sigma^2)$.
To understand it, we perform a Taylor expansion around $\sigma=0$, $\kappa=0$.
That is, we start by computing
\begin{equation}
    \frac{\partial^2}{(\partial \epsilon_\lambda)^2} \left[\lambda^{-1}(\hat{\Lambda}_{\omega}(p) + \epsilon_{\lambda})\right] = \lambda^{-1}(\Lambda(p))\,(1 - \lambda^{-1}(\Lambda(p)))\,(1 - 2 \lambda^{-1}(\Lambda(p)))
\end{equation}
at $\epsilon_\lambda=0$, $\kappa=0$, and
\begin{equation}
    \frac{\partial}{\partial \kappa} \left[\lambda^{-1}(\hat{\Lambda}_{\omega}(p) + \epsilon_{\lambda})\right] = \lambda^{-1}(\Lambda(p))\,(1-\lambda^{-1}(\Lambda(p)))\,V_p\,[\Lambda_0-\Lambda(p)]\ \\
\end{equation}
at $\kappa=0$, $\epsilon_\lambda=0$. 

Then
\begin{align*}
    \mathbb{E}\left[\lambda^{-1}(\hat{\Lambda}_{\omega}(p) + \epsilon_{\lambda})\right] - p = & \kappa \cdot \frac{\partial}{\partial \kappa} \left[\lambda^{-1}(\hat{\Lambda}_{\omega}(p) + \epsilon_{\lambda})\right] \\
    & + \frac{\sigma^2}{2}\frac{\partial^2}{(\partial \epsilon_\lambda)^2} \left[\lambda^{-1}(\hat{\Lambda}_{\omega}(p) + \epsilon_{\lambda})\right] \\
    & + \lambda^{-1}(\Lambda(p)) \\
    & - p \\
    & + O(\kappa^2) + O(\sigma^4) 
\end{align*}
Filling in the above expressions, we get
\begin{align*}
   \mathbb{E}\left[\lambda^{-1}(\Lambda(p) + \epsilon_{\lambda})\right] - p = & \kappa \cdot \lambda^{-1}(\Lambda(p)) (1-\lambda^{-1}(\Lambda(p))) V_p [\Lambda_0-\Lambda(p)]\ \\
    & + \frac{\sigma^2}{2} \lambda^{-1}(\Lambda(p))\,(1 - \lambda^{-1}(\Lambda(p)))\,(1 - 2 \lambda^{-1}(\Lambda(p))) \\
    & + \lambda^{-1}(\Lambda(p)) \\
    & - p \\
    & + O(\kappa^2) + O(\sigma^4)
\end{align*}
\end{proof}

\begin{thm}\label{thm:blo-var}
    The BLO model has the response variability: 
    \begin{align*}  \lambda^{-1}(\hat{\Lambda}_{\omega}(p))^2 (1-\lambda^{-1}(\hat{\Lambda}_{\omega}(p))^{2} \,\sigma^{2} 
&+ \tfrac{1}{2}  \lambda^{-1}(\hat{\Lambda}_{\omega}(p))^2 (1- \lambda^{-1}(\hat{\Lambda}_{\omega}(p)))^2 (1-2 \lambda^{-1}(\hat{\Lambda}_{\omega}(p)))^2 \,\sigma^{4} \\
&+ O(\sigma^{6}).
\end{align*}
\end{thm}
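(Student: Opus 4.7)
The plan is a direct Taylor expansion of $\lambda^{-1}$ around the deterministic encoded mean $\hat{\Lambda}_\omega(p)$ in the noise variable, since the only random quantity in $\hat{\pi}(p) = \lambda^{-1}(\hat{\Lambda}_\omega(p) + \epsilon_\lambda)$ is the Gaussian $\epsilon_\lambda \sim \mathcal{N}(0,\sigma^2)$. The computation then reduces to standard Gaussian moment identities, in analogy with how Theorem~\ref{thm:blo-bias} was obtained for the bias.

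First I would fix $\mu := \hat{\Lambda}_\omega(p)$ and $q := \lambda^{-1}(\mu)$, and derive the derivatives of the sigmoid using the identity $\frac{d}{d\mu}\lambda^{-1}(\mu) = q(1-q)$; iteration yields $(\lambda^{-1})''(\mu) = q(1-q)(1-2q)$ and $(\lambda^{-1})'''(\mu) = q(1-q)(1-6q(1-q))$, and analogously for the fourth derivative. Because $\lambda^{-1}$ is smooth with uniformly bounded derivatives on $\mathbb{R}$, Taylor's theorem gives a remainder of order $\epsilon_\lambda^5$ uniformly in $\mu$, which after integration against the Gaussian contributes only $O(\sigma^6)$.

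Second, I would expand
\begin{equation*}
\hat{\pi} = q + a_1 \epsilon_\lambda + a_2 \epsilon_\lambda^2 + a_3 \epsilon_\lambda^3 + a_4 \epsilon_\lambda^4 + O(\epsilon_\lambda^5), \qquad a_k := \frac{(\lambda^{-1})^{(k)}(\mu)}{k!},
\end{equation*}
and compute $\mathbb{E}[\hat{\pi}]$ and $\mathbb{E}[\hat{\pi}^2]$ term by term using $\mathbb{E}[\epsilon_\lambda^{2m+1}] = 0$ and $\mathbb{E}[\epsilon_\lambda^{2m}] = (2m-1)!!\,\sigma^{2m}$. Subtracting $(\mathbb{E}[\hat{\pi}])^2$ from $\mathbb{E}[\hat{\pi}^2]$ yields $\mathrm{Var}(\hat{\pi})$: the $\sigma^2$ coefficient is $a_1^2 = [(\lambda^{-1})'(\mu)]^2 = q^2(1-q)^2$, recovering the standard delta-method leading term, and the $\sigma^4$ coefficient is obtained by collecting the quartic contributions from $\mathbb{E}[\hat{\pi}^2]$ and cancelling the $\sigma^4$ piece $a_2^2 + 6q a_4$ of $(\mathbb{E}[\hat{\pi}])^2$. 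Substituting $a_2 = \tfrac{1}{2}q(1-q)(1-2q)$ then produces the claimed expression $\tfrac{1}{2} q^2(1-q)^2(1-2q)^2$ in terms of the original parameters.

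The main obstacle is the algebraic bookkeeping at order $\sigma^4$: several cross-terms arise via $\mathbb{E}[\epsilon_\lambda^4] = 3\sigma^4$ (namely the coefficients $a_2^2$, $2 a_1 a_3$, and $2q a_4$ in $\mathbb{E}[\hat{\pi}^2]$, against $a_2^2 + 6q a_4$ from $(\mathbb{E}[\hat{\pi}])^2$), and care is required both to verify cancellations and to collect the remaining combination into the single squared-derivative form stated. The $O(\sigma^6)$ remainder is then a routine consequence of Taylor's theorem applied to the smooth sigmoid together with the sub-Gaussian moment bound $\mathbb{E}[|\epsilon_\lambda|^k] = O(\sigma^k)$, so no additional estimation is needed beyond the explicit derivative identities above.
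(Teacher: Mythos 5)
Your route is the same one the paper takes: expand $\lambda^{-1}$ in the Gaussian noise $\epsilon_\lambda$ around $\mu=\hat{\Lambda}_{\omega}(p)$ and evaluate Gaussian moments. The paper simply quotes the resulting formula $\operatorname{Var}\left[f(\mu+\epsilon_\lambda)\right] = (f'(\mu))^2\sigma^2 + \tfrac12 (f''(\mu))^2\sigma^4 + O(\sigma^6)$, whereas you rederive it from the moment identities; your sigmoid derivative identities are correct.

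However, the final collection step you describe does not close. Your own bookkeeping correctly lists the cross-term $2a_1a_3$ among the quartic contributions to $\mathbb{E}[\hat{\pi}^2]$ (weighted by $\mathbb{E}[\epsilon_\lambda^4]=3\sigma^4$), and nothing in $(\mathbb{E}[\hat{\pi}])^2$ cancels it. Carrying the cancellation through gives a $\sigma^4$ coefficient of $2a_2^2 + 6a_1a_3 = \tfrac12 (f''(\mu))^2 + f'(\mu)f'''(\mu)$, i.e., with $q=\lambda^{-1}(\mu)$, the coefficient $\tfrac12 q^2(1-q)^2(1-2q)^2 + q^2(1-q)^2\bigl(1-6q(1-q)\bigr)$. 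The extra term is not identically zero: at $q=1/2$ it equals $-1/32$ while the stated coefficient vanishes. So the ``single squared-derivative form'' cannot be obtained from your expansion; that form corresponds to truncating the Taylor series at second order, which is not accurate to $O(\sigma^6)$ because the third-order term covaries with the first-order term at order $\sigma^4$. The paper's own proof has the same omission---it invokes the second-order delta-method variance formula, which drops exactly this $f'(\mu)f'''(\mu)\sigma^4$ contribution---so an honest completion of your computation would show that the theorem's $\sigma^4$ coefficient needs the additional term. Your leading $\sigma^2$ term and the $O(\sigma^6)$ remainder control are fine as described.
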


\begin{proof}
Consider the estimate as a function of $p$:
    \begin{equation}
    \hat{\pi}(p) = \lambda^{-1}(\hat{\Lambda}_{\omega}(p) + \epsilon_{\lambda})
\end{equation}
Conditioning on $p$, the variance over $\epsilon_\lambda$ is using a Taylor expansion:
\begin{align*}
    \operatorname{Var}\left[\lambda^{-1}(\hat{\Lambda}_{\omega}(p) + \epsilon_{\lambda})\right] =& 
\big(f'(\mu)\big)^{2} \,\sigma^{2}
+ \tfrac{1}{2}\big(f''(\mu)\big)^{2} \,\sigma^{4}
+ O(\sigma^{6}).
\end{align*}
where
\begin{equation}
    f(x) := \lambda^{-1}(\hat{\Lambda}_{\omega}(p) + x)
\end{equation}
Plugging this in, we obtain 
\begin{align*}
    \operatorname{Var}\left[\lambda^{-1}(\hat{\Lambda}_{\omega}(p) + \epsilon_{\lambda})\right] =& 
 \lambda^{-1}(\hat{\Lambda}_{\omega}(p))^2 (1-\lambda^{-1}(\hat{\Lambda}_{\omega}(p))^{2} \,\sigma^{2} \\
&+ \tfrac{1}{2}  \lambda^{-1}(\hat{\Lambda}_{\omega}(p))^2 (1- \lambda^{-1}(\hat{\Lambda}_{\omega}(p)))^2 (1-2 \lambda^{-1}(\hat{\Lambda}_{\omega}(p)))^2 \,\sigma^{4} \\
&+ O(\sigma^{6}).
\end{align*}
\end{proof}




\subsection{Bayesian Models with Log-Odds Encoding}\label{sec:app:log-odds} 
The Bayesian model can accommodate a general mapping $F$ and we will be able to infer it from data.
Here, we justify the popular choice of a log-odds mapping (as assumed by  \cite{khaw2021cognitive,Khaw2022StateDependent,zhang2020}), and how this leads to a reinterpretation of BLO as an approximation of the Bayesian model. 
Suppose that observers encode positive counts ($\kappa_+$) and negative counts ($\kappa_-$):
\begin{equation}
    m = \left(\begin{matrix} F_{num}(\kappa_+) + \epsilon_1 \\ F_{num}(\kappa_-) + \epsilon_2 \end{matrix} \right)\ \ \ \ \ \  \ \ \ \ \ \epsilon_1, \epsilon_2 \sim \mathcal{N}(0,\sigma^2)
\end{equation}
where  $p = \frac{\kappa_+}{\kappa_++\kappa_-}$.
Based on research on magnitude perception, we take $F_{num}$ to be consistent with Weber's law by assuming the form \citep[e.g.][]{petzschner2011iterative,dehaene2003neural}:
\begin{align*}
    F_{num}(\kappa) =& \log(\kappa+\alpha) & \text{hence\ \ \ \ \ \ \ \ \ \ } m =& \left(\begin{matrix} \log(pN+\alpha) + \epsilon_1 \\ \log((1-p)N+\alpha)+ \epsilon_2 \end{matrix} \right)
\end{align*}
where $\alpha > 0$ prevents an infinite FI at zero \citep{petzschner2011iterative}, and $N = \kappa_++\kappa_-$.
What is an optimal 1D encoding $m_{1D} \in \mathbb{R}$ of the rate $p = \frac{\kappa_+}{\kappa_++\kappa_-}$? We focus on linear encodings with coefficients $w_1, w_2$:
\begin{equation}
 m_{1D} :=   w_1  (\log(pN+\alpha) + \epsilon_1) + w_2 (\log((1-p)N+\alpha)+ \epsilon_2)
\end{equation}
This uniquely encodes $p$ only if $w_1w_2 < 0$ because of monotonicity.
Symmetry thus suggests $w_1=-w_2$, equivalent to a log-odds encoding smoothed at the boundaries ($\beta>0$):
\begin{equation}\label{eq:f-bounded-log-odds}
    F(p) := \log\frac{p+\beta}{(1-p)+\beta} \ \ \ \ \ \ \sqrt{\mathcal{J}(p)} = \frac{1}{\sigma} \cdot \left( \frac{1}{p+\beta} + \frac{1}{(1-p)+\beta} \right)
\end{equation}

where, for fixed $N$, we absorbed $N$ into $\beta = \alpha/N$. 
Note that this $\mathcal{J}(p)$ is U-shaped as in Figure~\ref{fig:conceptual}A.
This form for $F$ allows us to interpret BLO as an approximation to the Bayesian model.
First, in the limit where $\beta \rightarrow 0$ (i.e., unbounded log-odds encoding), the Bayesian bias comes out to (see Corollary~\ref{eq:bayesian-bias-log-odds} in Appendix~\ref{app:proofs-bayesian}):
    \begin{equation}\label{eq:bayesian-bias-log-odds}
        \mathbb{E}[\hat{p}|p]-p=
   \underbrace{\sigma^2 p^2 (1-p)^2 \left(\log P_{{prior}}(p)\right)^{\prime}}_{\text{Attraction}} +
 \underbrace{2 \sigma^2 p (1-p)(1-2p)}_{\text{Repulsion}} + \mathcal{O}(\sigma^4)
    \end{equation}
    The first term depends on the prior distribution.
    As expected, the second term describes an S-shaped bias (Figure~\ref{fig:conceptual}B). 
Indeed, when noise parameters are small, the BLO model matches the bias of this Bayesian model with a specific unimodal prior (Proof in Appendix~\ref{app:proofs-blo}):
\begin{thm}\label{thm:blo-bayesian}
    The BLO model in the limit of untruncated log-odds ($\Delta_+ \rightarrow \infty, \Delta_- \rightarrow -\infty$), and a Bayesian model with unbounded log-odds encoding ($\beta = 0$) and a specific unimodal prior $P_{prior}(p)$ (depending on $\Lambda_0$) have the same bias up to difference $\mathcal{O}(\kappa^2 +  \sigma^4)$. 
\end{thm}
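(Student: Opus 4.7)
The plan is to Taylor-expand the BLO bias in its two small parameters $\kappa$ and $\sigma_\lambda$, match it term-by-term against the Bayesian bias given in Corollary~\ref{eq:bayesian-bias-log-odds}, and read off the required prior. In the untruncated limit $\Delta_-\to-\infty$, $\Delta_+\to\infty$, the clip $\Gamma$ is inactive and the linear rescaling in \eqref{eq:blo-second-map} reduces to $\Lambda(p)=\lambda(p)$ (under the natural $\Psi=(\Delta_+-\Delta_-)/2$ normalization), so the BLO estimate becomes $\hat{\pi}(p)=\lambda^{-1}\bigl(\omega_p\lambda(p)+(1-\omega_p)\Lambda_0+\epsilon_\lambda\bigr)$ with $\omega_p=1/(1+\kappa p(1-p))$ and $\epsilon_\lambda\sim\mathcal{N}(0,\sigma_\lambda^2)$.

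First I would expand $\omega_p=1-\kappa p(1-p)+O(\kappa^2)$ so that the pre-noise mean is $\mu(p):=\omega_p\lambda(p)+(1-\omega_p)\Lambda_0=\lambda(p)-\kappa p(1-p)(\lambda(p)-\Lambda_0)+O(\kappa^2)$. Then, taking expectation over $\epsilon_\lambda$ by second-order Taylor expansion of $\lambda^{-1}$ about $\mu(p)$ and using the identities $(\lambda^{-1})'(\lambda(p))=p(1-p)$ and $(\lambda^{-1})''(\lambda(p))=p(1-p)(1-2p)$, the BLO bias reduces to
\begin{equation*}
\mathbb{E}[\hat{\pi}(p)]-p=-\kappa\, p^2(1-p)^2\bigl(\lambda(p)-\Lambda_0\bigr)+\tfrac{\sigma_\lambda^2}{2}\, p(1-p)(1-2p)+O(\kappa^2+\sigma_\lambda^4).
\end{equation*}
Comparing against the Bayesian bias $\sigma^2 p^2(1-p)^2(\log P_{prior}(p))'+2\sigma^2 p(1-p)(1-2p)+O(\sigma^4)$ of Corollary~\ref{eq:bayesian-bias-log-odds}, the Repulsion terms agree once $\sigma_\lambda^2=4\sigma^2$, and the Attraction terms agree iff $(\log P_{prior}(p))'=-\tfrac{\kappa}{\sigma^2}(\lambda(p)-\Lambda_0)$. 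Using $\int\lambda(p)\,dp=p\log p+(1-p)\log(1-p)+C$, integration yields the explicit prior $\log P_{prior}(p)=\tfrac{\kappa}{\sigma^2}[\Lambda_0 p-p\log p-(1-p)\log(1-p)]+\text{const}$. Unimodality then follows from $(\log P_{prior})''(p)=-\tfrac{\kappa}{\sigma^2}\cdot\tfrac{1}{p(1-p)}<0$: the prior is strictly log-concave with a unique mode at $p^\star=\lambda^{-1}(\Lambda_0)$.

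The main obstacle I anticipate is bookkeeping the error terms cleanly across two small parameters. The BLO bias is expanded in independent $\kappa$ and $\sigma_\lambda$, whereas Corollary~\ref{eq:bayesian-bias-log-odds} expands in the single parameter $\sigma$; the identifications $\sigma_\lambda^2=4\sigma^2$ and $(\log P_{prior})'\sim\kappa/\sigma^2$ must be carried jointly so that cross-terms such as $\kappa\sigma_\lambda^2$ and the prior-dependent $O(\sigma^4)$ remainder (which scales like $\sigma^4\cdot(\kappa/\sigma^2)^2=\kappa^2$) stay inside the claimed $O(\kappa^2+\sigma^4)$ bound, rather than a na\"ive iterated expansion. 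A secondary subtlety is invoking Corollary~\ref{eq:bayesian-bias-log-odds} at $\beta=0$, where $F=\lambda$ has infinite endpoints; this is precisely the regime covered by the second clause of Theorem~\ref{thm:bayesian-bias}, in which the boundary coefficients $A_{\cdot,\sigma}$ vanish identically, so the only surviving $O(\sigma^2)$ contributions are the Attraction and Repulsion terms matched above.
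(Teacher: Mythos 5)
Your proposal is correct and follows essentially the same route as the paper's proof: expand the BLO bias to first order in $\kappa$ and second order in the log-odds noise, match the Repulsion and Attraction terms against Corollary~\ref{cor:log-odds-encoding} under the identifications $\sigma_\lambda^2 = 4\sigma^2$ and $\kappa \propto \sigma^2$, and conclude unimodality from the fact that $(\log P_{prior})' \propto \Lambda_0 - \lambda(p)$ is monotonically decreasing. Your version is in fact slightly more explicit than the paper's (which invokes Theorem~\ref{thm:blo-bias} rather than re-deriving the expansion, and stops at monotonicity of the log-derivative): you integrate to obtain the closed-form log-concave prior with mode at $\lambda^{-1}(\Lambda_0)$ and you flag the $\Psi = (\Delta_+-\Delta_-)/2$ normalization and the $\kappa\sigma_\lambda^2$ cross-terms, both of which the paper leaves implicit.
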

This result suggests that one can reinterpret BLO as an approximation of a Bayesian model with log-odds encoding and a unimodal prior, albeit with divergences in the response distribution due to its sub-optimal decoding process.
Our empirical results detailed below support this view, showing that the Bayesian model consistently provides better fit to behavior compared to the BLO model.

\subsection{FI for Efficient Code in \cite{frydman2023source}}\label{sec:frydman-jin-fi}

\citet{frydman2023source} take the encoding to be given by the code derived in \cite{heng2020efficient}, which is defined via:
\begin{equation}
    \theta(p) = \sin^2\left(\frac{\pi}{2} F(p)\right)
\end{equation}
where $F(p)$ is the cumulative distribution function of the prior, whose density we denote $f(p)$.
We now note, for $a(p) := \frac{\pi}{2} F(p)$:
\begin{equation}
    \theta'(p) = 2 \sin a \cos a \cdot a'(p) = \sin(2a) \cdot a'(p)
\end{equation}
and hence $a'(p) = \frac{\pi}{2} f(p)$.
Now for a code given by $m \sim \operatorname{Binomial}(n, \theta(p))$, the Fisher Information is
\begin{align*}
    \mathcal{J}(p) =&  \frac{n [\theta'(p)]^2}{\theta(p) (1-\theta(p))} \\
    =&  \frac{n \sin^2(2a) [a'(p)]^2}{\sin^2(a) (1-\sin^2(a))} \\
    =&  \frac{n \sin^2(2a) [a'(p)]^2}{\sin^2(a) \cos^2(a)} \\
    =& \frac{n \sin^2(2a) [a'(p)]^2}{\frac{1}{4} \sin^2(2a)} \\
    =& 4n [a'(p)]^2 \\
    =&  4n(\frac{\pi}{2} f(p))^2 \\
    \propto & f(p)^2
\end{align*}
We thus obtain $\sqrt{\mathcal{J}(p)} \propto p_{prior}(p)$.


\section{Experimental Methods and Supplementary Results}\label{app:whole-experimental-details-app}
\subsection{Theoretical Mapping of Model Variants}

\begin{table}[H]
\caption{Correspondence between existing theoretical accounts and the specific model variants used in our empirical comparison.}
\label{sample-table}
\begin{center}
\begin{tabular}{ll}
\multicolumn{1}{c}{\bf Concept}  &\multicolumn{1}{c}{\bf Corresponding Model in Our Comparison}
\\ \hline \\
Fixed probability distortion function&LILO, Prelec \\
Log-odds based accounts&BLO\\
&Bayesian model with bounded log-odds encoding\\
Regression based accounts   &Bayesian model with uniform encoding \\
Efficient-coding &Bayesian model with prior-matched encoding\\
General Bayesian account &Bayesian model with freely fitted prior and encoding\\
\label{tab:model-mapping}
\end{tabular}
\end{center}
\end{table}

\subsection{Details on Task and Datasets}\label{app:details-task-datasets}

\subsubsection{Judgment of Relative Frequency (JRF) Task in  Section~\ref{sec:jrf}}

For this task, we analyzed the publicly available dataset from \cite{zhang2020} and used \cite{ZM2012} . On each trial, subjects were briefly shown an array of black and white dots and reported their estimate of the relative frequency of one color by clicking on a horizontal scale.

We used two datasets that differ in the granularity of the stimulus proportions:

\begin{itemize}
     \item \textbf{\cite{zhang2020} Dataset (JD, including JDA and JDB):}
     \begin{itemize}
         \item \textbf{subjects:} A total of 75 subjects were divided into two groups: 51 subjects who completed 660 trials (JDA) and 24 who completed 330 trials (JDB).
         \item \textbf{Stimuli:} Coarse-grained. The objective relative frequency was drawn from the 11 discrete probability levels: {0.01, 0.05, 0.1, 0.25, 0.4, 0.5, 0.6, 0.75, 0.9, 0.95, 0.99}.
     \end{itemize}
     \item \textbf{\cite{ZM2012} Dataset (ZM12):}
     \begin{itemize}
         \item \textbf{subjects: } 11 subjects who completed 800 trials.
         \item \textbf{Stimuli:} Fine-grained. Stimulus proportions consisted of 99 levels, ranging from 0.01 to 0.99.
     \end{itemize}
\end{itemize}
   
In both dataset, the total number of dots on any given trial was one of five values: 200, 300, 400, 500, or 600. For all datasets, each trial included the true stimulus proportion, the subject's estimate, total number of dots and other information, such as reaction time, that we didn't use for model fitting.We did not apply any data preprocessing.

\subsubsection{Pricing Task in Decision-making Under Risk (DMR) in Section~\ref{sec:dmr}}\label{app:details-dmr-pricing}
For this task, we analyzed the publicly available dataset from \cite{zhang2020}. This task used the same procedure and design as \cite{gonzalez1999shape}. On each trial, subjects were presented with a two-outcome monetary gamble (e.g., a 50\% chance to win \$100 or \$0 otherwise) and a table of sure amounts. subjects made a series of choices between the gamble and the sure amounts, a process that used a sequential bisection method to narrow the range and determine their Certainty Equivalent (CE) for the gamble. 
\begin{itemize}
    \item \textbf{subjects:} The dataset comprises responses from the same 75 subjects who participated in the JD dataset in JRF task described above. 51 subjects from JDA performed 330 trials each and 24 subjects from JDB performed 165 trials each.
    \item \textbf{Stimuli:} The experiment used 15 distinct pairs of \textbf{non-negative} outcomes (e.g., \$25 vs. \$0; \$100 vs. \$50; \$800 vs. \$0). These were crossed with the same 11 probability levels from the JRF task for the higher outcome, resulting in 165 unique gambles.
    \item \textbf{Data:} Each recorded data point included the gamble's two outcomes, the probability of the higher outcome, the subject's final determined CE and other information that we didn't use for model fitting.
\end{itemize}

We did not apply any data preprocessing.

\subsubsection{Choice Task in Decision-making Under Risk (DMR) in Section~\ref{sec:dmr}}\label{app:details-dmr-choice}
For this task, we analyzed the publicly available dataset from the Choice Prediction Competition 2015 (CPC15). This study was designed to test and compare models on their ability to predict choices between gambles that elicit classic decision-making anomalies. On each trial, subjects were presented with two or more distinct monetary gambles and made a one-shot choice indicating their preference.

The full CPC15 dataset contains a wide range of gamble types, including ambiguous gambles (Amb=1), gambles with correlated outcomes (Corr $\neq$ 0), and multi-outcome lotteries (LotNum $>$ 1). For the purpose of our analysis, which requires two-outcome gambles with fully specified probabilities and independent payoffs, we applied the following filters:

\begin{itemize}
\item Amb = 0: excluded ambiguous gambles, i.e., those with probabilities not explicitly described to subjects.
\item Corr = 0: excluded gambles with correlated payoffs across options.
\item LotNum = 1: restricted to gambles with exactly two outcomes in each option.
\end{itemize}

This gives us a subset of the CPC15 dataset with following information:

\begin{itemize}
    \item \textbf{subjects: }The subset contains responses from 153 subjects from the competition's estimation and test sets.
    \item \textbf{Stimuli: }The gambles covered 14 different behavioral phenomena, including the Allais and Ellsberg paradoxes. Unlike the pricing task, these gambles included both \textbf{gains and losses}, and the probabilities were drawn from a larger set of levels. 
    \item \textbf{Data: } Each data point recorded the two gambles presented, the subject's choice, and information we didn't use for model fitting.
\end{itemize}

\subsection{Details for Adaptation Experiment in Section~\ref{sec:empirical-validation-adaptation}}\label{app:experiment-adaptation}

This experiment used the JRF dot-counting paradigm, but critically, the distribution of stimulus proportions was manipulated to be 
bimodal to test the model's ability to adapt its prior. Data was collected on an online platform following the procedure in \cite{zhang2020}.

The bimodal dataset comprises responses from \textbf{26 subjects} across several designs to assess adaptation:
\begin{itemize}
    \item 5 subjects performed 740 trials following from a bimodal distribution.
    \item 13 subjects performed an initial 740 trials following a bimodal distribution, followed by uniform trials to assess after-effects. In the 13 subjects, 7 subjects performed 840 trials(740 bimodal trials followed by 100 uniform trials), and 6 subjects performed 1136 trials(740 bimodal trials followed by 396 uniform trials).
    \item 8 subjects performed 942 bimodal trials interspersed with 198 uniform trials.
\end{itemize}

For all subjects in our collected datasets, in addition to the true number of black and white dots, the color designated for estimation, the subject's estimated proportion, and the reaction time, we also varies and logged the display time for each trial. 

We removed estimated proportions of 0 and 100.

\subsubsection{Recruitment and Task instructions}
\label{app:experiment-adaptation-details}

\textbf{Recruitment and compensation.} Participants were recruited online via Prolific (\url{prolific.co}), with eligibility limited to adults (18+) with normal or corrected-to-normal vision. Recruitment complied with Prolific's policies. The experiment was expected to take less than 120 minutes; payment was prorated by expected completion time to exceed the German minimum wage. Pay and time estimates were adjusted during recruitment if observed completion time deviated from expectation. Subject identities were stored as anonymized numerical identifiers (subject 1, subject 2, ...).

\textbf{Task instructions.} Participants were shown the following clarification at the start of the experiment:

\begin{quote}
``Thank you for participating in this study. In this experiment, you will be estimating the relative frequency of visual elements (black and white dots). You must be at least 18 years of age and legally able to consent to your participation in order to take part in this study.

You will be presented with a display of black and white dots. The display will disappear after a short duration. After viewing the display, you will be asked to estimate the proportion of the specified color of dots by selecting a value on a horizontal bar ranging from 0\% to 100\%.

The entire experiment is expected to take less than 120 minutes. If you are ready to begin, please turn off any distractions such as music or television, and click below to begin. Thank you!''
\end{quote}

\textbf{Informed consent.} Participants provided informed consent before beginning the study:

\begin{quote}
``Your participation in this experiment is voluntary. You have the right to withdraw your consent or discontinue participation at any time without penalty or loss of benefits to which you are otherwise entitled. All data resulting from this study will be stored in anonymized form. The results of the experiment may, in anonymized form, be used and published for scientific purposes. By clicking on `Continue', you declare that you agree to voluntarily participate in this experiment.''
\end{quote}

\textbf{Optional post-task questionnaire.} At the end of the experiment, participants were asked optional questions about (1) task comprehension, (2) study experience, (3) appropriate payment, and (4) free-text feedback. No personally identifiable information was collected.

\textbf{Ethics approval.} The experimental design was approved by the [ANONYMIZED] Ethical Review Board.

\subsection{Fitting models to Datasets}\label{app:fitting-models}
Across all datasets, we ran all Bayesian models using the implementation from \cite{hahn2024unifying, Hahn2025Identifiability}, using a grid of 200 points and regularization strength 1.0.

For all the Bayesian models, as well as the reimplemented \cite{zhang2020}'s models, the optimization was performed using a gradient-based approach(Adam or SignSGD).

All fits run on a single consumer-grade GPU (e.g., NVIDIA Titan X) and are also feasible on CPU. Per-subject fit time varies by dataset and model variant, ranging from minutes to hours. Total compute across all paradigms (JRF, lottery pricing, risky choice, bimodal adaptation) and model variants is on the order of GPU-days. Fits are independent across subjects and trivially parallelizable across compute nodes. Reproduction does not require specialized compute infrastructure.

\subsubsection{Details on Model Fit Metrics}\label{app:details-model-fit-metrics}

To compare the performance of all model variants, we use two primary evaluation metrics: summed Heldout $\Delta$NLL and summed $\Delta$AICc.

\paragraph{Summed $\Delta$AICc}\label{app:metric-aicc}

This is the metric used in \cite{zhang2020}. It assesses overall model quality by balancing fit and simplicity:
\begin{enumerate}
    \item For each subject, a model is fit on all trials, yielding the final NLL and the number of free parameters $k$.
    \item We compute the corrected Akaike Information Criterion:
    \begin{equation}
        \text{AICc} = 2\,\text{NLL} + 2k + \frac{2k(k+1)}{n - k - 1}
    \end{equation}
    where $n$ is the number of trials per subject.
    \item For each subject, $\Delta\text{AICc}$ is a given model's AICc minus the lowest AICc across the model set. The Summed $\Delta\text{AICc}$ is the sum of these values across all subjects.
\end{enumerate}

\paragraph{Counting parameters for nonparametric Bayesian models.}
For the nonparametric Bayesian models (FreeE, FreeP), defining the effective number of parameters is non-trivial: regularization shrinks the effective degrees of freedom below the raw grid size. To provide a lower bound on model performance, we report $\Delta\text{AICc}$ using the \textbf{upper bound on parameter count}, i.e., counting every grid point as a free parameter.

For the DMR Pricing task, the trial-by-trial implied weights $\hat{\pi}_{\text{implied},t}$ from Stage 1 (Section~\ref{app:details-bayesian-dmrpricing-models}) are treated as derived targets rather than model parameters, consistent with their role as a non-parametric re-representation of the CE data used for Stage 2 fitting \citep{zhang2020}.

The three noise parameters appearing across tasks correspond to: sensory noise variance, motor variance, and a mixture logit for the guessing rate. Parameter counts for each model variant are listed in Tables~\ref{tab:aicc-params-jrf}--\ref{tab:aicc-params-choice}.

\begin{table}[h]
  \caption{Upper-bound parameter counts for nonparametric Bayesian models on the JRF and Adaptation tasks. Both prior and encoding distributions are represented on 200-point grids.}
  \label{tab:aicc-params-jrf}
  \centering
  \begin{tabular}{lcccc}
    \toprule
    Model & $k$ & Encoding & Prior & Noise \\
    \midrule
    FreeP, FreeE          & 403 & 200          & 200 & 3 \\
    FreeP, UniformE       & 203 & 0            & 200 & 3 \\
    FreeP, BoundedLOE     & 205 & 2 (bounds)   & 200 & 3 \\
    FreeP, PriorMatchedE  & 203 & 0 (tied to prior) & 200 & 3 \\
    \bottomrule
  \end{tabular}
\end{table}

\begin{table}[h]
  \caption{Upper-bound parameter counts for nonparametric Bayesian models on the DMR Pricing task. Stage 2 fits encoding and prior on implied weights with 3 noise parameters; Stage 3 fits the utility exponent $\alpha$ and 2 CE-level noise parameters on the original CE data.}
  \label{tab:aicc-params-pricing}
  \centering
  \begin{tabular}{lccccc}
    \toprule
    Model & $k$ & Enc. & Prior & Stage2 noise & Stage3 ($\alpha$ + CE noise) \\
    \midrule
    FreeP, FreeE          & 406 & 200            & 200 & 3 & $1 + 2$ \\
    FreeP, UniformE       & 206 & 0              & 200 & 3 & $1 + 2$ \\
    FreeP, BoundedLOE     & 208 & 2 (bounds)     & 200 & 3 & $1 + 2$ \\
    FreeP, PriorMatchedE  & 206 & 0 (tied to prior) & 200 & 3 & $1 + 2$ \\
    \bottomrule
  \end{tabular}
\end{table}

\begin{table}[H]
  \caption{Upper-bound parameter counts for nonparametric Bayesian models on the DMR Choice task. Separate 200-point prior grids are fit for the gain and loss domains. The value function contributes 3 parameters and the softmax choice rule contributes 1 temperature parameter. For PriorMatchedE, encoding is tied to the prior separately in each domain, requiring one matching parameter per domain (hence 2 encoding-noise parameters).}
  \label{tab:aicc-params-choice}
  \centering
  \begin{tabular}{lccccccc}
    \toprule
    Model & $k$ & Enc. & Prior(gain) & Prior(loss) & noise & Value fn & Temp \\
    \midrule
    FreeP, FreeE          & 605 & 200          & 200 & 200 & 1 & 3 & 1 \\
    FreeP, UniformE       & 405 & 0            & 200 & 200 & 1 & 3 & 1 \\
    FreeP, BoundedLOE     & 407 & 2   & 200 & 200 & 1 & 3 & 1 \\
    FreeP, PriorMatchedE  & 406 & 0      & 200 & 200 & 2 & 3 & 1 \\
    \bottomrule
  \end{tabular}
\end{table}

\paragraph{Summed Heldout $\Delta$NLL}
This metric measures a model's generalization performance, without penalizing for model complexity. The procedure of using this metric is as follows:
\begin{enumerate}
    \item For each subject, the data is partitioned into a training set (9 out of 10 folds) and a held-out test set (the remaining 1 fold). A model is trained only on the training set.
    \item The trained model is measured by calculating its Negative Log-Likelihood (NLL) on the held-out test set. A lower NLL indicates better predictions.
    \item To compare models for that subject, we find the model with the lowest NLL (the best model). The $\Delta$NLL for any other model is its NLL minus the best model's NLL. The Summed Held-out $\Delta$NLL is the total of these individual $\Delta$NLL scores across all subjects.
\end{enumerate}

\subsubsection{Details on Categorical and Analytical Fitting}~\label{app:analytical-categorical-fitting}
When modeling motor noise, there are two main ways to compute the likelihood of an observed value given the model-predicted $\hat{\theta}_m$. One way is using treat the response value as continuous, we refer to this way as analytical; the other way involves discretizing bins, which is referred to as categorical. In JRF and adaption data, we applied both versions in modelling motor noise in the proportion responses, and in DMR pricing task, we applied the same idea to the Certainty Equivalent (CE) responses.
\paragraph{JRF and Adaptation Task}
\textbf{Analytical Version.}
We treat responses as continuous and assume they are drawn from a Gaussian centered on the Bayesian estimate $\hat{p}(m)$ with variance $\sigma_{\text{motor}}^2$. Because responses are bounded by the grid $[r_{\min}, r_{\max}]$, the Gaussian is truncated and normalized using the corresponding CDF values. This gives the exact continuous likelihood, though it can be numerically unstable when the motor variance is very small. Finally, we mix this motor likelihood with a uniform component to account for guessing:

\begin{align*}
Z &= \Phi\left(\tfrac{p_{\max}-\hat{p}(m)}{\sigma_{\text{motor}}}\right) - \Phi\left(\tfrac{p_{\min}-\hat{p}(m)}{\sigma_{\text{motor}}}\right), \\
P(p_{\text{obs}}\mid m) &= \frac{1}{Z}\frac{1}{\sqrt{2\pi\sigma^2_{\text{motor}}}}
\exp\Big(-\tfrac{(p_{\text{obs}}-\hat{p}(m))^2}{2\sigma^2_{\text{motor}}}\Big), \\
P_{\text{mix}}(p_{\text{obs}}\mid m) &= (1-u)P(p_{\text{obs}}\mid m) + u.
\end{align*}

\textbf{Categorical Version.}
Here we discretize the response axis into bins $\{c_j\}_{j=1}^J$, compute a categorical distribution over bins for each $\hat{p}(m)$, and assign each observed response to its nearest bin. With a fine grid and the bin-width correction, this converges to the analytical solution, but remains stable at very small motor variance:

\begin{align*}
\log P(p_j \mid m) &= \text{logsoftmax}\left(-\tfrac{(p_j - \hat{p}(m))^2}{2\sigma^2_{\text{motor}}}\right), \\
j^{(p_{\text{obs}})} &= \arg\min_j |p_{\text{obs}} - p_j|, \\
\log \tilde{P}(p_{\text{obs}}\mid m) &= \log P(p_{j}\mid m) - \log \Delta c, \\
P_{\text{mix}}(p_{\text{obs}}\mid m) &= (1-u)\tilde{P}(p_{\text{obs}}\mid m) + u \cdot \tfrac{1}{J}.
\end{align*}

In this case, the number of bins is the same the number of the grid size we discretize the input stimuli. We used 200 for the JRF and Adapation datasets.

\paragraph{DMR Pricing Task}
\textbf{Analytical Version.} We treat the CE report as a continuous variable and assume it is drawn from a Gaussian centered on the model prediction $\mu_m$ with variance $\sigma_{\text{motor}}^2$. The likelihood of an observed CE is given directly by this Gaussian density. This provides the exact continuous likelihood, though it can become numerically unstable when $\sigma_{\text{motor}}^2$ is very small. As before, we mix this motor likelihood with a continuous uniform distribution to account for guessing:

\begin{align*}
P(\text{CE}_{\text{obs}}\mid m) &= \exp\left(-\tfrac{1}{2}\left[\tfrac{(\text{CE}_{\text{obs}}-\mu_m)^2}{\sigma^2_{\text{motor}}} + \log\left(2\pi\sigma^2_{\text{motor}}\right)\right]\right) \\
P_{\text{mix}}(\text{CE}_{\text{obs}} \mid m) &= (1-u)\cdot P(\text{CE}_{\text{obs}}\mid m) + u \cdot \frac{1}{\text{CE}_{\max}-\text{CE}_{\min}}.
\end{align*}

\textbf{Categorical Version.} Here we discretize the response axis into bins$ \{c_j\}_{j=1}^J$, model a categorical distribution over bins for each m, and then select the probability of the observed bin. With a fine grid and the -$\log\Delta c$ correction, the categorical method converges to the analytical one:

\begin{align*}
\log P(\text{CE}_j \mid m) &= \text{logsoftmax}\left(-\tfrac{(\hat{p}_m - \text{CE}_j)^2}{2\sigma^2_{\text{motor}}}\right) \\
j^{(\text{CE}_{\text{obs}})} &= \arg\min_j |\text{CE}_{\text{obs}} - \text{CE}_j| \\
\log \tilde{P}(\text{CE}_{\text{obs}} \mid m) &= \log P(\text{CE}_{j} \mid m) - \log \Delta c \\
P_{\text{mix}}(\text{CE}_{\text{obs}} \mid m) &= (1-u)\tilde{P}(\text{CE}_{\text{obs}}\mid m) + u \cdot \tfrac{1}{J}.
\end{align*}

In \cite{zhang2020}'s dataset, $\text{CE}_{\max}=800$ and $\text{CE}_{\min}=0$. We use 1000 grid size for the categorical version.

\subsection{Details on Bayesian Model for JRF Task}\label{app:details-bayesian-jrf}

\subsubsection{Bayesian Model Variants}\label{app:details-bayesian-jrf-models}

We tested several variants of our Bayesian framework by combining different priors and encodings; some are discussed in Section~\ref{sec:jrf}:
\begin{itemize}
    \item \textbf{Priors:}
    \begin{itemize}
        \item \textbf{Uniform Prior (UniformP):} This variant assumes a uniform distribution of prior over the range of possible stimuli(i.e., across all grid points). There is no learnable parameter.
        \item \textbf{Gaussian Prior (GaussianP): }This variant assumes a Gaussian distribution over the range of possible grid points. The two fitted parameters are Gaussian mean and Gaussian standard deviation.
        \item \textbf{Freely Fitted Prior (FreeP):} In this variant, all values of the prior distribution across 200 grid points are treated as trainable parameters. While this allows the model maximal flexibility to fit the data, the large number of trainable parameters can make the calculation of summed $\Delta$AICc challenging. There are 200 freely fitted parameters.
    \end{itemize}
\end{itemize}
\textbf{Priors}
\begin{itemize}

\item \textbf{Encodings}
\begin{itemize}
    \item \textbf{Uniform Encoding (UniformE):} This encoding assumes a uniform distribution of encoding over the range of stimuli. There is no fitted parameter.
    \item \textbf{Fixed Unbounded Log-Odds Encoding (UnboundedLOE):} The encoding is proportional to the log-odds of the stimuli(grid values). This is consistent with the log-odds assumption underlying Zhang's unbounded log-odds models. There is no fitted parameter.
    \item \textbf{Bounded Log-Odds Encoding (BoundedLOE):} This encoding is given by $F'(p) = \frac{1}{(p + \beta_1)(1 - p + \beta_2)}$\footnote{Note that this is equivalent to (\ref{eq:f-bounded-log-odds}) in the setting where different $\beta$'s, written here as $\beta_1, \beta_2$, are allowed for positive and negative  counts, up to an (irrelevant, as it doesn't depend on $p$) proportionality constant: 
    \begin{equation}
        \frac{1}{\sigma} \left(\frac{1}{p+\beta_1} + \frac{1}{(1-p)+\beta_2}\right) \propto \frac{1+\beta_2 +\beta_1}{(p+\beta_1)(1-p+\beta_2)} \propto \frac{1}{(p + \beta_1)(1 - p + \beta_2)}
    \end{equation}}, where $x$ is the value of grid, and $\beta_1$ and $\beta_2$ are small, positive, learnable parameters between 0 and 1, which we bound using the sigmoid function. This form is motivated by Zhang's BLO models and our discussion in Section~\ref{sec:log-odds}. There are two fitted parameters: $\beta_1$ and $\beta_2$.
    \item \textbf{Prior-matched Encoding(PriorMatchedE): }This encoding assumes that the resources is proportional to the prior distribution, i.e., the encoding density is identical to the prior. There are no additional fitted parameters beyond those of the prior.   
    \item \textbf{Freely Fitted Encoding (FreeE):} Similar to freely fitted prior. There are 200 freely fitted parameters.
\end{itemize}
\end{itemize}

Parameters for each Bayesian model variant, including prior parameters, encoding parameters, sensory noise variance, motor variance and mixture logit are optimized against the subject-level data using the same gradient based method described for BLO models on the same task.

\subsubsection{Performance Comparison of All Model variants}\label{app:details-bayesian-jrf-performance}
In this part, we show the performance of model variants on both evaluation metrics(Summed Heldout $\Delta$NLL and $\Delta$AICc) and with both analytical and categorical fitting methods in Figure\ref{fig:jrf-dnll-full-both-analytical-categorical} and ~\ref{fig:jrf-daicc-full-both-analytical-categorical}\footnote{Note that we report the Summed $\Delta$AICc metric for both parametric and nonparametric model variants. For  nonparametric models, we use the upper bound as described in Appendix~\ref{app:details-model-fit-metrics}.}. The Bayesian model variants with red color in the figures are detailed in Appendix Section~\ref{app:details-bayesian-jrf-models}. The model variants from \cite{zhang2020} with blue color in the figures are detailed in Appendix Section~\ref{app:details-blo-jrf-models}.

Overall, our model performance better than \cite{zhang2020}'s model variants across two metrics and two fitting approaches. Within Bayesian models, models using a bounded log-odds encoding outperform those with an unbounded encoding, and a Gaussian prior is superior to a uniform prior. Within \cite{zhang2020}'s model variants, bounded model performs better than unbounded model, and using $V(p)$(explained in Eq~\ref{eq:vp}) is better than assuming a constant value $V$.

\begin{figure}[H]
    \centering
    \begin{subfigure}{0.48\linewidth}
        \centering
        \includegraphics[width=\linewidth]{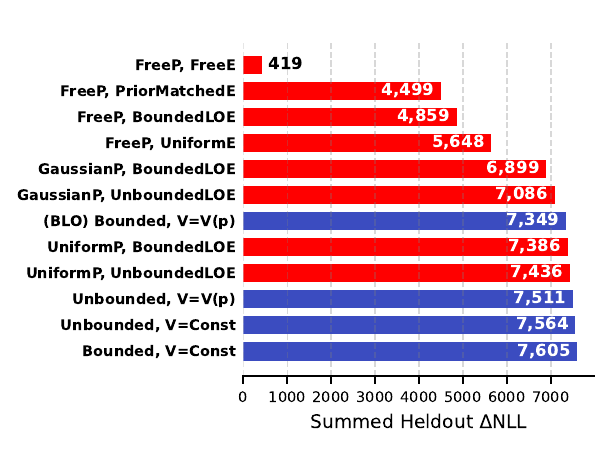}
        \caption{Analytical}
        \label{fig:jrf-dnll-analytical}
    \end{subfigure}
    \hfill
    \begin{subfigure}{0.48\linewidth}
        \centering
        \includegraphics[width=\linewidth]{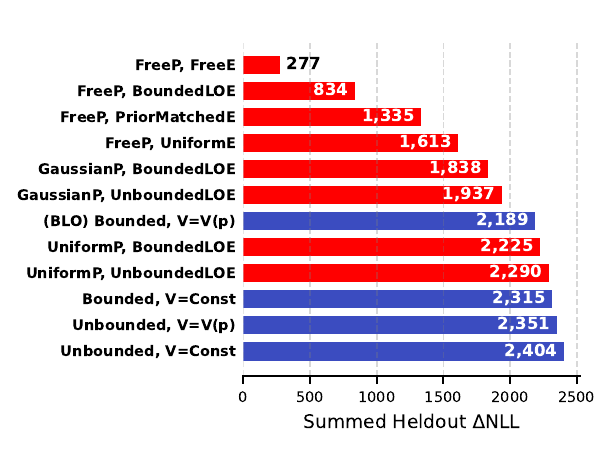}
        \caption{Categorical}
        \label{fig:jrf-dnll-categorical}
    \end{subfigure}
    \caption{\textbf{JRF Task:} Performance of the models, measured by the Summed Heldout $\Delta$NLL metric. See Appendix~\ref{app:details-bayesian-jrf-models} for the shorthands for Bayesian models (red). We refer to \cite{zhang2020} for the shorthands for their model variants (blue). Analytical and categorical fitting methods are explained in Appendix~\ref{app:analytical-categorical-fitting}. The results shown in the main text correspond to Analytical. This result is mentioned in main text section~\ref{sec:jrf}}
    \label{fig:jrf-dnll-full-both-analytical-categorical}
\end{figure}

\begin{figure}[H]
    \centering
    \begin{subfigure}{0.48\linewidth}
        \centering
        \includegraphics[width=\linewidth]{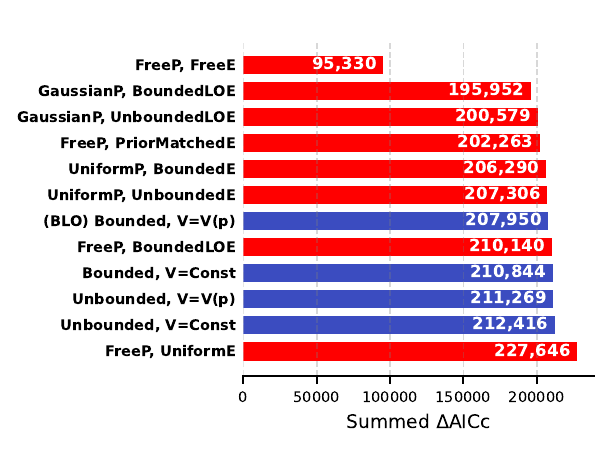}
        \caption{Analytical}
        \label{fig:jrf-daicc-analytical}
    \end{subfigure}
    \hfill
    \begin{subfigure}{0.48\linewidth}
        \centering
        \includegraphics[width=\linewidth]{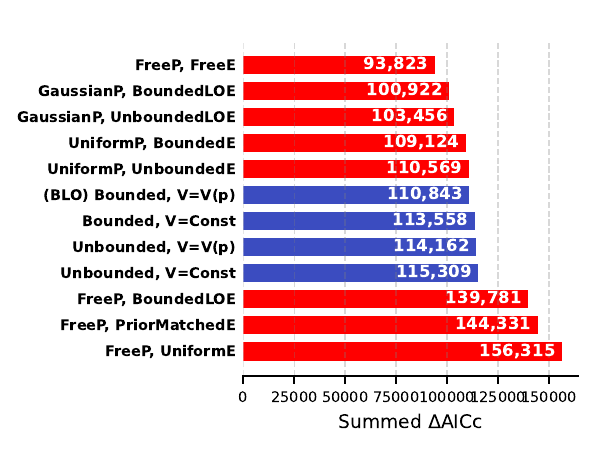}
        \caption{Categorical}
        \label{fig:jrf-daicc-categorical}
    \end{subfigure}
    \caption{\textbf{JRF Task:} Performance of the models, measured by the Summed $\Delta$AICc metric.  See Appendix~\ref{app:details-bayesian-jrf-models} for the shorthands for Bayesian models (red). We refer to \cite{zhang2020} for the shorthands for their model variants (blue). Analytical and categorical fitting methods are explained in Appendix~\ref{app:analytical-categorical-fitting}. The results shown in the main text correspond to Analytical.}
    \label{fig:jrf-daicc-full-both-analytical-categorical}
\end{figure}

\subsubsection{Analysis of Fitted Prior and Resources}
Figure~\ref{fig:jrf-fi-bayesian-perSubject} plots the fitted resources for our Bayesian model (red) and the BLO model (blue) for 86 subjects in the JRF task. For nearly all subjects, the resources from both models are U-shaped, with peaks near the probability endpoints of $p=0$ and $p=1$.

A closer examination reveals a difference. As shown in Figure~\ref{fig:jrf-fi-blo-perSubject}, which restricts the y-axis for clarity, the resources of the BLO model exhibit two points of discontinuity. This finding is formally predicted by and consistent with our Theorem~\ref{thm:blo-FI}.

\begin{figure}[H]
    \centering
        \includegraphics[width=0.75\linewidth]{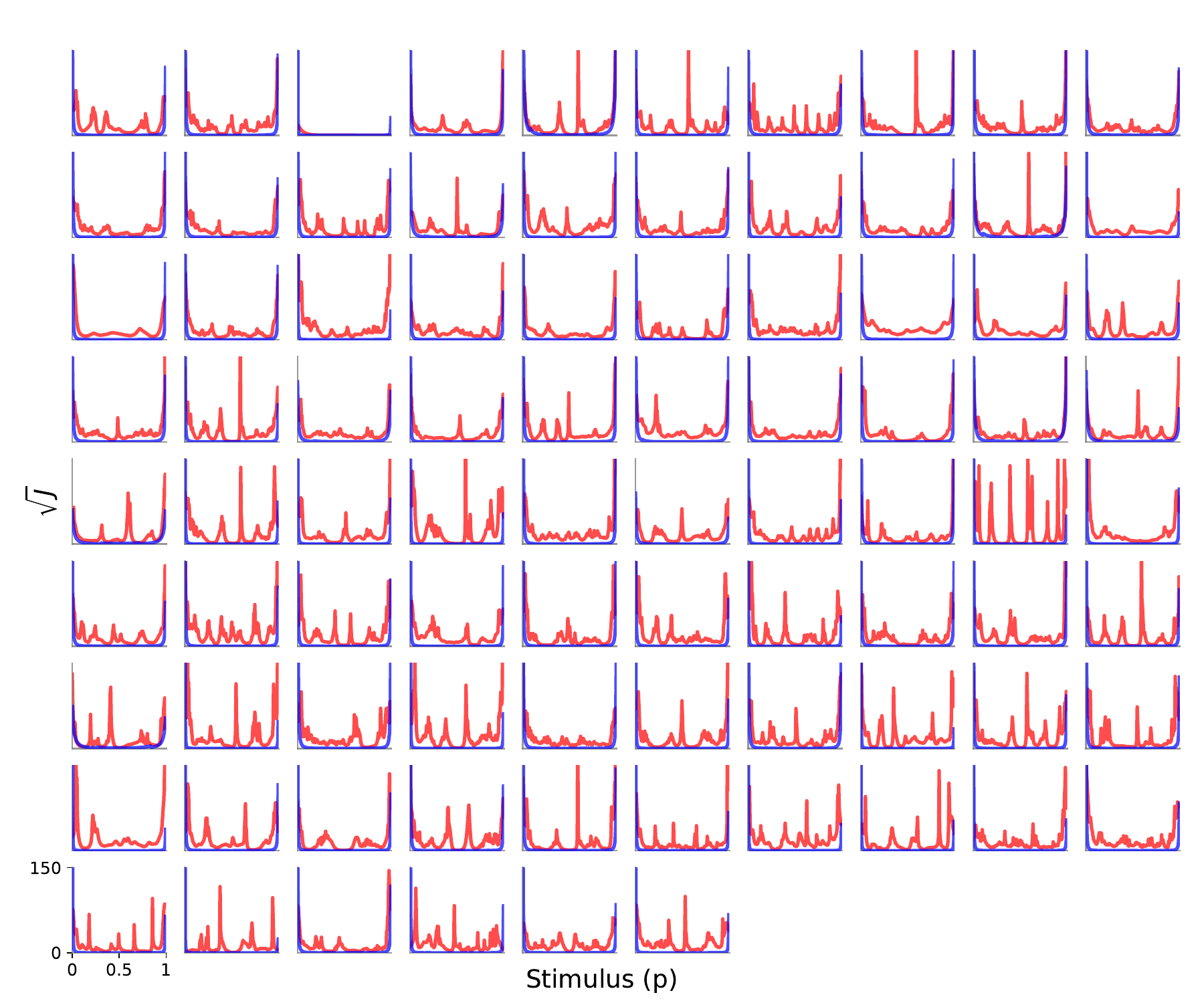}
        \caption{\textbf{JRF Task: } Per-subject resources ($\sqrt{J(p)}$) for the Bayesian model with freely fitted prior(in red) and encoding and the BLO model(in blue). Subjects S1 to S51 are from dataset JDA, S52 to S75 from dataset JDB (both from \cite{zhang2020}), and S76 to S86 from dataset ZM12 (\cite{ZM2012}).}
        \label{fig:jrf-fi-bayesian-perSubject}
\end{figure}

\begin{figure}[H]
    \centering
    \includegraphics[width=0.75\linewidth]{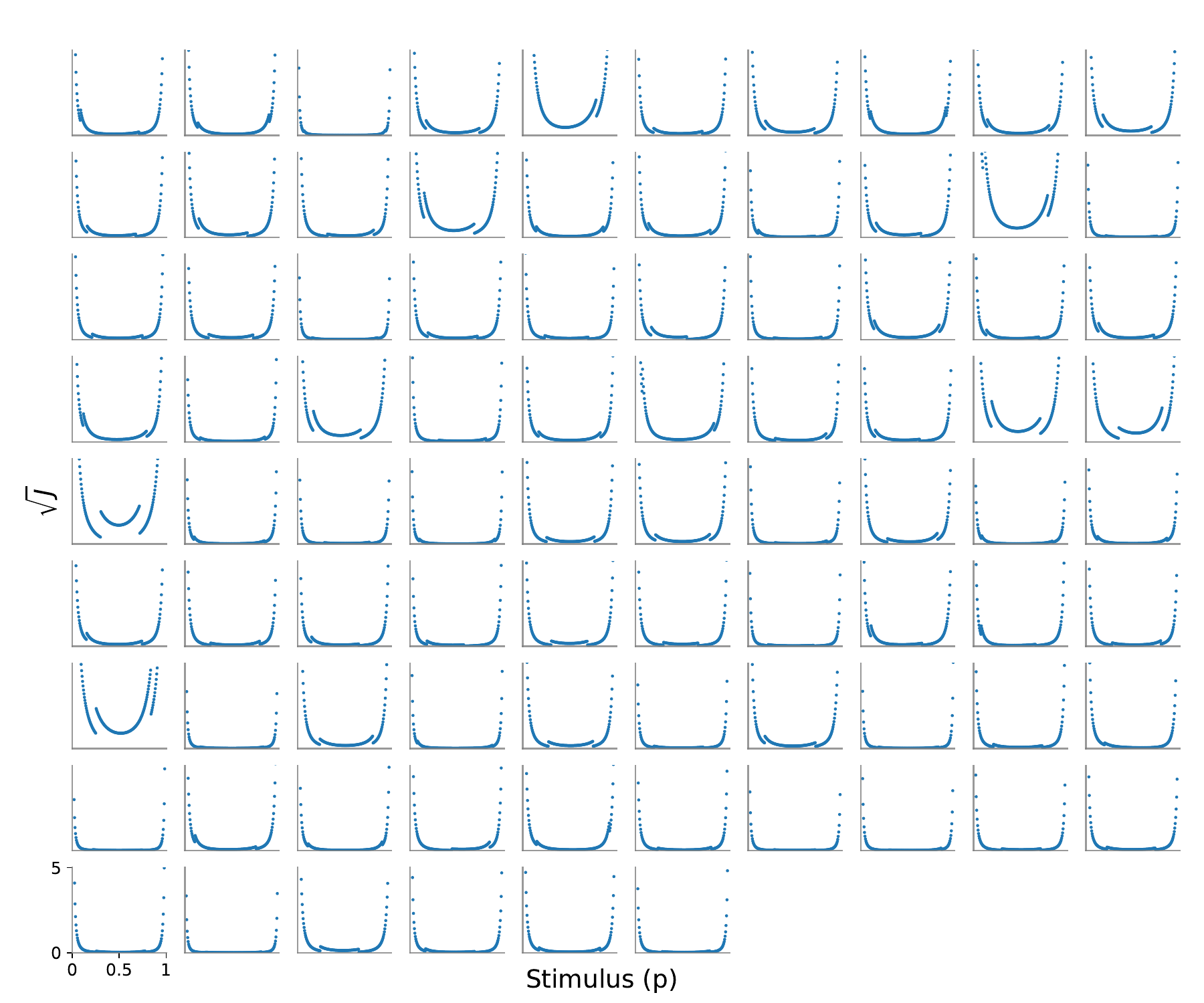}
    \caption{\textbf{JRF Task: }Discontinuity observed in the resources ($\sqrt{J(p)}$) of the BLO model. The proof of this property is in Theorem~\ref{thm:blo-FI}. Subjects S1 to S51 are from dataset JDA, S52 to S75 from dataset JDB (both from \cite{zhang2020}), and S76 to S86 from dataset ZM12 (\cite{ZM2012}).}
    \label{fig:jrf-fi-blo-perSubject}
\end{figure}

Figure~\ref{fig:jrf-prior-comparison} compares the group-level fitted priors for the Bayesian model variants. The parametric Gaussian prior is shown to capture the main features of the non-parametric, freely fitted prior. The freely fitted prior with matched encoding shows not only peaks at 0.5, but also at 0 and 1, which is a property of the Resources.

Figure~\ref{fig:jrf-fi-comparison} compares the group-level resources for all Bayesian model variants. The resources are U-shaped for the parametric models (the Bayesian log-odds variants), and the freely fitted resources also recover this U-shape.

\begin{figure}[H]
\begin{center}
\includegraphics[width=0.6\linewidth]{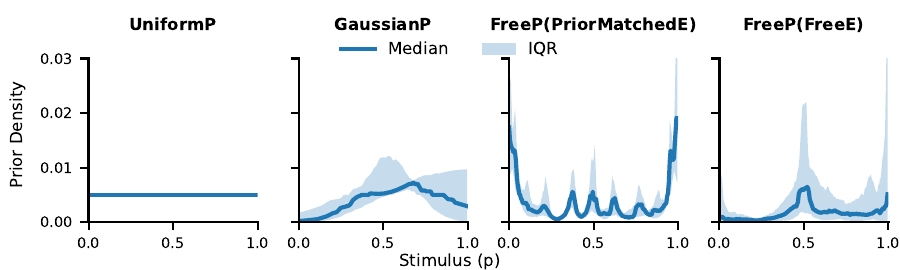}
\end{center}
\caption{
\textbf{JRF Task: }Group-level priors for the four prior components evaluated in this task. The solid line shows the group median, and the shaded area indicates the interquartile range (IQR). Plotting details are provided in Appendix~\ref{app:plotting}.}
\label{fig:jrf-prior-comparison}
\end{figure}

\begin{figure}[H]
\begin{center}
\includegraphics[width=0.8\linewidth]{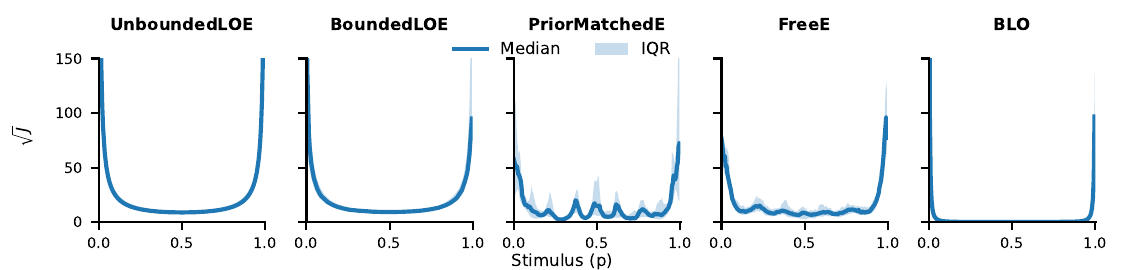}
\end{center}
\caption{\textbf{JRF Task: }Group-level resources ($\sqrt{J(p)}$) for the four encoding components evaluated in this task, along with the resources of the BLO model. The solid line shows the group median, and the shaded area indicates the interquartile range (IQR). Plotting details are provided in Appendix~\ref{app:plotting}. This result is mentioned in main text section~\ref{sec:jrf}.
}
\label{fig:jrf-fi-comparison}
\end{figure}

\subsubsection{Analysis of Bias and Variance}
Figures \ref{fig:jrf-bias-perSubject} and \ref{fig:jrf-var-perSubject} show the per-subject bias and variance, respectively, providing a more detailed view of the group-level results presented in Figure~\ref{fig:jrf-bias-var}. The methods used to calculate these quantities are detailed in Appendix~\ref{app:methods-bias-var}.

The per-subject plots confirm the main findings. For bias, both the Bayesian model and the BLO model capture the bis pattern of the non-parametric estimates. The key divergence appears in the response variability. Figure~\ref{fig:jrf-var-perSubject} clearly shows that while our Bayesian model captures subject-level variability, the BLO model consistently fails to model the dip at $p=0.5$.

\begin{figure}[H]
\begin{center}
\includegraphics[width=0.75\linewidth]{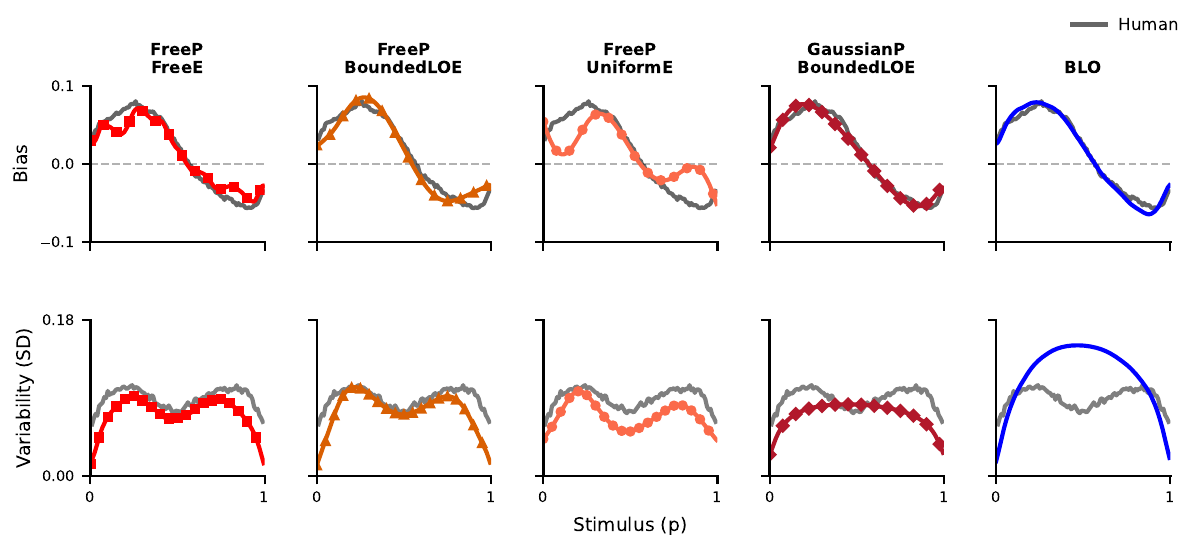}
\end{center}
\caption{\textbf{JRF Task: } Model fits to response bias (top) and variability (bottom). Gray curves show human data. The S-shaped bias is only explained by non-uniform encodings (BoundedLOE, FreeE); the variability is only captured by the nonparametric prior (FreeP).} 
\label{fig:jrf-bias-var}
\end{figure}

\begin{figure}[H]
\begin{center}
\includegraphics[width=0.75\linewidth]{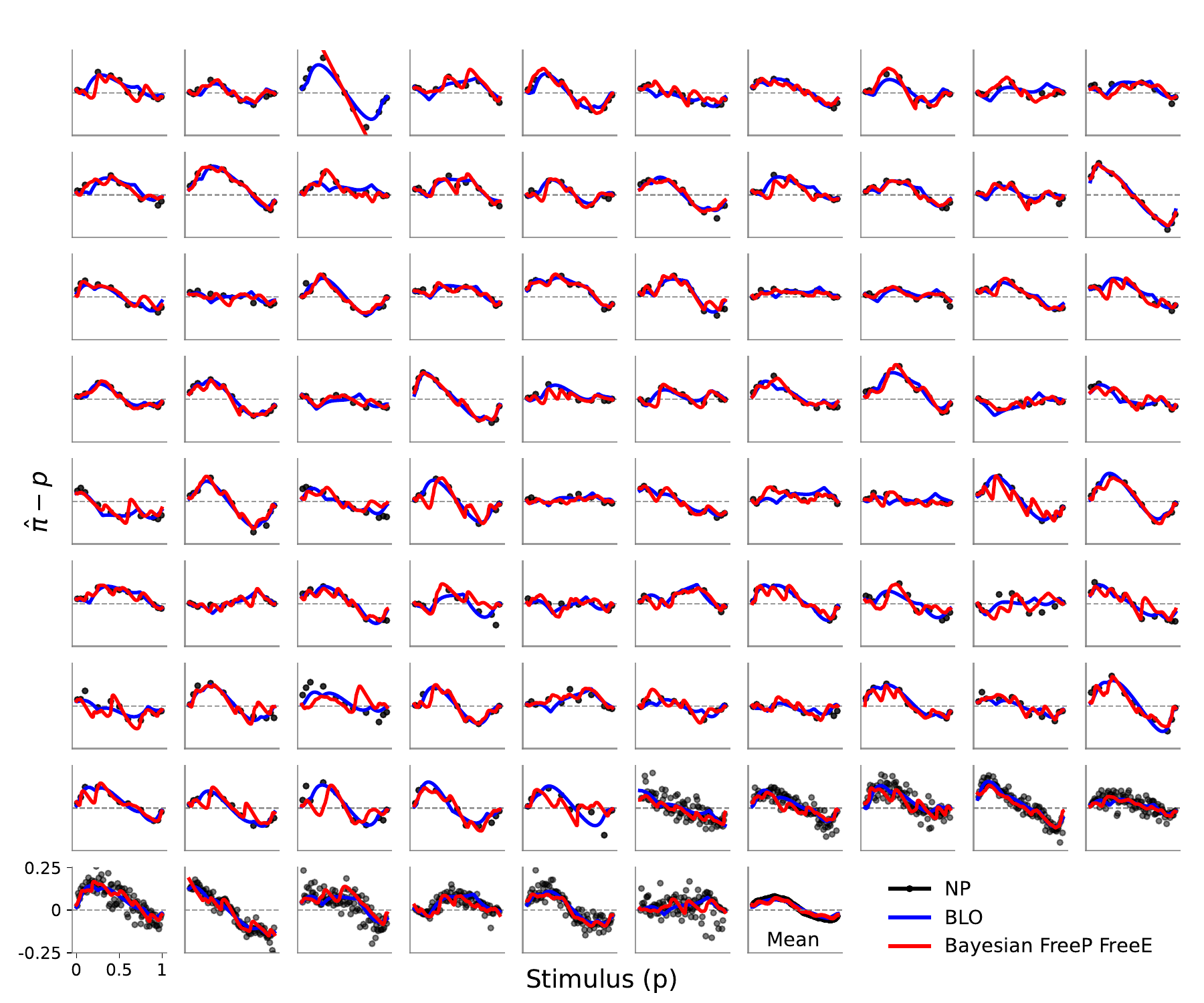}
\end{center}
\caption{\textbf{JRF Task: }Per-subject bias of the non-parametric estimates from the data,  the BLO model and the Bayesian model with free prior and free encoding. Subjects S1 to S51 are from dataset JDA, S52 to S75 from dataset JDB (both from \cite{zhang2020}), and S76 to S86 from dataset ZM12 (\cite{ZM2012}).} 
\label{fig:jrf-bias-perSubject}
\end{figure}

\begin{figure}[H]
\begin{center}
\includegraphics[width=0.75\linewidth]{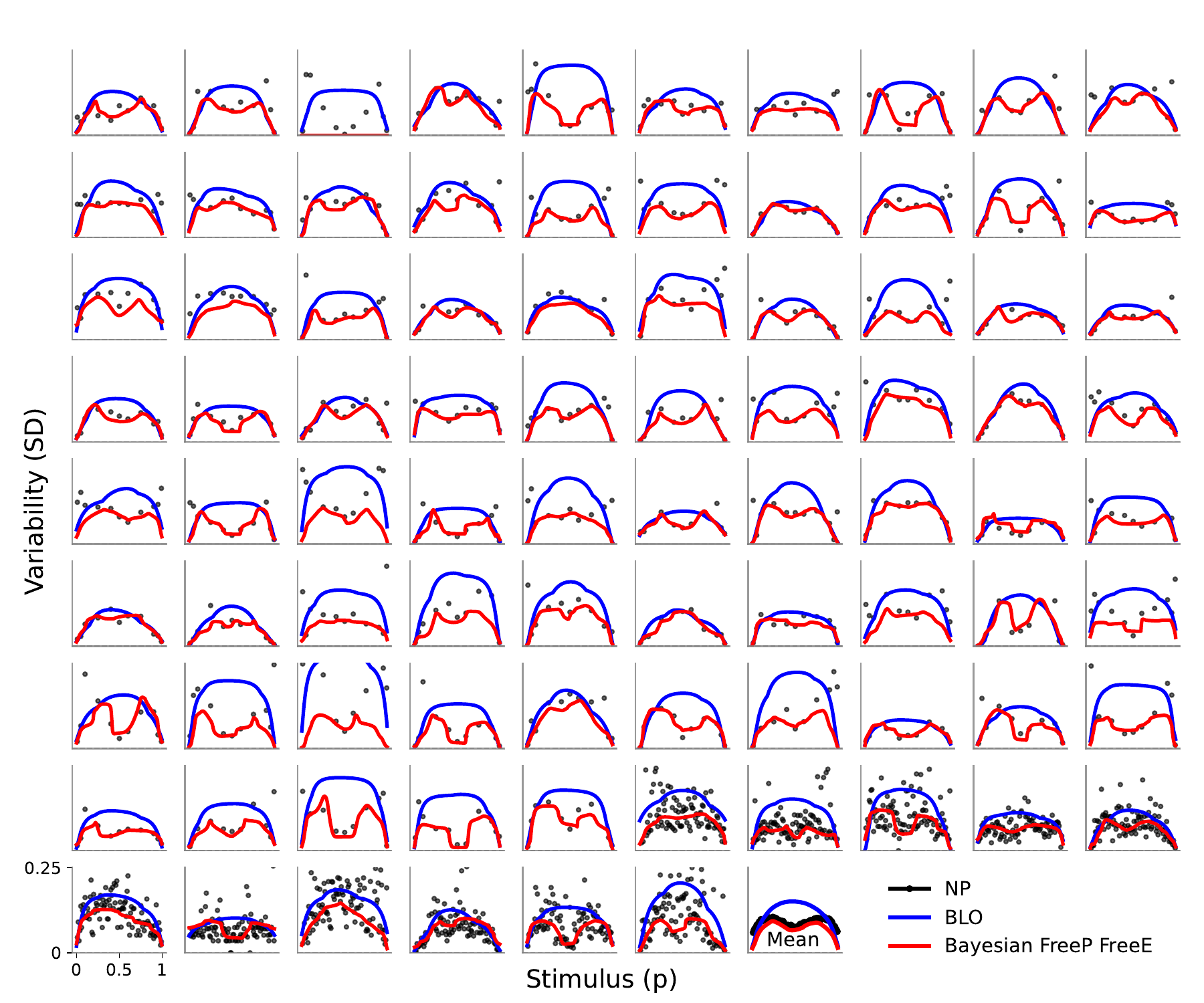}
\end{center}
\caption{\textbf{JRF Task: }Per-subject variability of the non-parametric estimates from the data, the BLO model and the Bayesian model with free prior and free encoding. Subjects S1 to S51 are from dataset JDA, S52 to S75 from dataset JDB (both from \cite{zhang2020}), and S76 to S86 from dataset ZM12 (\cite{ZM2012}).
} 
\label{fig:jrf-var-perSubject}
\end{figure}

Moreover, Figure~\ref{fig:jrf-bias-decomposition} decompose the bias of four Bayesian model variants into attraction, repulsion and boundary regression. Both attractive and repulsive components point away from 0 and 1. Both are needed to account for the overall distortion, as both prior and encoding need to be nonuniform to achieve good model fit.

\subsection{Details on Bayesian Model for DMR Pricing Task}\label{app:details-bayesian-dmr-pricing}
\subsubsection{Bayesian Model Variants and Fitting Procdure}\label{app:details-bayesian-dmrpricing-models}

We tested the similar Bayesian model variants for the pricing task as we did for the JRF task (detailed in Appendix~\ref{app:details-bayesian-jrf-models}). However, fitting these models to the Certainty Equivalent (CE) data required a specific two-stage procedure. The goal was to first convert each observed CE into an ``implied" subjective probability weight, and then fit our Bayesian models to these weights.

\paragraph{Stage 1: Deriving trial-by-trial implied estimates.}
The goal of this initial stage was to convert each raw CE response into a non-parametric, trial-level estimate of the subjective probability weight,  $\hat{\pi}_{\text{implied}, t}(p)$. For each trial $t$, we fits a free variable. We also fit $\alpha$ for applying the CPT utility function. To account for additional variability in CE, we included an extra noise term $\epsilon_\text{CE}$ and optimized parameters by minimizing the loss between predicted and observed CE. This method is similar to that of \cite{zhang2020}, but our implementation works on a trial-by-trial basis rather than on 11 discrete probability levels.

\paragraph{Stage 2: Fitting Bayesian estimator.}  
The set of ($p_t$, $\hat{\pi}_{\text{implied}, t}$) pairs derived from Stage 1 was then used as the target data to fit the parameters of our Bayesian model. These parameters in turn determine the set of optimal point estimates, $\hat{p}(m)$ (the decoded stimulus value for each possible internal measurement m), by maximizing the likelihood of the implied weights.

\paragraph{Stage 3: Final Likelihood Maximization on Original CE Data.}
To ensure a fair and direct comparison with the BLO model, the final model evaluation was based on the likelihood of the original CE data. In this final stage, the Bayesian encoding parameters (and therefore the set of $\theta(m)$ values) were held fixed from the results of Stage 2. We then performed a final optimization to find the subject's remaining parameters—the utility exponent $\alpha$ and the CE noise variance $\sigma_\text{CE}$—that maximized the log-likelihood of their observed CE responses. The resulting maximum log-likelihood value was then used to calculate the Held-out $\Delta$NLL and $\Delta$AICc scores.

\paragraph{Model variants.}  
Because stage 2 closely resembles the JRF task, we applied the same set of model variants used there (Appendix~\ref{app:details-bayesian-jrf-models}), with the exception of the Fixed Unbounded Log-Odds Encoding. We excluded this variant to focus on encoding schemes that showed better performance in this task.


\subsubsection{Performance Comparison of All Model Variants}\label{app:performance-dmr-pricing}
Figure~\ref{fig:dmr-dnll-full-both-analytical-categorical} and Figure~\ref{fig:dmr-daicc-full-both-analytical-categorical} presents the performance of model variants on both evaluation metrics. 

For the likelihood of the observed CE data, we chose to present results from a categorical likelihood function in the main text. While a fully analytical (continuous Gaussian) likelihood is possible—and in our tests, this analytical version of our model achieves a lower summed $\Delta$heldout loss than Zhang's models—we opted for the categorical approach. We argue that the DMR task, which involves a comparative judgment, is inherently more categorical in nature than the continuous estimation required in the JRF task. To ensure a fair and direct comparison, we re-evaluated Zhang's original models using this identical categorical likelihood. 

Across both evaluation methods, the Bayesian models (red bars) consistently outperformed Zhang et al.’s variants (blue bars). In particular, when measured by summed $\Delta$NLL, the Bayesian models achieved substantially smaller losses, indicating a much better account of the observed CE distributions. The advantage is also evident under the summed $\Delta$AICc, where Bayesian models again dominate.

\begin{figure}[H]
    \centering
    \begin{subfigure}{0.48\linewidth}
        \centering
        \includegraphics[width=\linewidth]{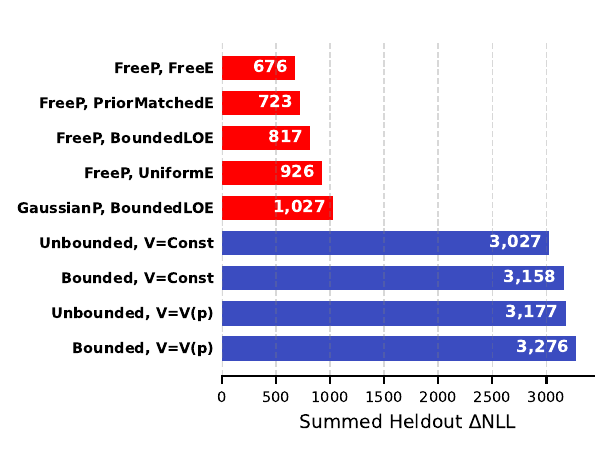}
        \caption{Analytical}
        \label{fig:dmr-dnll-analytical}
    \end{subfigure}
    \hfill
    \begin{subfigure}{0.48\linewidth}
        \centering
        \includegraphics[width=\linewidth]{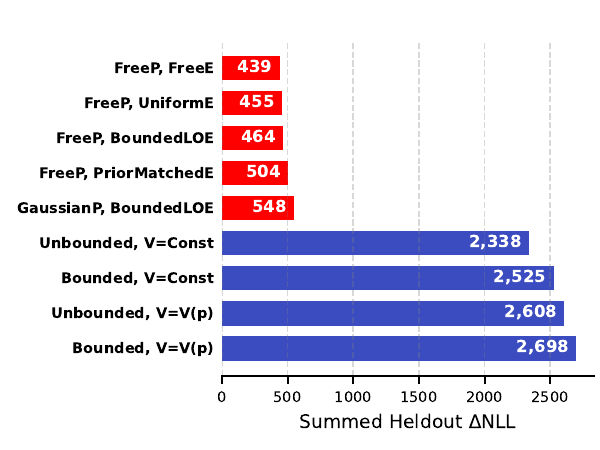}
        \caption{Categorical}
        \label{fig:dmr-dnll-categorical}
    \end{subfigure}
    \caption{\textbf{DMR Pricing Task: } Performance of the models, measured by the Heldout $\Delta$NLL metric.  See Appendix~\ref{app:details-bayesian-jrf-models} for the shorthands for Bayesian models (red). We refer to \cite{zhang2020} for the shorthands for their model variants (blue). Analytical and categorical fitting methods are explained in Appendix~\ref{app:analytical-categorical-fitting}. The results shown in the main text correspond to Categorical.}
    \label{fig:dmr-dnll-full-both-analytical-categorical}
\end{figure}

\begin{figure}[H]
    \centering
    \begin{subfigure}{0.48\linewidth}
        \centering
        \includegraphics[width=\linewidth]{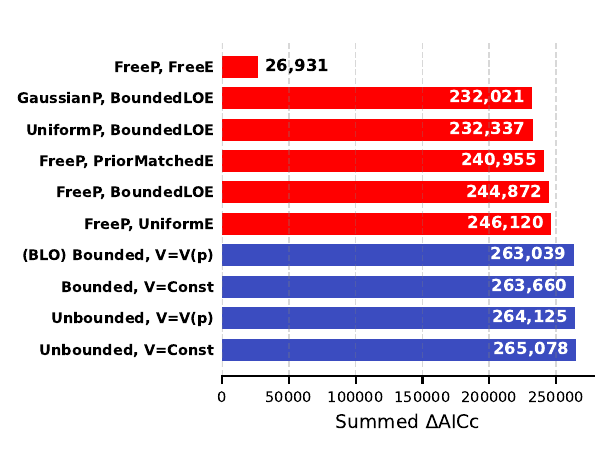}
        \caption{Analytical}
        \label{fig:dmr-daicc-analytical}
    \end{subfigure}
    \hfill
    \begin{subfigure}{0.48\linewidth}
        \centering
        \includegraphics[width=\linewidth]{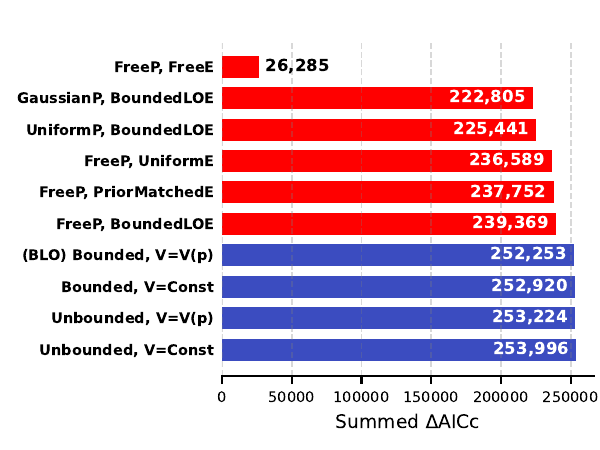}
        \caption{Categorical}
        \label{fig:dmr-daicc-categorical}
    \end{subfigure}
    \caption{\textbf{DMR Pricing Task: }Performance of the models, measured by the $\Delta$AICc metric. See Appendix~\ref{app:details-bayesian-jrf-models} for the shorthands for Bayesian models (red). We refer to \cite{zhang2020} for the shorthands for their model variants (blue). Analytical and categorical fitting methods are explained in Appendix~\ref{app:analytical-categorical-fitting}. The results shown in the main text correspond to Categorical.}
    \label{fig:dmr-daicc-full-both-analytical-categorical}
\end{figure}

\subsubsection{Analysis of Fitted Prior and Resources}
The fitted resources in Figure~\ref{fig:dmr-enc-comparison} closely resemble those obtained in the JRF task.  The BLO model doesn't have meaningful resources because it doesn't fit a noise in the encoding phase.

For the prior, the freely fitted version exhibits a shape similar to the Gaussian prior, which accounts for the strong performance of Bayesian models with Gaussian priors reported in Section~\ref{app:performance-dmr-pricing}.

\begin{figure}[H]
\begin{center}
\includegraphics[width=0.6\linewidth]{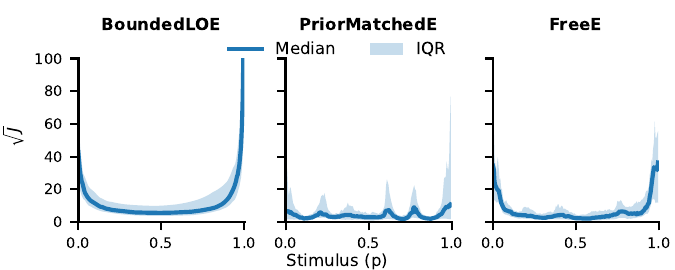}
\end{center}
\caption{\textbf{DMR Pricing Task: }Group-level resources ($\sqrt{J(p)}$) for the three encoding components evaluated in this task. The solid line shows the group median, and the shaded area indicates the interquartile range (IQR). Plotting details are provided in Appendix~\ref{app:plotting}.}
\label{fig:dmr-enc-comparison}
\end{figure}

\begin{figure}[H]
\begin{center}
\includegraphics[width=0.7\linewidth]{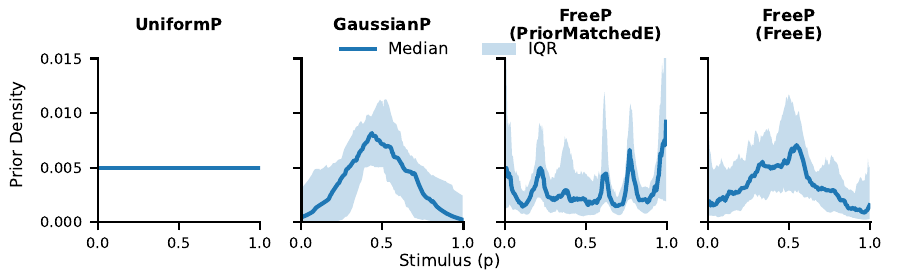}
\end{center}
\caption{\textbf{DMR Pricing Task: }Group-level priors for the four prior components evaluated in this task. The solid line shows the group median, and the shaded area indicates the interquartile range (IQR). Plotting details are provided in Appendix~\ref{app:plotting}. Free fitting is compatible with a unimodal prior. }
\label{fig:dmr-prior-comparison}
\end{figure}

\subsubsection{Analysis of Bias and Variance}

\begin{figure}[H]
\begin{center}
\includegraphics[width=0.75\linewidth]{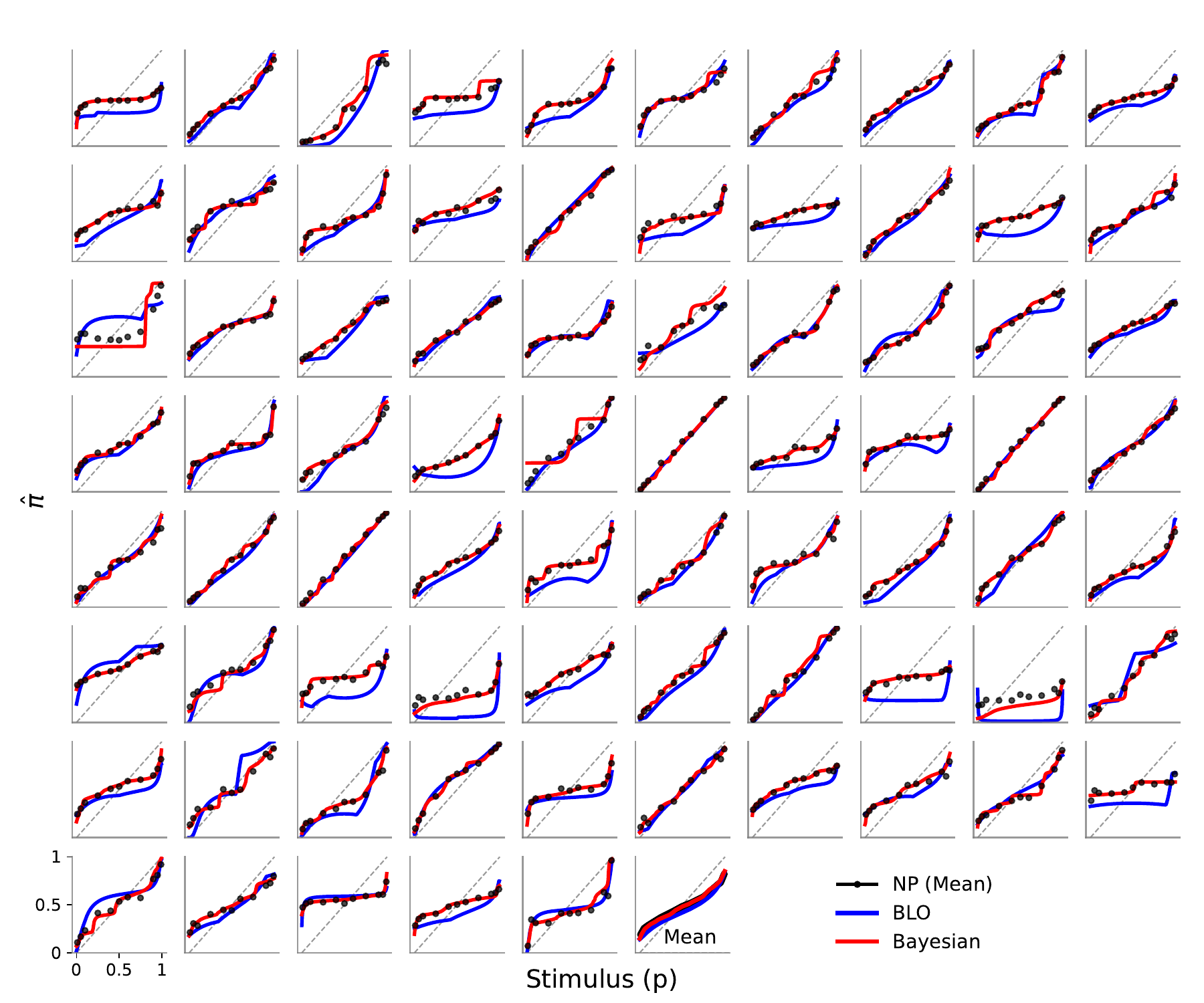}
\end{center}
\caption{\textbf{DMR Pricing Task: }Per-subject bias of the non-parametric estimates from the data, the BLO model and the Bayesian model with free prior and free encoding on the DMR pricing task. Subjects S1 to S51 are from dataset JDA, and S52 to S75 are from dataset JDB (both from \cite{zhang2020}).
}
 \label{fig:dmr-bias}
\end{figure}
In Figure~\ref{fig:dmr-bias}, we present the per-subject bias of the probability estimate, $\hat{\pi}$, for both the Bayesian and BLO models. As the figures show, both models capture the general pattern of bias for most subjects.
A direct comparison of the variance of  $\hat{\pi}$ is not shown because the two models treat this quantity fundamentally differently. The BLO model's estimate $\hat\pi_\text{BLO}$ is a deterministic point value and thus has zero variance.  In contrast, our Bayesian estimate, $\hat\pi_\text{Bayesian}$ , is the mean of a full posterior distribution and has inherent sensory noise variance. 
The methods used to calculate the non-parametric, BLO's, and Bayesian model's bias are detailed in Appendix~\ref{app:methods-bias-var}.

\subsection{Details on Bayesian Model for DMR Choice Task}\label{app:details-bayesian-dmr-choice}
\subsubsection{Full Utility Function, Logit Choice Rule, and LILO function}
\paragraph{Full Utility Function}
\begin{equation}\label{eq:full-cpt}
    \text{Utility} = \begin{cases}
w^+(p)v(X) + (1 - w^+(p))v(Y) & \text{if } X \ge Y \ge 0 \\
(1 - w^-(1 - p))v(X) + w^-(1 - p)v(Y) & \text{if } X > Y \text{ and } X, Y < 0 \\
w^+(p)v(X) + w^-(1 - p)v(Y) & \text{if } X > 0 > Y
\end{cases}
\end{equation}

\paragraph{LILO function}
($\gamma,\beta$ are free parameters):
\begin{equation}\label{eq:lilo}
\widehat{p} = \lambda^{-1}(\gamma \lambda(p) + \beta), \quad \lambda(p) := \log \frac{p}{1-p}.
\end{equation}

 This equation is mentioned in main text section~\ref{sec:dmr}

\paragraph{Logit Choice Rule}
\begin{equation}
P(\text{Choose B}) = \text{sigmoid}(\tau \cdot(\text{Utility}_B - \text{Utility}_A))
\end{equation}

The parameter $\tau$ ($\tau > 0$,) controls the choice sensitivity and is a free parameter to fit.

\subsubsection{Bayesian Model Variants and Fitting Procedure}\label{app:details-bayesian-dmr-choice-models}
We tested several model variants:

\begin{itemize}
    \item \textbf{Priors:}
    \begin{itemize}
        \item \textbf{Freely Fitted Prior:}  
        As we mainly focus on validating  the resources shape with this task, we only evaluate the freely fitted prior, which is the same as described in Appendix~\ref{app:details-bayesian-jrf-models}. 
    \end{itemize}
    
    \item \textbf{Encodings:}  
    \begin{itemize}
        \item \textbf{Bounded Log-odds Encoding, Prior-matched Encoding, Freely Fitted Encoding:} Same as in Appendix~\ref{app:details-bayesian-jrf-models}.
    \end{itemize}
\end{itemize}

Because the dataset includes both gains and losses, we fit separate probability weighting functions for them. Specifically, we estimate separate priors for gains and for losses, while assuming a shared encoding across both. The resulting probability weighting functions are then computed as the expectation of $\hat{p}$ under the encoding distribution.

\subsubsection{Performance Comparison of All Model Variants}
Figure~\ref{fig:dmr-choice-dnll-full-both-analytical-categorical} shows the performance of different models on the DMR choice task. Among all tested models, the Bayesian variant with freely fitted prior and encoding (FreeP, FreeE) achieve the best fit, clearly outperforming both other Bayesian variants and classical parametric weighting functions. Models with parametric encoding perform worse, and parametric models such as LILO and Prelec also show substantially higher $\Delta$AICc. The Bayesian model with uniform encodoing performs the worst, showing that this encoding cannot capture the probability distortion in this task.

\begin{figure}[H]
    \centering
    \includegraphics[width=0.48\linewidth]{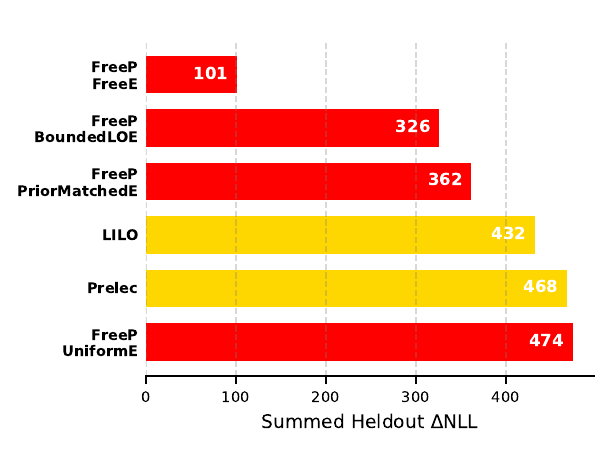}
    \caption{\textbf{DMR Choice Task:} Performance of the models, measured by the Heldout $\Delta$NLL metric. See Appendix~\ref{app:details-bayesian-jrf-models} for the shorthands for Bayesian models (red).}
    \label{fig:dmr-choice-dnll-full-both-analytical-categorical}
\end{figure}

\subsubsection{Analysis of Fitted Prior and Resources}
We show the group-level encoding resources (Figure~\ref{fig:dmr-enc-comparison-choice}) and priors (Figure~\ref{fig:dmr-prior-comparison-choice}). Across models, the fitted resources consistently U-shaped, with peaks at 0 and 1. Interestingly, the freely fitted priors for both gains and losses are also U-shaped.

\begin{figure}[H]
\begin{center}
\includegraphics[width=0.7\linewidth]{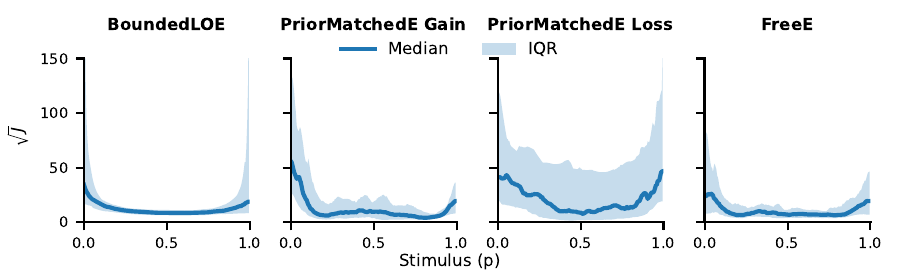}
\end{center}
\caption{\textbf{DMR Choice Task: }Group-level resources ($\sqrt{J(p)}$) for the four encoding components evaluated in this task. The solid line shows the group median, and the shaded area indicates the interquartile range (IQR). Plotting details are provided in Appendix~\ref{app:plotting}.}
\label{fig:dmr-enc-comparison-choice}
\end{figure}

\begin{figure}[H]
\begin{center}
\includegraphics[width=0.4\linewidth]{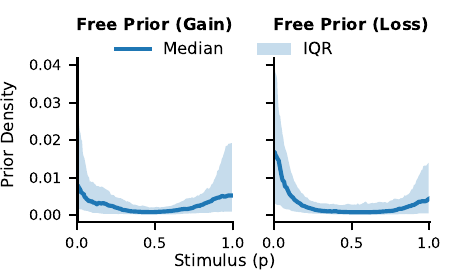}
\end{center}
\caption{\textbf{DMR Choice Task: }Group-level prior of the Bayesian model with freely fitted priors and encoding. The solid line shows the group median, and the shaded area indicates the interquartile range (IQR). Plotting details are provided in Appendix~\ref{app:plotting}.}
\label{fig:dmr-prior-comparison-choice}
\end{figure}

\subsubsection{Analysis of Bias}
Figure~\ref{fig:dmr-choice-perSubject-bias} shows the per-subject bias for the Bayesian freely fitted model, the LILO model, and the Prelec model. Across subjects, the three models tend to capture similar patterns of bias, although the detailed shapes at the individual level likely reflect a considerable degree of overfitting.

\begin{figure}[H]
\begin{center}
\includegraphics[width=0.7\linewidth]{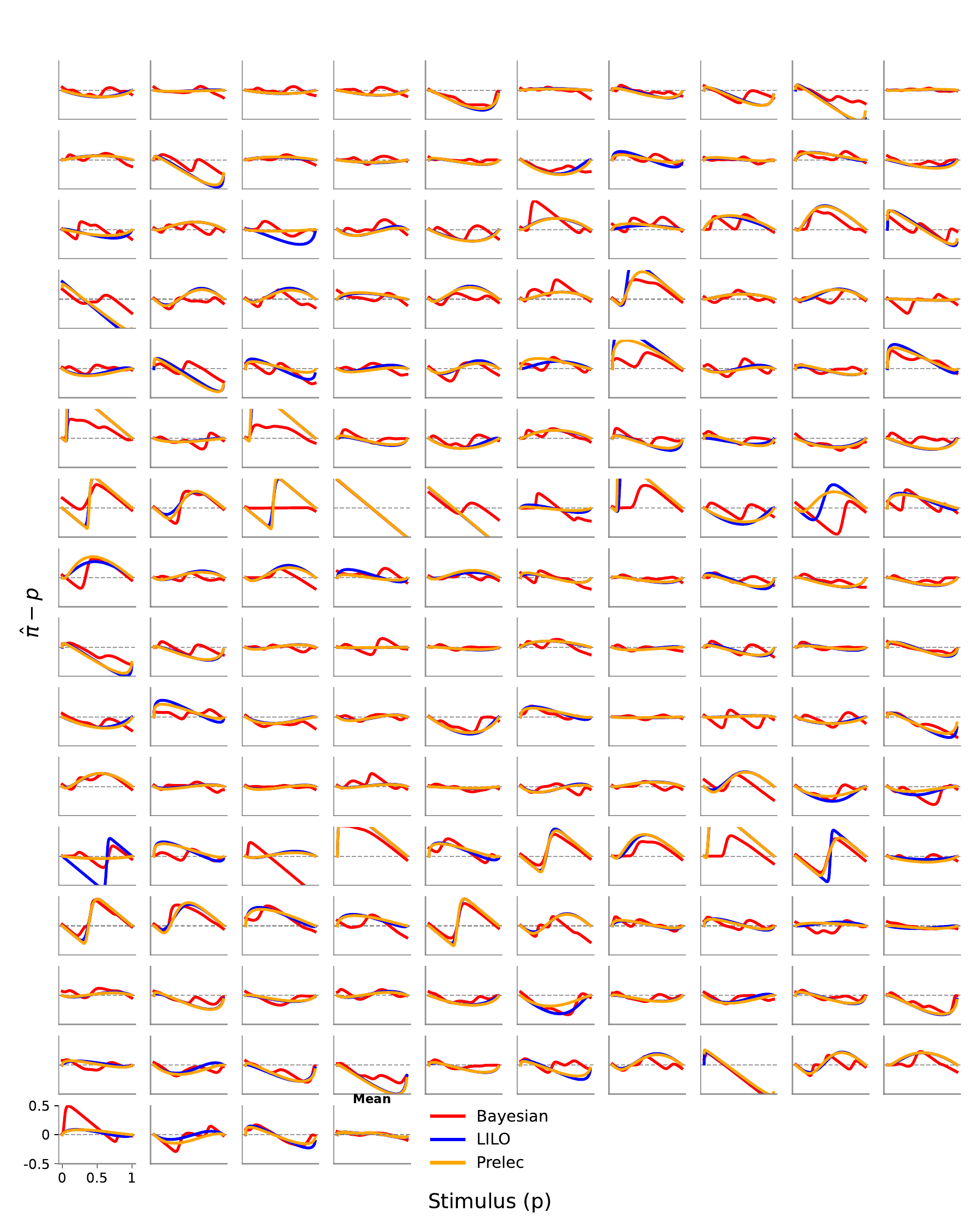}
\end{center}
\caption{\textbf{DMR Choice Task: }Per-subject bias of the LILO model, the Prelec model and the Bayesian model with free prior and free encoding on the DMR pricing task. Subjects S1 to 153 are from dataset CPC15 \citep{erev2017anomalies}.}
\label{fig:dmr-choice-perSubject-bias}
\end{figure}

\subsection{Details on Bayesian Model for Adaptation Bimodal Data}
\subsubsection{Bayesian Model Variants}\label{app:details-bayesian-jrf-bimodal-models}
We tested several model variants. Most aspects were the same as the variants used for the JRF task, with the exception that here we evaluated a bimodal prior instead of a uniform prior.  

\begin{itemize}
    \item \textbf{Priors:}
    \begin{itemize}
        \item \textbf{Bimodal Prior:}  
        This variant assumes a bimodal distribution over the range of possible grid points. The free parameters are the two means and their corresponding standard deviations.
        
        \item \textbf{Gaussian Prior, Freely Fitted Prior:}  
        These variants are the same as described in Appendix~\ref{app:details-bayesian-jrf-models}. 
    \end{itemize}
    
    \item \textbf{Encodings:}  
    \begin{itemize}
        \item \textbf{Unbounded Log-odds Encoding, Bounded Log-odds Encoding, Prior-matched Encoding,  Freely Fitted Encoding:} Same as in Appendix~\ref{app:details-bayesian-jrf-models}.
    \end{itemize}
\end{itemize}

\subsubsection{Performance Comparison of All Model variants}\label{app:performance-adaptation}

For the Adaptation Task, model comparison again shows a clear advantage for Bayesian variants over BLO. When measured by both held-out $\Delta$NLL (Figure~\ref{fig:jrf-dnll-full-both-analytical-categorical-bimodal}) and $\Delta$AICc (Figure~\ref{fig:jrf-daicc-full-both-analytical-categorical-bimodal}), Bayesian models consistently achieve substantially lower scores, indicating a better quantitative account of the observed responses.

Among the Bayesian models, those with bimodal priors provide a closer fit than their Gaussian prior counterparts, reflecting the bimodal structure of the stimulus distribution. The freely fitted prior yields the best performance overall, suggesting that allowing the prior to flexibly adapt to the empirical distribution of stimuli gives the most accurate description of subjects’ behavior. We also find that model with matched prior and encoding perform particularly well.

One thing worth noting is that the model ranking under the categorical likelihood differs from that under the analytical likelihood. FreeP+FreeE achieves the best fit under the analytical likelihood but the worst under the categorical likelihood; in the JRF and DMR tasks, the two metrics agree (Appendix Figures~\ref{fig:jrf-dnll-full-both-analytical-categorical} and~\ref{fig:dmr-dnll-full-both-analytical-categorical}).

We do not have a complete account of this discrepancy. One possibility is that it arises from the interaction between FreeP+FreeE's flexibility and the bimodal stimulus distribution: with sharply peaked stimulus modes, FreeP+FreeE can drive motor noise $\sigma_{\text{motor}}$ to small values that fit response clusters tightly under the continuous analytical likelihood, but this advantage is diminished under the discretized categorical likelihood.

We emphasize that this metric sensitivity does not affect the qualitative findings of the bimodal experiment: across all variants and both metrics, the recovered prior tracks the bimodal stimulus distribution (Figure~\ref{fig:bimodal-bias-dcv}B), and the recovered encoding remains U-shaped (Figure~\ref{fig:jrf-FI-comparison-bimodal}). The bimodal experiment's role in the paper is to demonstrate prior-encoding dissociation, which holds regardless of which likelihood metric is used for model ranking.

\begin{figure}[H]
    \centering
    \begin{subfigure}{0.48\linewidth}
        \centering
        \includegraphics[width=\linewidth]{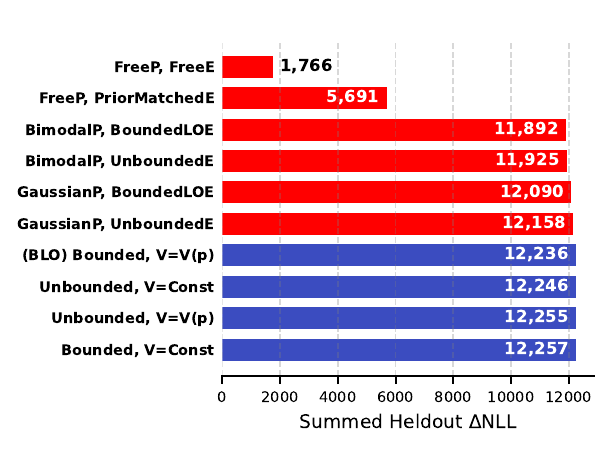}
        \caption{Analytical}
        \label{fig:jrf-dnll-analytical-bimodal}
    \end{subfigure}
    \hfill
    \begin{subfigure}{0.48\linewidth}
        \centering
        \includegraphics[width=\linewidth]{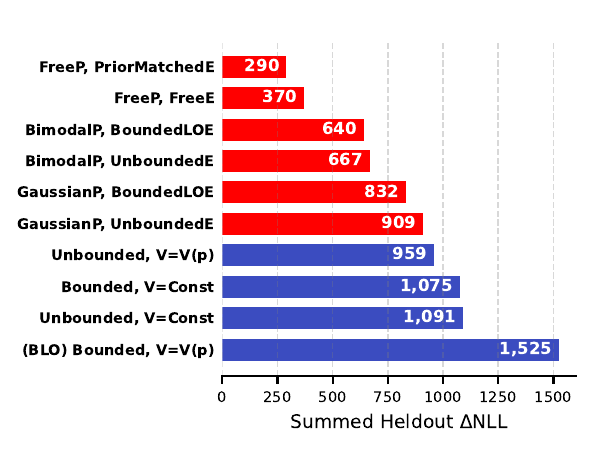}
        \caption{Categorical}
        \label{fig:jrf-dnll-categorical-bimodal}
    \end{subfigure}
    \caption{\textbf{Adaptation Task:} Performance of the models, measured by the Heldout $\Delta$NLL metric. Stimuli follows bimodal distribution. See Appendix~\ref{app:details-bayesian-jrf-models} for the shorthands for Bayesian models (red). We refer to \cite{zhang2020} for the shorthands for their model variants (blue). Analytical and categorical fitting methods are explained in Appendix~\ref{app:analytical-categorical-fitting}. The results shown in the main text correspond to Analytical. Note that in the bimodal task, FreeP+FreeE achieves the best fit under analytical likelihood but the worst under categorical (see Appendix~\ref{app:performance-adaptation} for discussion)}
    \label{fig:jrf-dnll-full-both-analytical-categorical-bimodal}
\end{figure}

\begin{figure}[H]
    \centering
    \begin{subfigure}{0.48\linewidth}
        \centering
        \includegraphics[width=\linewidth]{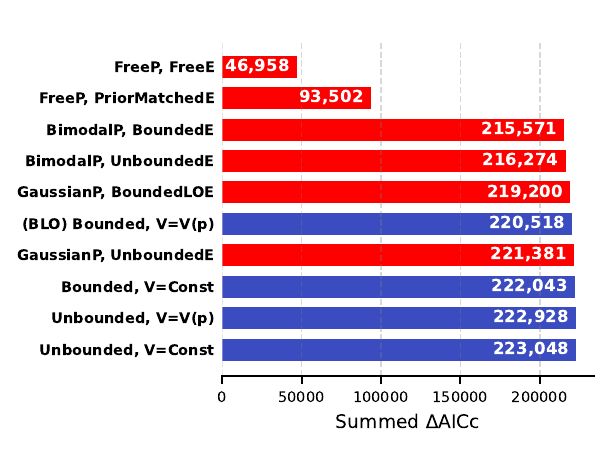}
        \caption{Analytical}
        \label{fig:jrf-daicc-analytical-bimodal}
    \end{subfigure}
    \hfill
    \begin{subfigure}{0.48\linewidth}
        \centering
        \includegraphics[width=\linewidth]{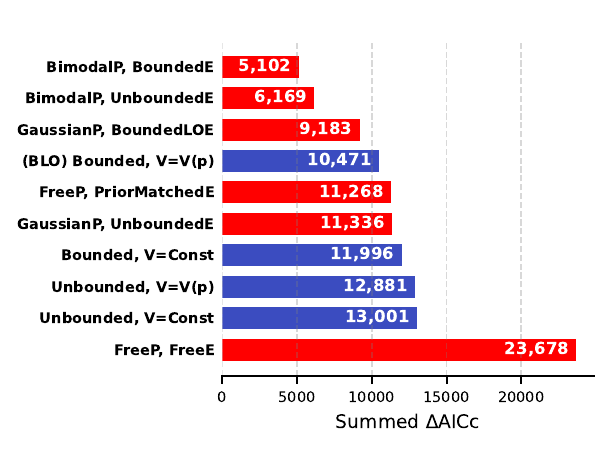}
        \caption{Categorical}
        \label{fig:jrf-daicc-categorical-bimodal}
    \end{subfigure}
    \caption{\textbf{Adaptation Task:} Performance of the models, measured by the Heldout $\Delta$AICc metric. Stimuli follows bimodal distribution. See Appendix~\ref{app:details-bayesian-jrf-models} for the shorthands for Bayesian models (red). We refer to \cite{zhang2020} for the shorthands for their model variants (blue). Analytical and categorical fitting methods are explained in Appendix~\ref{app:analytical-categorical-fitting}. The results shown in the main text correspond to Analytical. Note that in the bimodal task, FreeP+FreeE achieves the best fit under analytical likelihood but the worst under categorical (see Appendix~\ref{app:performance-adaptation} for discussion)
       }
    \label{fig:jrf-daicc-full-both-analytical-categorical-bimodal}
\end{figure}

\subsubsection{Robustness of Model Comparison via Bootstrapping}
\label{app:adaptation-bootstrap}

We performed a subject-level bootstrap analysis on the aggregate model comparison results (Figure~\ref{fig:bimodal-bias-dcv}D). Specifically, we resampled the 26 subjects with replacement 10,000 times. For each bootstrap sample, we calculated the Summed Held-out $\Delta$NLL difference between our freely fitted Bayesian model (FreeP, FreeE) and each competing model.

Table~\ref{tab:bootstrap} summarizes the results. We report the Bootstrap Support, the percentage of resampled datasets in which the FreeP, FreeE model achieved a lower Summed NLL than the competitor. We also report the 95\% Confidence Interval (CI) of the difference in NLL (competing model minus FreeP, FreeE).

The freely fitted Bayesian model significantly outperformed all competing accounts, including the Efficient Coding model (PriorMatchedE) and the BLO model, achieving a bootstrap support greater than $98\%$. The 95\% confidence intervals for the NLL differences  exclude zero for all comparisons. This analysis confirms the superiority of our Bayesian account in this task.

\begin{table}[h]
\caption{\textbf{Adaptation Task:} Bootstrap analysis of model comparison in the (10,000 iterations). The freely fitted Bayesian model (FreeP, FreeE) is the reference. Positive $\Delta$NLL values mean the reference model fits better. CI is Confidence Interval.}
\label{tab:bootstrap-results}
\begin{center}
\begin{tabular}{l c c}
\multicolumn{1}{c}{\bf Competitor Model Variants}  &\multicolumn{1}{c}{\bf Bootstrap Support} &\multicolumn{1}{c}{\bf 95\% CI of Difference ($\Delta$NLL)} 
\\ \hline \\
FreeP, PriorMatchedE & 98.1\% & $[233.43, 7550.74]$  \\
BimodalP, BoundedLOE & 100.0\% & $[7177.88, 13032.97]$ \\
BimodalP, UnboundedE & 100.0\% & $[7216.18, 13054.51]$ \\
GaussianP, BoundedLOE & 100.0\% & $[7380.49, 13224.09]$\\
(BLO) Bounded, $V=V(p)$ & 100.0\% & $[7518.55, 13399.81]$ \\
GaussianP, UnboundedE & 100.0\% & $[7466.34, 13271.41]$\\
Bounded, $V=\text{Const}$ & 100.0\% & $[7532.97, 13426.86]$ \\
Unbounded, $V=V(p)$ & 100.0\% & $[7518.60, 13437.80]$\\
Unbounded, $V=\text{Const}$ & 100.0\% & $[7510.38, 13429.20]$  \\
\label{tab:bootstrap}
\end{tabular}
\end{center}
\end{table}

\subsubsection{Analysis of Fitted Prior and Resources}
Figure~\ref{fig:jrf-FI-comparison-bimodal} shows that the group-level resources  largely retain the characteristic U-shape across encoding variants, consistent with the JRF task results. This indicates that encoding efficiency remains highest near the extremes of the probability scale. Because each subject was exposed to a different bimodal stimulus distribution, we do not plot a group-level prior. Instead, Figure~\ref{fig:jrf-priors-bimodal-perSubject} shows the freely fitted prior for each subject, which for most subjects successfully adapts to the underlying bimodal distribution.


\begin{figure}[H]
\begin{center}
\includegraphics[width=0.8\linewidth]{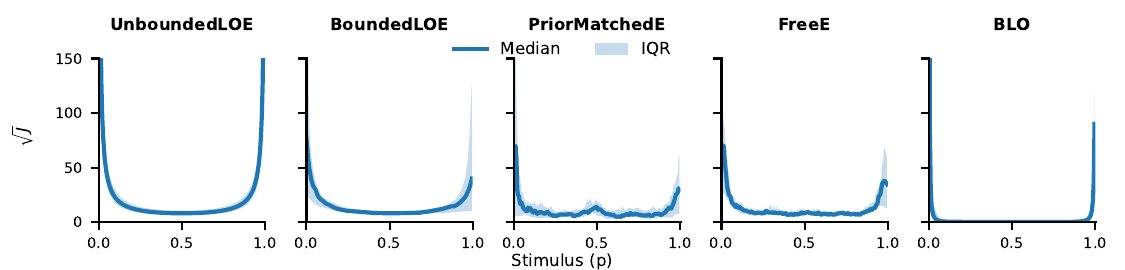}
\end{center}
\caption{
\textbf{Adaptation Task:} Group-level resources ($\sqrt{J(p)}$) for the three encoding components evaluated in this task. The solid line shows the group median, and the shaded area indicates the interquartile range (IQR). Plotting details are provided in Appendix~\ref{app:plotting}.}
\label{fig:jrf-FI-comparison-bimodal}
\end{figure}

\begin{figure}[H]
\begin{center}
\includegraphics[width=0.8\linewidth]{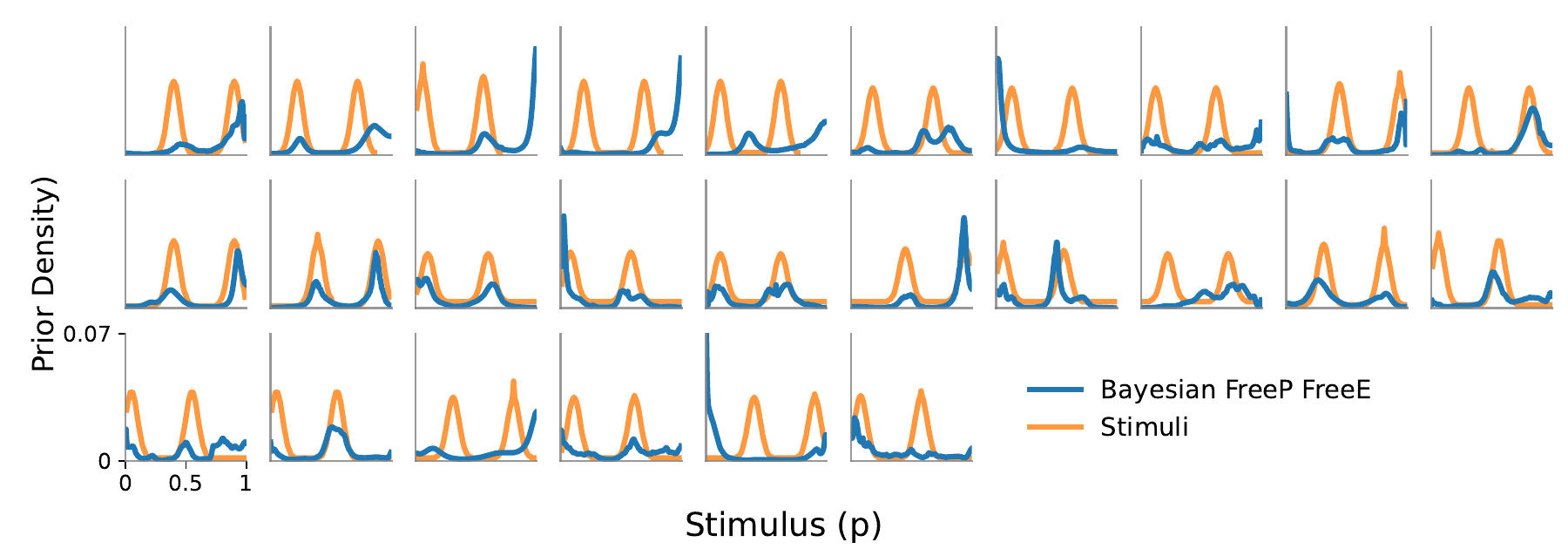}
\end{center}
\caption{\textbf{Adaptation Task:} Fitted priors for individual subjects. The freely fitted prior (blue) successfully adapts to the bimodal stimulus distribution (orange).}
\label{fig:jrf-priors-bimodal-perSubject}
\end{figure}

\subsubsection{Analysis of Bias and Variance}
Figure~\ref{fig:jrf-bias-bimodal-perSubject} shows the bias of each subject. The BLO model captures broad trends in bias but lacks the flexibility to account for the more complex, stimulus-dependent bias patterns observed in human data. In contrast, the Bayesian model with freely fitted prior and encoding provides a closer fit to subject-level biases, particularly in regions where deviations from linearity are more pronounced.

In the variability figure(Figure~\ref{fig:jrf-var-bimodal-perSubject})The Bayesian model also provides a superior account of variability compared to BLO. While BLO could capture the overall variability magnitude, the Bayesian model models both the overall magnitude and the shape of variability more accurately, and aligns more closely with the human data.

\begin{figure}[H]
\begin{center}
\includegraphics[width=0.8\linewidth]{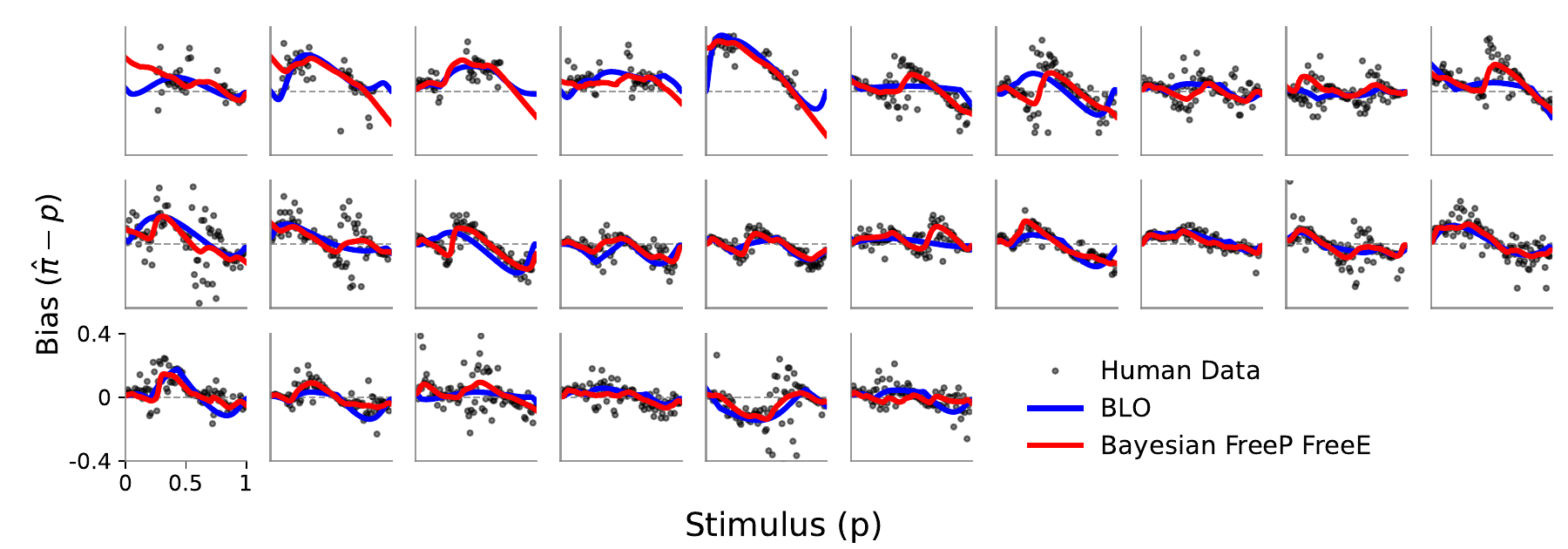}
\end{center}
\caption{\textbf{Adaptation Task:} Per-subject bias of the non-parametric estimates from the data, the BLO model and the Bayesian model with free prior and free encoding on the Adaptation task. Subjects S1 to S26 are from our collected dataset.}
\label{fig:jrf-bias-bimodal-perSubject}
\end{figure}

\begin{figure}[H]
\begin{center}
\includegraphics[width=0.8\linewidth]{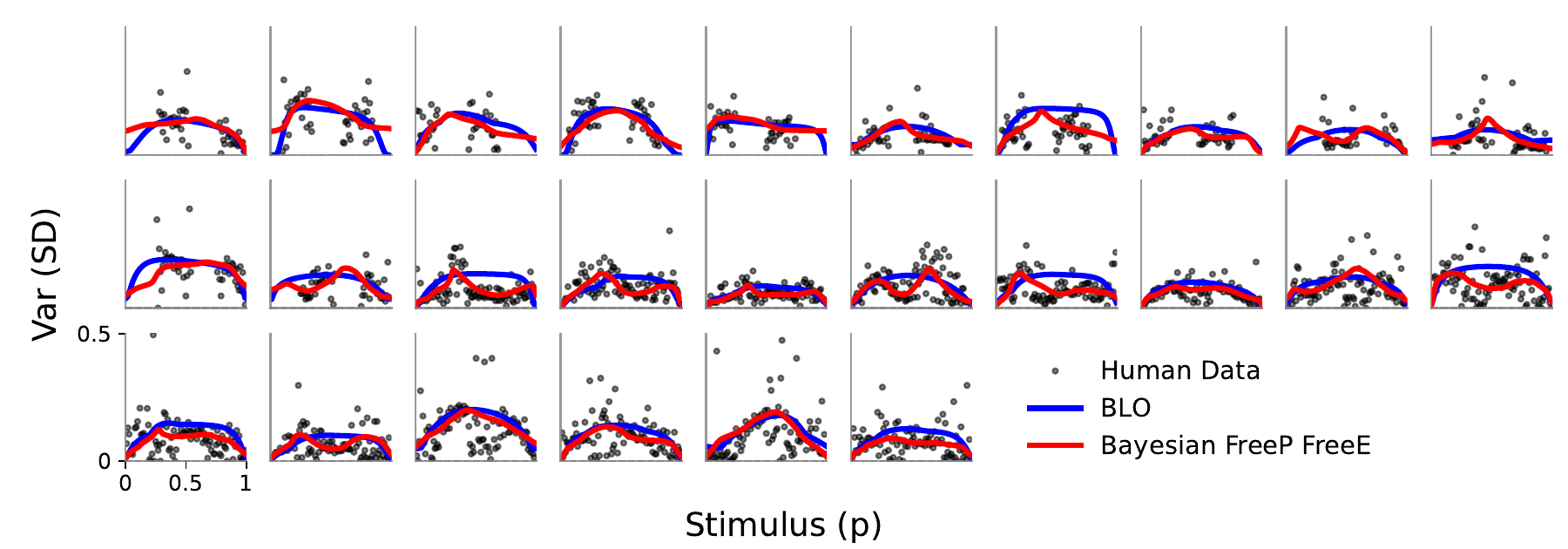}
\end{center}
\caption{\textbf{Adaptation Task:} Per-subject variability of the non-parametric estimates from the data, the BLO model and the Bayesian model with free prior and free encoding on the Adaptation task. Subjects S1 to S26 are from our collected dataset.}
\label{fig:jrf-var-bimodal-perSubject}
\end{figure}

\subsection{Details on BLO Model for JRF Task and Adapatation Data}\label{app:details-blo-jrf}


\subsubsection{Application of BLO Model to JRF Task}

While initially developed for decision under risk (DMR) tasks, the BLO model can be adapted to account for perception in judgment of relative frequency (JRF) tasks. In the JRF context, the core principles of bounded log-odds encoding, truncation, and variance compensation remain, but the model additionally accounts for sensory noise introduced by the observer's sampling of stimuli.

In JRF tasks, it is assumed that people do not process every element in the display. Instead, they sample a subset of items from the whole display. This sampling introduces additional noise in the observer’s estimate. Consequently, the variance of the sample-based estimate, $V(\hat{p})$, is modeled with a finite population correction: If the total number of items is $N$ and the observer samples $n_s$ items, the variance of the sample-based estimate is given by:
\begin{equation}\label{eq:vp}
    V(\hat{p}) = \frac{p(1-p)}{n_s}\cdot \frac{N - n_s}{N - 1}
\end{equation},
where $N$ is the total number of dots and $n_s$ is a free parameter.
We note that sampling of the items is not implemented in the published implementation from \cite{zhang2020}, and we correspondingly do not attempt to model it in our reimplementation of the BLO model.

\subsubsection{Reimplementation of BLO and LLO Models Variants}\label{app:details-blo-jrf-models}
We reimplemented the key model variants proposed by \cite{zhang2020} for JRF. Specifically, we focused on four variants, Bounded Log-Odds(BLO) and Linear Log-Odds (LLO) models, each combined with either a constant perceptual variance (V=Const) or a proportion-dependent variance compensation (V=V(p)). 

Following \cite{zhang2020}, we fit these models to the data per subject by minimizing the negative log-likelihood of the observed responses given the model's predicitions. 
Our reimplementation achieved nearly identical estimated parameter values and model fits to those reported by \cite{zhang2020} for JRF.

For all these reimplemented models, the optimization was performed using a gradient-based approach(Adam or SignSGD) rather than the original Nelder-Mead optimizer(fminsearchbnd) used in their implmentation. We set parameter bounds for our optimization based on \cite{zhang2020}'s settings. the hyperbolic tangent (tanh) function was applied to parameters such as $\Delta^-$ and 
$\Delta^+$ , while the exponential function was used for strictly positive parameters like $\kappa$. Parameters not naturally constrained were optimized directly.

We successfully achieved nearly identical estimated parameter values and model fits to those reported by \cite{zhang2020} for the dot counting task, thereby validating our reimplementation.

\subsection{Details for BLO on DMR Pricing Task}\label{app:details-blo-dmr}

\subsubsection{Application of BLO Model to DMR Task}

The uncertainty or variance associated with this internal encoding is modeled as being proportional to the binomial variance.
\begin{equation}
    V(\hat{p}) \propto p(1 - p)
\end{equation}

Following \cite{zhang2020}, the estimate $\hat{\pi}(p)$ is integrated into Cumulative Prospect Theory  (CITE) to predict choice behavior;  the certainty equivalent (CE) for a two-outcome lottery $(x_1, p; x_2, 1 - p)$ is given by:
\begin{equation}
\label{eq:CE}
    \text{CE} = U^{-1}\left[U(x_1) \cdot \hat{\pi}(p) + U(x_2) \cdot (1 - \hat{\pi}(p))\right] + \varepsilon_{CE}
\end{equation}
Here, $U(\cdot) = x^\alpha$ is the utility function, and $\varepsilon_{CE}$ represents Gaussian noise on the CE scale.

\subsubsection{Reimplementation of BLO and LLO Models Variants}\label{app:details-blo-dmr-models}
We first reimplemented the four main parametric models proposed by \cite{zhang2020} for the DMR task. These included the BLO and LLO models, each combined with either a proportion-dependent variance compensation (V(p)) or a constant variance (Const V). As with the JRF task analysis, these models were fitted to the data for each subject individually. Our implementation utilized gradient-based optimizers (specifically, Adam or SignSGD) for parameter optimization.

We observed that the negative log-likelihood (NLL) values obtained from our reimplemented models were nearly identical to Zhang's reported fitted results for most variants. However, for the BLO + V(p) model, our reimplemented model showed, on average, a 7.45 higher loss (negative log-likelihood) compared to Zhang's original results; though this does not impact quantitative model comparison with the Bayesian model. 

\subsection{Methods for Generating Empirical Bias Decomposition}
\label{app:bias-decomposition-empirical}

The analytical decomposition in Theorem~\ref{thm:bayesian-bias} is valid 
to $\mathcal{O}(\sigma^4)$ and assumes a non-truncated stimulus space. 
For the bias decomposition figures (e.g., Figure~\ref{fig:jrf-bias-decomposition-resources-daicc}, Panel C), 
we instead report an \emph{empirical} decomposition that is exact at 
the fitted noise level and treats the truncated stimulus space $[0,1]$ 
explicitly. The three components map onto the analytical ones (Boundary 
Regression, Likelihood Repulsion, Prior Attraction) but are obtained 
from the fitted Bayesian observer through counterfactual simulations, 
without the small-$\sigma$ expansion. We illustrate the procedure using 
the JRF decomposition (Figure~\ref{fig:jrf-bias-decomposition-resources-daicc}); the same procedure 
applies to all other figures.

\paragraph{Setup.}
For each subject and each Bayesian variant, we use the maximum-likelihood 
parameters $(F, P_{\text{prior}}, \sigma)$ from the cross-validation fit. 
The forward model produces the bias $\hat{p}(p) - p$ on the 200-point 
grid over $[0,1]$.

\paragraph{Counterfactual simulations.}
To isolate each mechanism, we evaluate the fitted Bayesian observer in 
three configurations:
\begin{enumerate}
    \item \textbf{Full model on $[0,1]$.}
          The total bias $b_{\text{tot}}(p)$ from the fitted model on 
          the original grid.
    \item \textbf{Full model on expanded grid.} The grid is widened to 
          $[-1,2]$ with $F$ and $P_{\text{prior}}$ extrapolated by their 
          endpoint values outside $[0,1]$. The resulting bias 
          $b_{\text{int}}(p)$ removes boundary effects: the encoding 
          can now spread beyond the original support.
    \item \textbf{Uniform-prior model on expanded grid.} Same as (2) but 
          with $P_{\text{prior}}$ replaced by a uniform distribution 
          on $[-1,2]$. The resulting bias $b_{\text{rep}}(p)$ contains 
          only the contribution from non-uniform encoding.
\end{enumerate}

\paragraph{Components.}
The three components are
\begin{align}
    \text{Likelihood Repulsion}(p) &= b_{\text{rep}}(p), \\
    \text{Prior Attraction}(p) &= b_{\text{int}}(p) - b_{\text{rep}}(p), \\
    \text{Boundary Regression}(p) &= b_{\text{tot}}(p) - b_{\text{int}}(p),
\end{align}
so that 
$b_{\text{tot}}(p) = \text{Likelihood Repulsion} + \text{Prior Attraction} + \text{Boundary Regression}$ 
by construction.

\paragraph{Mapping to the analytical decomposition.}
Replacing the prior with a uniform distribution removes the 
$\frac{1}{\mathcal{J}(p)}\frac{\mathrm{d}}{\mathrm{d}p}\log P_{\text{prior}}(p)$ 
term in Theorem~\ref{thm:bayesian-bias}, and expanding the grid removes 
the $A_{1,\sigma}(p)\,\mathrm{sign}(0.5-p)/\sqrt{\mathcal{J}(p)}$ term.

The same procedure is applied to all other tasks and model variants. 
Detailed bias decomposition figures across tasks and variants are 
provided in Appendix~\ref{app:cross-task-bias-decomposition}.

\subsection{Cross-Task Analysis of Bias Decomposition}\label{app:cross-task-bias-decomposition}

As shown in Section~\ref{sec:framework-decomposition}, we decomposed the total estimation bias ($\mathbb{E}[\hat{p}|p] - p$) into three components: Prior Attraction, Likelihood Repulsion, and Boundary Regression. 

We computed these components by the methods describe in Appendix Section~\ref{app:bias-decomposition-empirical}.

Figure~\ref{fig:dmr-bias-decomposition} illustrates the bias decomposition for the DMR Pricing task. Similar to JRF, the UniformE model exhibits zero likelihood repulsion. It attempts to compensate for this absence by inducing a heavily distorted Prior Attraction and relying on Boundary Regression. While this compensation allows for a moderate quantitative fit in pricing (a task where we compare the expected value instead of probability), it remains mechanistically distinct from the repulsion in U-shaped models, ann we seek for more evidence in Figure~\ref{fig:cpc15-bias-decomposition}, the bias decomposition for the DMR Choice task. As in other tasks, the Uniform model shows zero likelihood repulsion. However, unlike in the Pricing task, the Prior Attraction here remains relatively flat and fails to compensate for the missing repulsion. This flatness likely occurs because, in a binary choice task, increasing prior precision (to reduce variance) without sufficient sensory discriminability at the boundaries would only compress values and reduce the model's ability to predict preference reversals. The model thus defaults to a smoother prior to avoid introducing excessive bias error.

Figure~\ref{fig:bimodal-bias-decomposition} shows how the Bayesian model captures the shift in behavior under bimodal stimulus statistics. In Prior Attraction, the GaussianP model produces a simple, smooth attraction curve. In contrast, both the FreeP and BimodalP models exhibit multi-peaked attraction patterns. These fluctuations correspond to attraction forces directed toward the two modes of the bimodal stimulus distribution.

\begin{figure}[H]
\begin{center}
\includegraphics[width=0.5\linewidth]{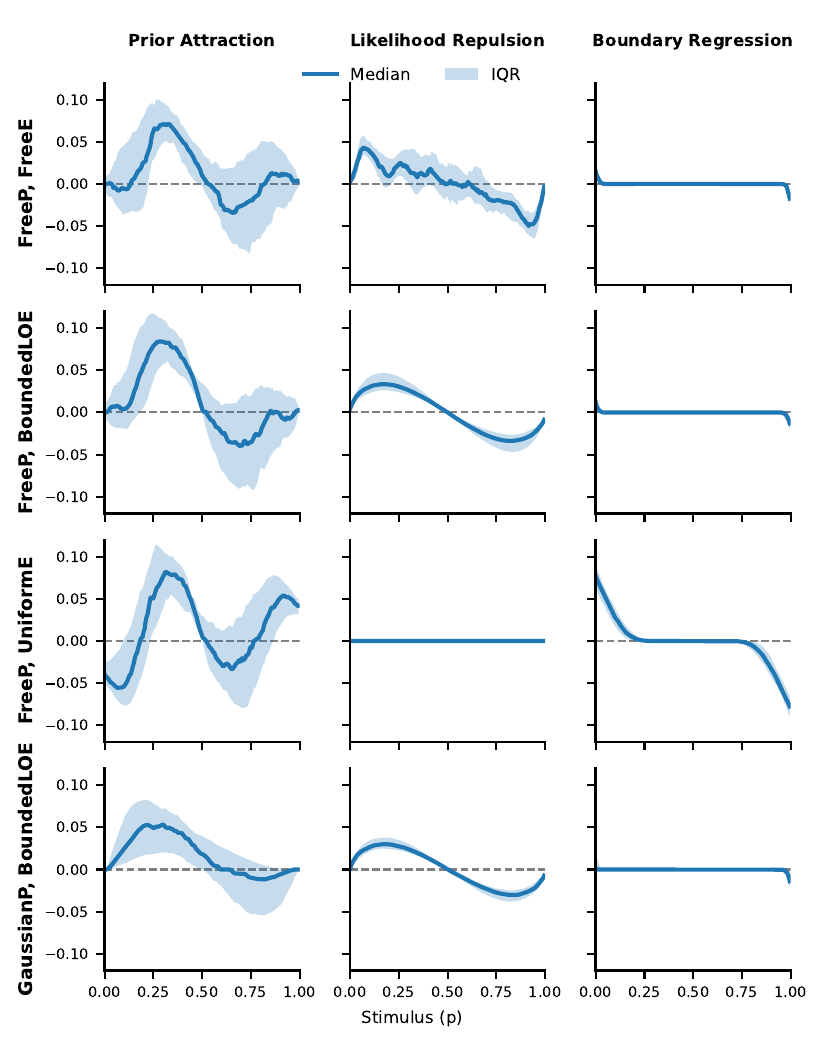} 
\end{center}
\caption{\textbf{JRF Task:} The comparison between bias decomposition of four Bayesian model variants. The solid line shows the group median, and the shaded area indicates the interquartile range (IQR). Plotting details are provided in Appendix~\ref{app:plotting}.
}
\label{fig:jrf-bias-decomposition}
\end{figure}

\begin{figure}[H]
\begin{center}
\includegraphics[width=0.5\linewidth]{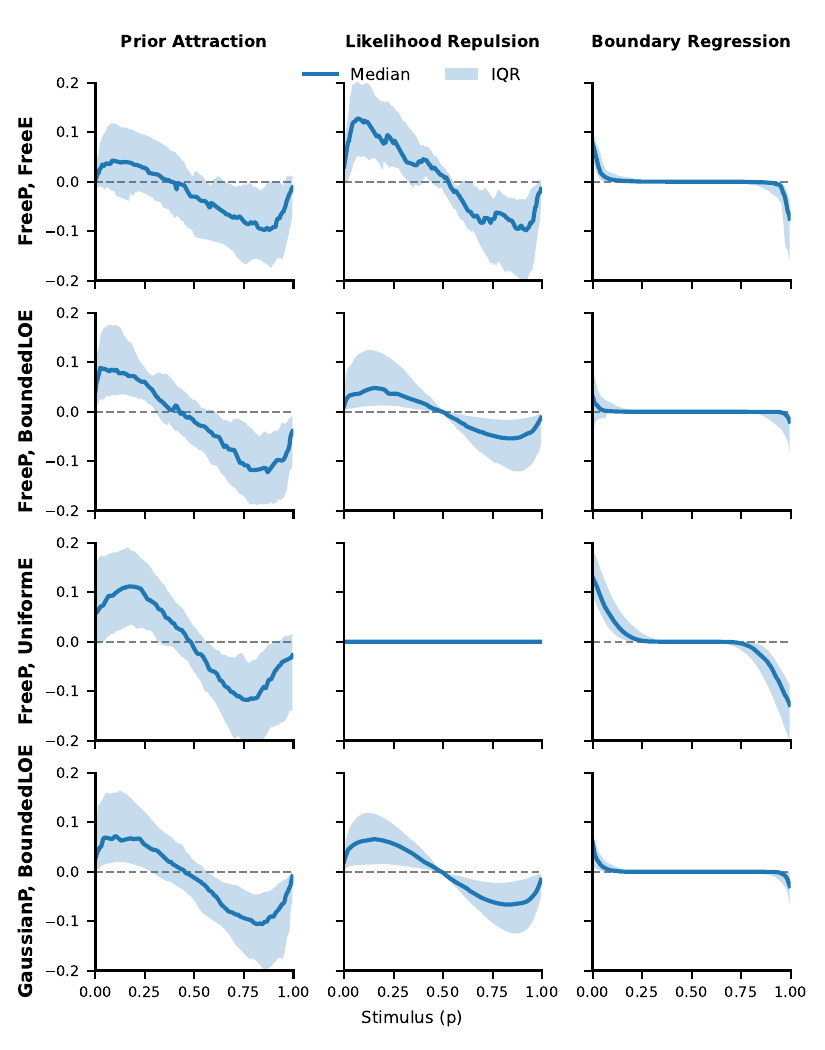}
\end{center}
\caption{\textbf{DMR Pricing Task:} Bias decomposition of four Bayesian model variants. The solid line shows the group median, and the shaded area indicates the interquartile range (IQR). Plotting details are provided in Appendix~\ref{app:plotting}.
}
\label{fig:dmr-bias-decomposition}
\end{figure}

\begin{figure}[H]
\begin{center}
\includegraphics[width=1.0\linewidth]{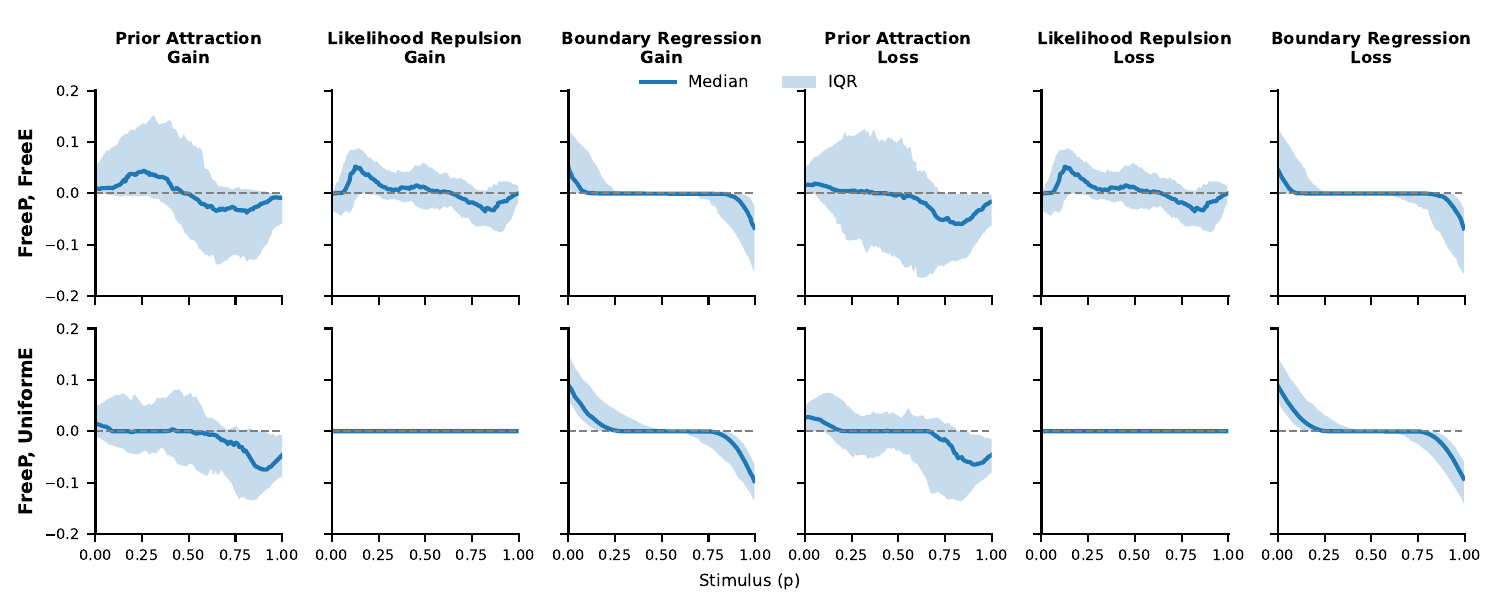}
\end{center}
\caption{\textbf{DMR Choice Task:} Bias decomposition of two Bayesian model variants. The solid line shows the group median, and the shaded area indicates the interquartile range (IQR). Plotting details are provided in Appendix~\ref{app:plotting}.
}
\label{fig:cpc15-bias-decomposition}
\end{figure}

\begin{figure}[H]
\begin{center}
\includegraphics[width=0.6\linewidth]{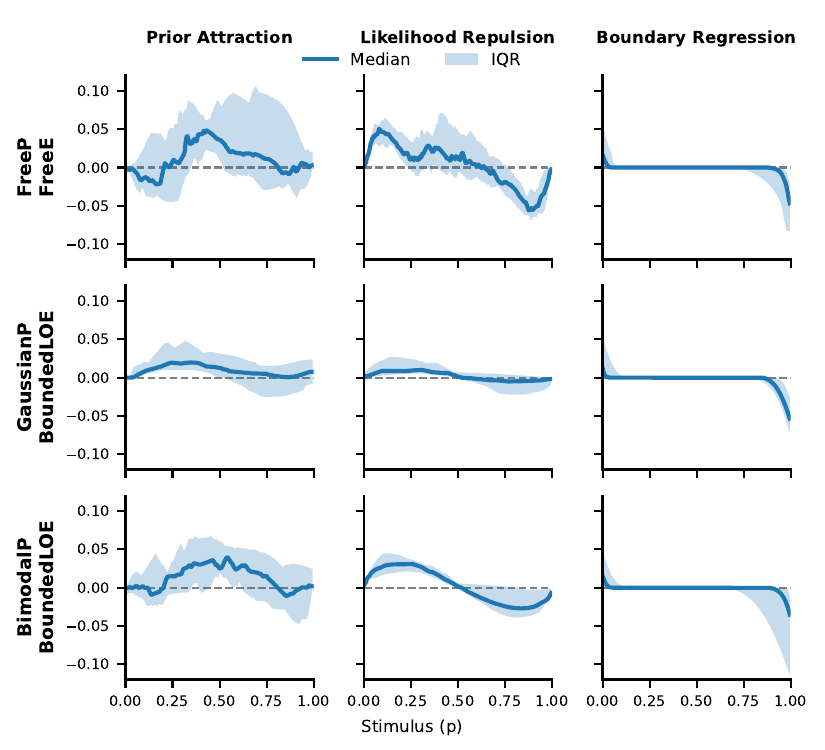}
\end{center}
\caption{\textbf{Adaptation Task:} Bias decomposition of three Bayesian model variants in Figure~\ref{fig:bimodal-bias-dcv}D. The solid line shows the group median, and the shaded area indicates the interquartile range (IQR). Plotting details are provided in Appendix~\ref{app:plotting}.}
\label{fig:bimodal-bias-decomposition}
\end{figure}

\subsection{Cross-Task Analysis of Resources and Bias}
The freely fitted priors and corresponding resources functions in three tasks are shown in Figure~\ref{fig:fi-across-3tasks}).  Priors differ noticeably between tasks, but resources consistently exhibits the U-shape, with highest sensitivity near the extremes.  Interestingly, in the DMR Choice tasks, the shape of the prior appears more closely aligned with the resources than in the other tasks. This prior–resources match is reminiscent of the account proposed by \cite{frydman2023source}, but the fact that it is not observed across all tasks suggests that such alignment is limited in generality.

\begin{figure}[H]
\begin{center}
\includegraphics[width=0.8\linewidth]{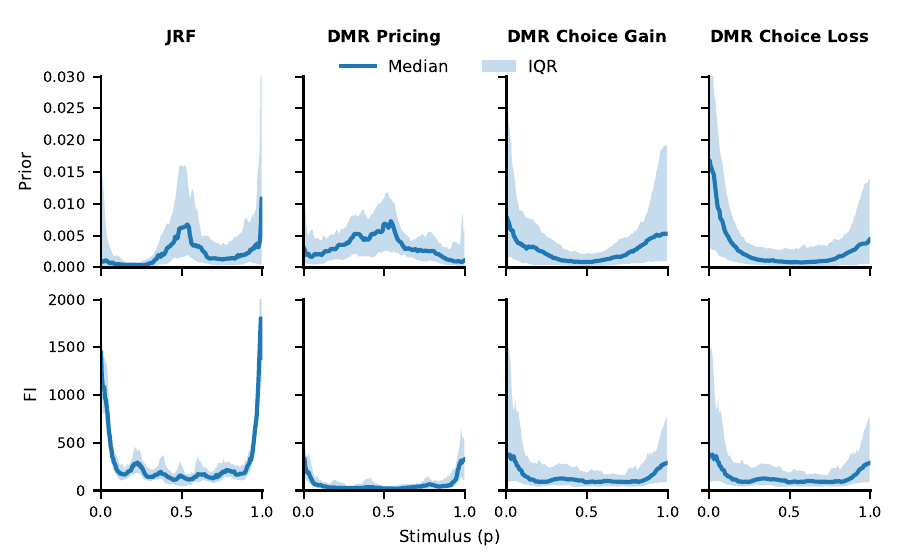}
\end{center}
\caption{\textbf{Across Tasks:}  Freely fitted priors (top) and resources $\sqrt{J}$ (bottom) across tasks.The solid line shows the group median, and the shaded area indicates the interquartile range (IQR). Plotting details are provided in Appendix~\ref{app:plotting}.  This figure is mentioned in main text section~\ref{sec:jrf}.}
\label{fig:fi-across-3tasks}
\end{figure}

\subsection{Details for Plotting Group-level Prior and Resources}\label{app:plotting}
For each group-level curve presented in the paper, we plot the median across subjects as the solid line and the interquartile range (IQR; 25th–75th percentile) as the shaded area. These quantities are computed pointwise on the stimulus grid, providing a summary of central tendency and between-subject variability.

\subsection{Methods for Calculating Bias and Variance}\label{app:methods-bias-var}

This section describes the procedures used to calculate bias and variance for the non-parametric data, the Bayesian model, and the BLO model for the bias and variance figures (for example, Figure~\ref{fig:jrf-bias-perSubject} and Figure~\ref{fig:jrf-var-perSubject}).

\subsubsection{Non-Parametric Estimation}\label{app:methods-bias-var-np}
The non-parametric estimate of relative frequency is defined as
\[
    \hat{\pi}_{NP}(p) = \frac{1}{m} \sum_{t=1}^{m} \hat{\pi}_{t}(p),
\]
where $\hat{\pi}_{t}(p)$ denotes the subject’s estimate on trial $t$, $t=1,2,\ldots,m$.  

\paragraph{Bias.}  
Bias is the difference between the mean estimate and the true probability:
\[
\text{Bias}_{NP}(p) = \hat{\pi}_{NP}(p) - p.
\]

\paragraph{Variance.}  
The variance is computed as the sample variance of $\hat{\pi}_{t}(p)$ across trials of the same $p$.

\subsubsection{BLO Model}\label{app:methods-bias-var-blo}
The BLO model produces deterministic predictions $\hat{\pi}_{BLO}(p,N)$ that depend on numerosity $N$. To obtain a single prediction per probability $p$, we average across the five numerosity conditions ($N = \{200,300,400,500,600\}$):
\[
    \mathbb{E}[\hat{p}]_\text{BLO} = \frac{1}{5} \sum_{N \in \{200,300,400,500,600\}} \hat{\pi}_\text{BLO}(p,N).
\]

\paragraph{Bias.}  
Bias is then defined as
\[
\text{Bias}_\text{BLO}(p) = \mathbb{E}[\hat{p}]_\text{BLO} - p.
\]
\paragraph{Variance.}  
Variance in BLO arises from Gaussian noise in log-odds space, $\epsilon_\lambda \sim \mathcal{N}(0,\sigma_{\lambda}^2)$.  
Mapping back into probability space requires a Jacobian transformation:
\[
    f(p) = \frac{1}{p(1-p)} \cdot \frac{1}{\sqrt{2\pi\sigma_\lambda^2}}
    \exp\!\left(-\frac{(\lambda(p) - \hat{\Lambda}_\omega)^2}{2\sigma_\lambda^2}\right).
\]
The probability distribution is normalized, and the variance is computed as
\[
\mathrm{Var}_\text{BLO}(p) = \mathbb{E}[P^2] - \big(\mathbb{E}[P]\big)^2,
\]
where expectations are taken with respect to $f(p)$.

\subsubsection{Bayesian Model}\label{app:methods-bias-var-bayes}

For the Bayesian model, the estimate given a measurement $m$ is $\hat{p}(m)$.  
The mean estimate for stimulus $p$ is
\[
\hat{p}(p) = \sum_{m} \hat{p}(m)\, P(m \mid p).
\]

\paragraph{Bias.}  
Bias is defined as
\[
\text{Bias}_\text{Bayesian}(p) = \hat{p}(p) - p.
\]

\paragraph{Variance.}  
Variance has two components:
\[
\mathrm{Var}_\text{Bayesian}(\hat{p}(m)\mid p) 
= \underbrace{\sum_m P(m \mid p)\,\big(\hat{p}(m) - \hat{p}(p)\big)^2}_{\text{sensory variance}}
+ \underbrace{\sigma^2_{\text{motor}}}_{\text{motor variance}}.
\]

\end{document}